\newtheorem{definition}{Definition}
\newcommand{\ket}[1]{\left | #1 \right\rangle}
\newcommand{\bra}[1]{\left \langle #1 \right |}
\newcommand{\half}{\mbox{$\textstyle \frac{1}{2}$}}
\newcommand{\smallfrac}[2][1]{\mbox{$\textstyle \frac{#1}{#2}$}}
\newcommand{\Tr}{\text{Tr}}
\newcommand{\braket}[2]{\left\langle #1|#2\right\rangle}
\newcommand{\proj}[1]{\ket{#1}\bra{#1}}
\newcommand{\identity}{\mathbb{1}}
\renewcommand{\epsilon}{\varepsilon}
\begin{document}

\setlength{\textheight}{8.0truein}    

\runninghead{Optimal Universal Quantum Cloning: Asymmetries and Fidelity Measures}
            {Alastair Kay}

\normalsize\textlineskip
\thispagestyle{empty}
\setcounter{page}{1}

\copyrightheading{0}{0}{2003}{000--000}

\vspace*{0.88truein}

\alphfootnote

\fpage{1}

\centerline{\bf
Optimal Universal Quantum Cloning: Asymmetries and Fidelity Measures}
\vspace*{0.035truein}
\centerline{\bf FOR QUANTUM INFORMATION AND COMPUTATION}
\vspace*{0.37truein}
\centerline{\footnotesize
ALASTAIR KAY}
\vspace*{0.015truein}
\centerline{\footnotesize\it Department of Mathematics, Royal Holloway University of London,}
\baselineskip=10pt
\centerline{\footnotesize\it Egham, Surrey, TW20 0EX, United Kingdom}
\vspace*{10pt}
\vspace*{0.225truein}
\publisher{(received date)}{(revised date)}

\vspace*{0.21truein}

\abstracts{
We study the problem of universal quantum cloning -- taking several identical copies of a pure but unknown quantum state and producing further copies. While it is well known that it is impossible to perfectly reproduce the state, how well the copies can be cloned can be quantified using the fidelity. We examine how individual fidelities can be traded against each other, and how different fidelity measures can be incorporated. The broadly applicable formalism into which we transform the cloning problem is described as a series of quadratic constraints which are amenable to mathematical and computational scrutiny. As such, we reproduce all known results on optimal universal cloning, and push the recent results on asymmetric cloning much further, giving new trade-off relations between fidelities for broad classes of optimal cloning machines. We also provide substantial evidence that motivates why other parameter ranges (number of input copies) have not, and will not, yield to similar analysis.
}{}{}

\vspace*{10pt}

\keywords{The contents of the keywords}
\vspace*{3pt}
\communicate{to be filled by the Editorial}

\vspace*{1pt}\textlineskip    

\section{Introduction}

Quantum cloning is the quintessential no-go theorem of quantum mechanics -- possible in the classical world, but impossible to implement perfectly in the quantum world, and reflecting such fundamental properties of the quantum world that can be used as a postulate in information theoretic explorations. Ever since the original proofs that an unknown quantum state cannot be perfectly cloned \cite{wootters1982}, quantifying how well states can be cloned has proved a challenge. When all the clones are required to be the same quality, the achievable qualities are well understood \cite{buek1996,werner1998,gisin1997,bruss1998}. This covers the case not only of universal cloning, in which the input state is equally likely to be any pure state, but also state-dependent cloning, of qudits \cite{scarani2005}. However, asymmetric cloning, when the clones are permitted to have different qualities, has proven far more challenging. Until recently, studies were limited to very specific cases of, for example $1\rightarrow 3$ universal cloning of qubits \cite{iblisdir2005,iblisdir2006}, in which one input copy is converted to 3 output copies of differing qualities. However, a recent revelation has permitted calculation of the trade-offs in $1\rightarrow N$ universal cloning of qudits, for arbitrary $N$ and local Hilbert space dimension $d$ \cite{kay2009-a,kay2013}. In addition, they revealed a more fundamental insight; there is a direct connection between the ability to share correlations between different spins and the quality of clones that can be produced on those spins. These monogamy-type relationships provide widely applicable bounds for the study of strongly correlated quantum systems, elevating interest in asymmetric cloning beyond that of mathematical curiosity to the foundations of a powerful new calculational tool with properties different to those encapsulated by monogamy of the tangle \cite{coffman2000,osborne2006} or of Bell tests \cite{ramanathan2011}, with the added benefit that, by knowing the {\em optimal} cloning results, these bounds are incredibly stringent.

In the past, \cite{kay2009-a} has asserted a result for the cloning trade-off relations of $1\rightarrow N$ cloning of qudits (spins with a $d$-dimensional Hilbert space), as measured by the 1-copy fidelity, but relied on an unproven assumption. The result was proven more rigorously  in \cite{kay2013}, and extended to the case of $N-1\rightarrow N$ cloning for qubits. This paper details the full proof of both $1\rightarrow N$ and $N-1\rightarrow N$ cloning of qudits, using not only the 1-copy fidelity, but the $L$-copy fidelity \cite{wang2011}. Indeed, for $N-1\rightarrow N$ cloning, any arbitrary combinations of different fidelities can be used. We also study the case of $M\rightarrow N$ cloning when $M\neq 1,N-1$, providing strong reasons why these cases do not give similar answers and, instead, are anticipated to require an inefficient non-convex optimisation to solve in generality.

\subsection{Motivation}

One scenario in which asymmetric quantum cloning is important, as are the different fidelities that we explore in this paper, is eavesdropping of quantum key distribution. Of course, a true security proof has to be more general than a single eavesdropping strategy, but this is a useful illustration. Consider Alice and Bob trying to share a secret key by a protocol such as BB84 \cite{bennett1984} or E91 \cite{ekert1991}. We'll consider a device independent variant of E91 for the sake of concreteness \cite{acin2007}. We can imagine an eavesdropper, Eve, intercepting each qubit that travels to Bob, cloning it, and allowing one copy to make its way to Bob. Meanwhile, Eve holds the second copy and waits for the announcement of measurement bases before making her measurement. There are three different fidelities in this scenario; $F_B$ (the quality of Bob's clone), $F_E$ (the quality of Eve's clone) and $F_{B,E}$ (the joint quality of the clones). It turns out that $F_B$ is the probability that Bob gets the ``correct'' measurement result (the one he should have got if he had received the qubit he was supposed to), while $F_{B,E}$ is the probability that both Bob and Eve get the correct result.

Now, imagine that Bob is performing some tests, in coordination with Alice, on his measurement results, and detects his success rate, $F$. For a given answer, how much might Eve know, i.e.\ how much privacy amplification will Alice and Bob need to perform later? If Eve were to perform asymmetric cloning, she could set $F_B=F$ (under the device independence assumption, Eve has the power to replace Bob's measurement apparatus with perfect measuring devices). Her success rate is $F_E$, which she would optimise; this is the study of optimal asymmetric cloning. However, more relevant, is to optimise the probability of her getting the ``correct'' measurement result given that Bob did, which is $F_{B,E}/F_B$. Thus we need to maximise $F_{B,E}$ for fixed $F_B$. This is optimal asymmetric cloning with mixed fidelities, which is covered by the formalism of this paper. In fact, it turns out that in this case, $F_{B,E}$ and $F_E$ are directly related ($F_{B,E}=F_E+F_B-1$, Eq.\ (\ref{eqn:N-1})), but that is not obvious {\em a priori}. In Figure \ref{fig:QKD} we plot the outcome of the relevant optimal cloning study, assuming cloning has been performed universally, meaning Alice and Bob would need to be using arbitrary measurement bases. Alternatively, a finite set of bases is possible provided they comprise a 3-design \cite{ambainis2007} (meaning that taking averages over three copies of a finite set of states yields identical results to the averaging over all possible states). The standard basis choices of E91 and BB84 are only 1-designs. For these cases, there are higher fidelity cloning options because we only have to concentrate on a subset of states to clone, specifically equatorial cloning \cite{kay2013}.

\begin{figure}
\centerline{\includegraphics[width=0.45\textwidth]{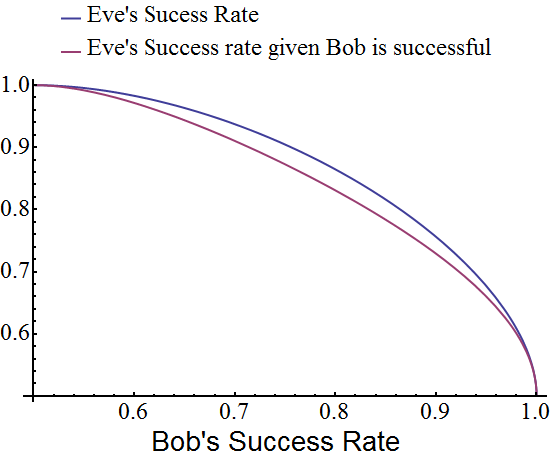}}
\fcaption{\label{fig:QKD} When eavesdropping a QKD scheme by cloning, one can trade the risk of detection with the amount of information gained about the key.}
\end{figure}

\subsection{Orientation}

This is a mathematically detailed paper that makes heavy use of specific notation which is introduced in Section \ref{sec:notation}. This could usefully be used as a reference while reading the rest of the document. Having introduced this notation, we state the main results of the paper in \ref{sec:summary} and explore some consequences of these in Sec.\ \ref{sec:applications} -- Sec.\ \ref{sec:N-1} completely solves the problem of when $N-1$ input copies can be transformed into $N$ copies of varying qualities, while Sec.\ \ref{sec:1} studies the problem of transforming a single input state into $N$ outputs.

The proof of the main results appears in Sec.\ \ref{sec:proof}. It starts with Sec.\  \ref{sec:prelim}, some preliminary properties of the symmetric subspace, and how these properties are altered by the application of the partial transpose operation. These are likely to be familiar to many readers. In Sec.\ \ref{sec:isomorphism}, we review the Choi-Jamio\l kowski isomorphism \cite{jamiokowski1972}, which is the main technical tool that allows us to give an upper bound on the achievable cloning fidelity by calculating the maximum eigenvalue of a certain matrix. It is then in Section \ref{sec:clone} that we apply this isomorphism to the cloning problem. In particular, we isolate a small subspace of the relevant matrix and use a variant of the Lieb-Mattis theorem (Sec.\ \ref{sec:liebmattis}) to show that the maximum eigenvalue of the whole matrix is contained within this subspace \cite{lieb1962} and furthermore confirm that the upper bound specified by this formalism can indeed be attained. The resultant problem to be solved consists of a set of quadratic constraints that need to be satisfied. Our formalism also allows us to make simple statements about economical cloning -- the question of whether $M\rightarrow N$ cloning only requires the $N$ output spins, or whether ancilla spins (`anti-clones') are required, which depends on the values of $M$ and $N$.

The study of different numbers of input copies appears to be far more challenging than the cases of $M=1$ and $N-1$. Sec.\ \ref{sec:otherM} proves, aside from these special cases, the problem is inherently non-convex, implying that these cases are far more challenging than the previously solved special cases. Indeed, a solution is not expected (unless P=NP).

For completeness, Sec.\ \ref{sec:circuits}, briefly describes how the optimal asymmetric universal quantum cloner can be realized before we conclude in Sec.\ \ref{sec:conclusion}. Appendix A provides some potentially valuable side results -- properties of the key matrices that we do not utilize directly, but could be useful in the future, for improving the implementation protocols for instance.

\subsection{Notation} \label{sec:notation}

We will often need to specify operators and states of qudits as they apply to particular sets or subsets of spins. When it is unimportant {\em which} spins are being acted upon (perhaps we discuss the generic properties of an $N$-spin operator), we use a sub/superscript $N$. Otherwise, we use the sub/superscript to specify the qubits being acted upon. For cloning, as studied in this paper, there will be two distinct sets of spins; $M$ input spins which we collectively denote by IN, and $N$ output spins, which are collectively denoted by OUT. In order to refer to subsets of the output spins (and this is always the output spins, not the input), we use bit strings $x\in\{0,1\}^N$ (we reserve letters $x$, $y$ and $z$ for such bit strings). These have a Hamming weight (number of 1 entries) $w_x=x\cdot x$. The bit string $x$ divides the output spins into the two sets specified by sites $\{n: x_n=1\}$ and $\{n: \bar x_n=1\}$ where $\bar x$ is the complement of $x$; $P^x$ means that the operator $P$ acts on the output sites $n$ for which $x_n=1$. We can also combine sets of sites: $x\cup y$ conveys the set of sites for which either $x_n=1$ or $y_n=1$, while $x\cap y$ restricts to those for which $x_ny_n=1$.

\begin{definition}
When cloning $M$ identical copies of an input state $\ket{\phi}$, the output of the cloner is an $N$ qubit state $\rho_{\phi}$. If we interpret $\proj{\phi}^0=\identity$, then the cloner has fidelities
$$
F_y=\Tr\left(\rho_{\phi}\bigotimes_{n=1}^N\proj{\phi}^{y_n}\right)
$$
for $y\in\{0,1\}^N$. This measures the fidelity of a subset of sites defined by $y_n=1$. $F_y=1$ means that all sites where $y_n=1$ are in the state $\ket{\phi}$, and have been perfectly cloned.
\end{definition}

This is the cloning fidelity for a specific input state $\ket{\phi}$. Universal cloning, as studied in this paper, involves averaging this fidelity over all possible input states. The solutions that we derive will have the same fidelity for all possible input states.

In the case where the fidelities we are interested in correspond to all the bit strings of weight $L$, we say that we are examining the $(M,L,N)$ cloning machines.

\begin{definition}
Define the matrices
$$
G_y^{(M)}=\sum_{\substack{z,x\in\{0,1\}^N\\ w_x=w_z=M}}\frac{\ket{x}\bra{z}}{\binom{M+d-1-x\cdot z+w_{\bar x\cap\bar z\cap y}}{d-1}}.
$$
A specific cloning problem has fixed $M$, so the superscript can safely be dropped. 
 $d$ is a positive integer. These may be generalized to
$$
G_y^{(M,L)}=\sum_{\substack{x\in\{0,1\}^N\\ w_x=L\\ x\cdot y=\min(L,w_y)}}G^{(M)}_x.
$$
\end{definition}
Included in this definition are the special cases $G_{\underline 0}^{(M)}$ and $G_{\underline 0}^{(M,L)}$, which have a common structure that enables simple solution for eigenvalues and eigenvectors (see Appendix A).

\subsection{Summary of Results}\label{sec:summary}

The main technical results that we prove in this paper are as follows.

\begin{theorem}\label{thm:main}
An asymmetric quantum cloning machine of $M\rightarrow N$ qudits, of local Hilbert space dimension $d$, producing output fidelities $F_y$ for $y\in\Lambda$ is optimal if and only if there exists $\underline{\beta}\in\mathbb{R}^{\binom{N}{M}}$ such that $\underline{\beta}^T\cdot G_{\underline{0}}^{(M)}\cdot\underline{\beta}=1$ and $F_y=\underline{\beta}^T\cdot G_{y}^{(M)}\cdot\underline{\beta}$ for all $y\in\Lambda$.
\end{theorem}

\begin{corollary}\label{cor:main}
Asymmetric cloning is possible provided
$$
\min_{\stackrel{\underline{\beta}\in\mathbb{R}^{\binom{N}{M}}}{F_y\leq\underline{\beta}^T\cdot G_{y}^{(M)}\cdot\underline{\beta}\forall y\in\Lambda}}\underline{\beta}^T\cdot G_{\underline{0}}^{(M)}\cdot\underline{\beta}\leq 1.
$$
\end{corollary}
This is essentially a matter of definition; we assert that if the task is to clone with a certain set of fidelities, we'd be happy if we actually achieved a higher fidelity. Aside from that, Cor.\ \ref{cor:main} is just a restatement of Thm.\ \ref{thm:main}, but in a manner more suited to standard computational techniques for the resolution of the question.

\begin{theorem}\label{thm:N-1}
If $M=N-1$, then the quadratic constraints of Cor.\ \ref{cor:main} reduce to linear ones for any specified set $F_y$ where $y\in\Lambda$.
\end{theorem}
This is proven in Sec.\ \ref{sec:N-1}.

\begin{theorem}\label{thm:constraints}
If $\Lambda=\{y\in\{0,1\}^N:w_y=L\}$ then
\begin{itemize}
\item For $(1,1,N)$ and $(1,N-1,N)$ cloning, all constraints in Cor.\ \ref{cor:main} become linear.
\item Otherwise, for $(1,L,N)$ cloning, all but $\half N(N-3)$ quadratic constraints become linear.
\item If $M=2,3,\ldots N-2$, and $2M<L<N-2M$, then the $\binom{N}{L}$ quadratic constraints of Cor.\ \ref{cor:main} can be reduced to $\binom{N}{2M}$ quadratic constraints, and the rest linear.
\end{itemize}
\end{theorem}
If all the conditions are linear, the system can be efficiently solved. While some quadratic constraints remain, efficient solution is unlikey, but the reduced problem wherein the quadratic constraints are removed can be solved efficiently and can witness the impossibility of cloning. The first item is proven in Sec.\ \ref{sec:1}, while the rest are the subject of section \ref{sec:linear}.

\section{Applications of Theorem \ref{thm:main}}\label{sec:applications}

\subsection{Symmetric Cloning}

Consider the case of symmetric cloning, in which we demand that all fidelities be measured on the same number of spins (say $L$) and that all fidelities be the same. If we set $\underline{\beta}$ to be uniform then all the $\underline{\beta}^T\cdot G_{y}^{(M)}\cdot\underline{\beta}$ are equal (the $G_y$ are identical under permutations). Hence
\begin{eqnarray*}
F_y&=&\underline{\beta}^T\cdot G_{y}^{(M)}\cdot\underline{\beta}	\\
&=&\frac{1}{\binom{N}{L}}\sum_{y\in\Lambda}\underline{\beta}^T\cdot G_{y}^{(M)}\cdot\underline{\beta}	\\
&=&\frac{1}{\binom{N}{L}}\underline{\beta}^T\cdot G_{\underline 0}^{(M,L)}\cdot\underline{\beta}.
\end{eqnarray*}
In Appendix A, we show that the maximum eigenvalue of $G_{\underline 0}^{(M,L)}$ is given by
$$
\frac{1}{\binom{N}{L}}\sum_{i=0}^M\binom{M}{i}\binom{N-M}{M-i}\sum_{q=0}^{N+i-2M}\frac{\binom{N+i-2M}{q}\binom{2M-i}{L-q}}{\binom{M+d-1-i+q}{d-1}},
$$
and that our choice of $\underline{\beta}$ coincides with the maximum eigenvector. Taking into account the normalization of the state, $\underline{\beta}^T\cdot G_{\underline{0}}^{(M)}\cdot\underline{\beta}$ yields
\begin{equation}
F=\frac{1}{\binom{N}{L}}\frac{\binom{M+d-1}{M}}{\binom{N+d-1}{M}}\sum_{i=0}^M{\textstyle\binom{M}{i}\binom{N-M}{M-i}}\!\!\sum_{q=0}^{N+i-2M}\frac{\binom{N+i-2M}{q}\binom{2M-i}{L-q}}{\binom{M+d-1-i+q}{d-1}},	\label{eqn:sym1}
\end{equation}
which simplifies to
\begin{equation}
F=\frac{1}{\binom{N}{L}\binom{N+d-1}{N-M}}\sum_{i=0}^M\sum_{q=0}^N\textstyle{\binom{M}{i}\binom{q-M}{i}\binom{N-M+d-1}{N-q}\binom{M+i}{q-L}}.	\label{eqn:sym2}
\end{equation}
Note that we define $\binom{a}{b}=0$ if $b<0$ or $b>a$. This means that, depending on the values of $L$ and $M$, many terms may be eliminated from the sum. In the special case of $L=1$, the sum over $q$ in Eq.\ (\ref{eqn:sym1}) is restricted between 0 and 1, giving
$$
F=\frac{M}{N}+\frac{(N-M)(M+1)}{N(M+d)},
$$
which coincides with the standard result \cite{werner1998,scarani2005}. In the case of $L=N$, the sum in Eq.\ (\ref{eqn:sym2}) is restricted to $q=N$, giving the known result for the global fidelity,
$$
F=\frac{\binom{M+d-1}{M}}{\binom{N+d-1}{N}}.
$$
While Wang et al. \cite{wang2011} state an equivalence to Eq.\ (\ref{eqn:sym2}) of
$$
F=\sum_{q=0}^N\frac{\binom{q}{M}\binom{q}{L}\binom{N-q+d-2}{d-2}}{\binom{N}{L}\binom{N+d-1}{N-M}},
$$
we have been unable to prove that equivalence beyond special cases and numerical tests.

\subsubsection{Simplified Fidelity Tests}

The symmetric cloner already provides a simple test for whether certain cloning tasks are achievable or not. Given that, for $(M,L,N)$-cloning, we are asking whether there exists a suitably normalised vector such that
$$
F_y\leq \beta^TG_y^{(M)}\beta\qquad\forall y:w_y=L,
$$
then by summing these, we have
$$
\sum_{y: w_y=L}F_y\leq \beta^TG_{\underline 0}^{(M,L)}\beta\leq \binom{N}{M}F_{\text{sym}}.
$$
So, if this inequality is violated, cloning must be impossible regardless of the asymmetry. Obviously, this is tight at the point of perfect symmetry, and will be a good approximation close to that point. Also, simple considerations in the case study of Appendix \ref{sec:casestudy} suggest that it could often be the case for non-trivial $M$ that the bound is tight over much broader ranges.
Evidently, satisfying the constraint can never be sufficient proof that cloning is possible as it disguises all the subtleties of the desired asymmetries. As a clear example, consider $3\rightarrow 4$ cloning with the $L=2$ fidelity. It is possible to achieve 3 fidelities all being 1 (e.g.\ $F_{1100},F_{0110},F_{1010}$) because this just corresponds to perfect teleportation from the input spins to the first 3 output spins. Whereas, demanding $F_{1100}=F_{0011}=F_{1010}=1$ will clearly be impossible and yet it could still have the same total fidelity, see Fig.\ \ref{fig:3to4}.

\begin{figure}
\centerline{\includegraphics[width=0.45\textwidth]{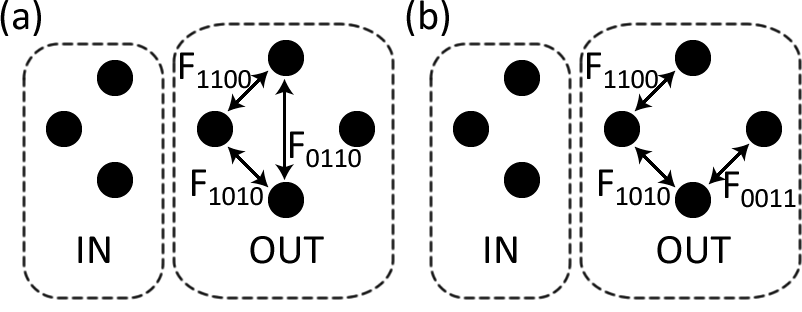}}
\fcaption{\label{fig:3to4}When tasked with converting 3 identical input states into 4 copies, it is possible to get 3 copies perfectly, and hence certain subsets of 2-copy fidelities can all be 1 (a). Meanwhile, other subsets (b) would require every copy to be identical to the input state, which is clearly impossible.}
\end{figure}

\subsection{Fidelity Relations: \texorpdfstring{$N-1\rightarrow N$}{N-1 to N} cloning} \label{sec:N-1}

When $M=N-1$, there is a single site at which a given bit string $x\in\{0,1\}^N: w_x=N-1$ (referring to $\beta_x$) is 0. So, we choose (in this section only) to revise the notation to $\beta_n$ where $x_n=0$, while if $L=1$ we use $F_k$ to mean $y_k=1$ (the only site at which $y$ is not 0). The normalization condition rearranges as
$$
\left(\sum_n\beta_n\right)^2+(d-1)\sum_n\beta_n^2= d.
$$
The fidelities are given by
$$
F_k=1-\frac{d-1}{d}\beta_k^2,
$$
permitting elimination of the $\beta_k$ from the normalisation condition, and yielding the optimal fidelity trade-off for $(N-1,1,N)$-cloning as
\begin{equation}
N-1=\sum_nF_n-\frac{1}{d-1}\left(\sqrt{1-F_n}\right)^2.
\end{equation}
Moreover, for arbitrary $L$, we can readily find that
\begin{equation}
F_y=1-\frac{d-1}{d}\sum_{n:y_n=1}\beta_n^2=\sum_{n:y_n=1}F_n-L+1=y\cdot\underline{F}-w_y+1	\label{eqn:N-1}
\end{equation}
where $\underline{F}$ is the vector of single-copy fidelities. Given a set of fidelities, we solve for $\underline{F}$:
$$
\min_{y\cdot\underline{F}\geq F_y+w_y-1 \forall y\in\Lambda}\sum_nF_n-\frac{1}{d-1}\left(\sum_n\sqrt{1-F_n}\right)^2.
$$
Cloning is possible if and only if this value is $\leq N-1$. We note that this is a convex optimization problem -- most obviously because the target function encapsulates the information from the positive definite matrix $G_{\underline 0}^{(M)}$ (that this is positive definite is clear because $G_{\underline 0}^{(M)}$ represents the Gram matrix of the vectors $\ket{\psi_x}$. Nevertheless, a full calculation of the eigenvalues is given in Appendix A). Since the problem is one of convex optimization (see also Lemma \ref{lem:convex}), it can be efficiently solved by interior point methods \cite{boyd2004}. Thus, the cloning problem can be resolved if $M=N-1$ for any set $\Lambda$. This proves Theorem \ref{thm:N-1}.

\subsection{\texorpdfstring{$1\rightarrow N$}{1 to N} cloning} \label{sec:1}

When $M=L=1$, the bit strings have a single site at which there is a 1, so we replace $x\in\{0,1\}^N: w_x=1$ with a value $n\in[N]$. The normalization condition reads
$$
\left(\sum_n\beta_n\right)^2+(d-1)\sum_n\beta_n^2= d.
$$
Each fidelity is calculated as
$$
d(d+1)F_k=d+\left((d-1)\beta_k+\sum_n\beta_n\right)^2.
$$
By summing over all $k$, we get that
$$
\frac{d(d+1)}{N+d-1}\sum_kF_k=2\left(\sum_n\beta_n\right)^2+(d-1)\sum_n\beta_n^2,
$$
and we can rearrange the fidelity relation to give
$$
(N+d-1)\sum_n\beta_n=\sqrt{d}\sum_n\sqrt{F_n(d+1)-1}.
$$
Eliminating these from the normalization condition yields
\begin{equation}
\frac{(d+1)\sum_nF_n}{N+d-1}=1+\left(\frac{\sum_n\sqrt{F_n(d+1)-1}}{N+d-1}\right)^2,
\end{equation}
describing the optimum trade-off between the cloning fidelities. This relation is equivalent to that found for the special case of $N=3$ \cite{iblisdir2005}, and was subsequently verified (at a time when Corollary \ref{cor:important} was only a conjecture) for $N=4$ \cite{ren2011, wikliski2012}. Similar expressions can be derived for any value of $L$, as detailed in Sec.\ \ref{sec:1p2}. If we set all the fidelities equal, we have
$$
(N+d-1)(d+1)NF=(N+d-1)^2+N^2((d+1)F-1),
$$
which rearranges to
$$
F=\frac{2N+d-1}{(d+1)N},
$$
recovering the standard result on symmetric cloning.

\subsection{A Case Study: \texorpdfstring{$2\rightarrow 4$}{2 to 4} Cloning}\label{sec:casestudy}
 We illustrate the challenges involved in solving a case where $M\neq 1,N-1$ by examining in more detail the first such case, $2\rightarrow 4$ cloning, using the two-copy fidelity measure $L=2$ on qubits, $d=2$ (The choice of $L=M$ can be anticipated to be the easiest to solve, as one would hope to find a one-to-one function between the $\{\beta_x\}$ and the $\{F_y\}$). 

Let $x,y,z\in\{0,1\}^4$ be bit strings of weights $w_x=w_y=w_z=2$, and let $\underline{\beta}=\sum_x\beta_x\ket{x}$ be a vector of real numbers. Lemma \ref{lem:nolinear} will prove that there are no linear combinations of the matrices which are rank 1. However, if we assume that
$$
F_y=F_{\bar y}	\qquad \forall y,
$$
the task is significantly simplified.
This is equivalent to
\begin{equation}
\bra{\beta}G_y-G_{\bar y}\ket{\beta}=0	\qquad \forall y.	\label{eqn:conditions}
\end{equation}
Applying the basis change specified by
$$
\tilde H=\frac{1}{\sqrt{2}}\sum_{\substack{x\in\{0,1\}^3\\w_x=1}}\!\!(\proj{x1}-\proj{\bar x 0}+\ket{x1}\bra{\bar x 0}+\ket{\bar x 0}\bra{x1})
$$
gives, for example,
\begin{eqnarray*}
\tilde H G_{\underline 0}\tilde H &=&\frac{1}{30}\left(
\begin{array}{cccccc}
 16 & 15 & 15 & 0 & 0 & 0 \\
 15 & 16 & 15 & 0 & 0 & 0 \\
 15 & 15 & 16 & 0 & 0 & 0 \\
 0 & 0 & 0 & 4 & 0 & 0 \\
 0 & 0 & 0 & 0 & 4 & 0 \\
 0 & 0 & 0 & 0 & 0 & 4
\end{array}
\right)	\\
\tilde H(G_{0011}-G_{1100})\tilde H &=&\left(
\begin{array}{cccccc}
 0 & 0 & 0 & 0 & 0 & \frac{2}{15} \\
 0 & 0 & 0 & 0 & 0 & \frac{1}{10} \\
 0 & 0 & 0 & 0 & 0 & \frac{1}{10} \\
 0 & 0 & 0 & 0 & -\frac{1}{10} & 0 \\
 0 & 0 & 0 & -\frac{1}{10} & 0 & 0 \\
 \frac{2}{15} & \frac{1}{10} & \frac{1}{10} & 0 & 0 & 0
\end{array}
\right).	
\end{eqnarray*}
There are 4 different ways to satisfy the three conditions of Eq.\ (\ref{eqn:conditions}),
$$
\ket{\beta}=\left\{\begin{array}{c}	(a,b,c,0,0,0)^T \\
(-\frac{3}{4}(a+b),a,b,0,0,c)^T	\\
(a,-\frac{3}{4}(a+b),b,0,c,0)^T	\\
(a,b,-\frac{3}{4}(a+b),c,0,0)^T
\end{array}\right..
$$
We have to try each of these cases in turn in order to assess which can achieve the largest fidelities, although the symmetry between the last 3 cases means we only have to assess one of them. First, consider the case $\ket{\beta}=(a,b,c,0,0,0)^T$, which reduces to $3\times 3$ matrices.
\begin{eqnarray*}
\bra{\tilde\beta}\left(
\begin{array}{ccc}
 16 & 15 & 15 \\
 15 & 16 & 15 \\
 15 & 15 & 16
\end{array}
\right)\ket{\tilde\beta}&=&30	\\
\frac{22}{30}\bra{\tilde\beta}\left(
\begin{array}{cccccc}
 1 & 0 & 0 \\
 0 & 0 & 0\\
 0 & 0 & 0
\end{array}
\right)\ket{\tilde\beta}&=&13(F_{1100}+F_{0011})+9
-9(F_{0101}+F_{1010}+F_{1001}+F_{0110})
\end{eqnarray*}
where $\ket{\tilde\beta}=(a,b,c)^T$. $\ket{\tilde\beta}$ is immediate, and the normalization condition is satisfied if
\begin{equation}
\sqrt{2(F_{1100}+F_{1010}+F_{0110})-1}=
\sqrt{3}\sum_{\substack{x\in\{0,1\}^3\\w_x=2}}\sqrt{44F_{x0}-18(F_{1100}+F_{1010}+F_{0110})+9},	\label{eqn:class2}
\end{equation}
although there are also constraints on the accessible values, since, for example, $13(F_{1100}+F_{0011})+9-9(F_{0101}+F_{1010}+F_{1001}+F_{0110})\geq 0$ because it corresponds to a squared quantity.

\begin{figure}
\centerline{\includegraphics[width=0.45\textwidth]{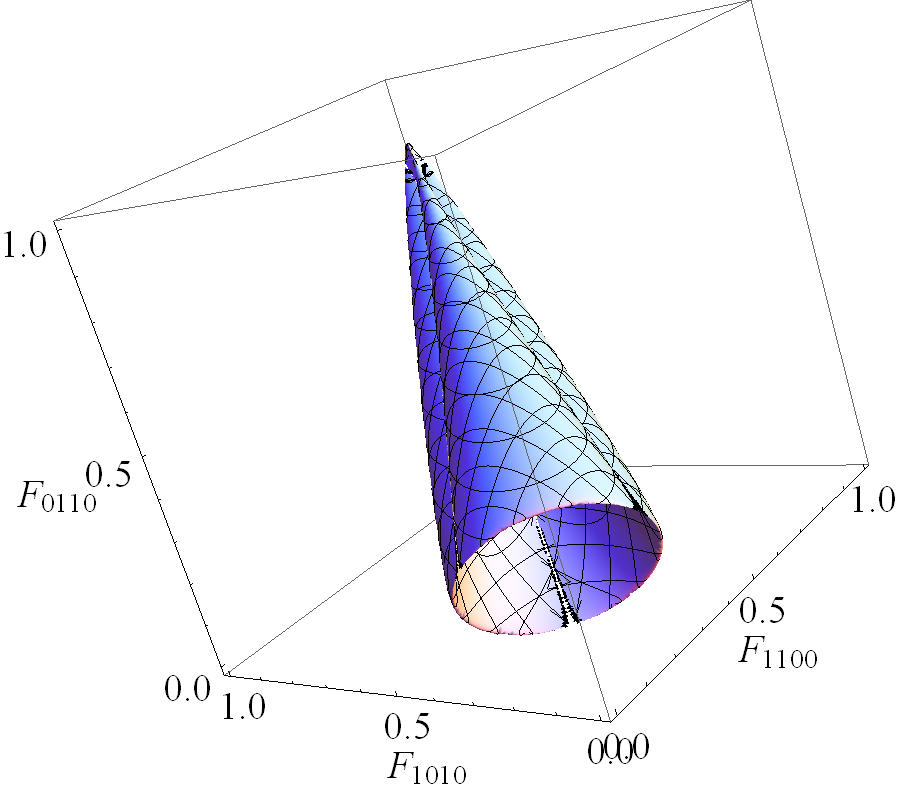}}
\fcaption{\label{fig:class2}Surface described by optimal cloning relation Eq.\ (\ref{eqn:class2}).}
\end{figure}

The above relation, depicted in Fig.\ \ref{fig:class2}, describes what, at first glance, might appear a surprising region -- as one fidelity increases, the other fidelities also increase! However, this actually makes sense; if $F_{0011}$ and $F_{1100}$ are to both be high, they need a large component of an entangled state going from the two input spins to both qubit pairs $(1,2)$ and $(3,4)$, which automatically means that all fidelities must be high. In fact, it turns out that within this class of solutions, the fully symmetric point is a global maximum, i.e.\ there's no point in worrying about asymmetric cloning because the best result is to always implement fully symmetric cloning, and each of the fidelities is $\frac{61}{69}$.

This must be compared to $\ket{\beta}=(-\frac{3}{4}(a+b),a,b,0,0,c)^T$, which similarly yields
\begin{equation}
(2F_{1010}+2F_{0110}+2-7F_{1100})(2F_{1010}+2F_{0110}-1-F_{1100})=\frac{9}{2}(F_{1010}-F_{0110})^2,	\label{eqn:class1}
\end{equation}
with positivity constraints:
\begin{eqnarray*}
2F_{1010}+2F_{0110}-1-F_{1100}&\geq& 0	\\
2F_{1010}+2F_{0110}+2-7F_{1100}&\geq &0	\\
2F_{1010}+2F_{0110}-1-\frac{5}{3}F_{1100}&\leq& 0.
\end{eqnarray*}


\begin{figure}
\centerline{\includegraphics[width=0.45\textwidth]{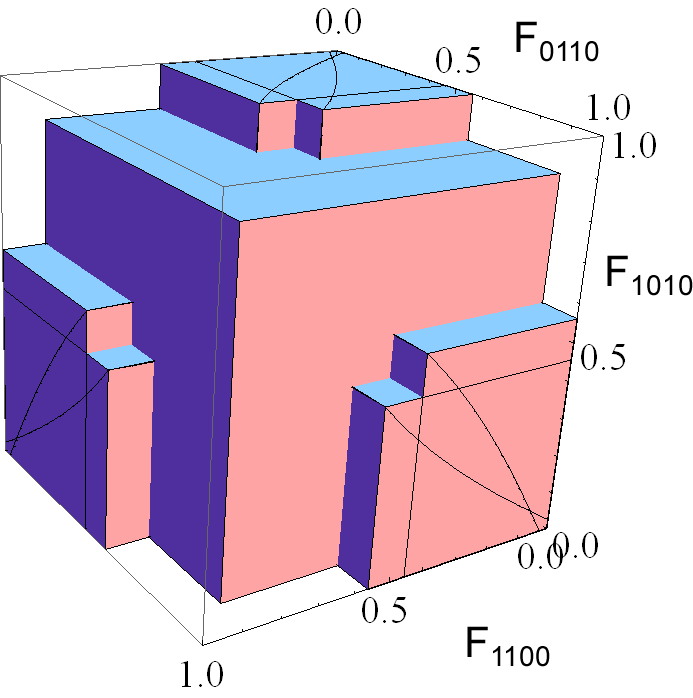}}
\fcaption{\label{fig:final} Fidelity triples inside the shaded region can be achieved or exceeded.}
\end{figure}

We say that a set of cloning fidelities $\{F_{1100},F_{1010},F_{0110}\}$ can be achieved if there exists another set of fidelities $\{\tilde F_{1100},\tilde F_{1010},\tilde F_{0110}\}$ for which $\tilde F_x\geq F_x$ for all $x$, such that the $\tilde F_x$ sit on the optimal cloning surfaces. The achievable fidelities are thus plotted in Fig.\ \ref{fig:final}, where we see there are distinct phases of different cloning results, which is in stark contrast to the $1\rightarrow N$ cloning that has been studied in the past in which there was no phase transition in the system as the cloning fidelities are varied.

\section{Proof of Theorem \ref{thm:main}}\label{sec:proof}

\subsection{Preliminaries -- The Symmetric Subspace and Spin Operators} \label{sec:prelim}

Our proofs are closely connected to the symmetric subspace and its properties.
\begin{definition}
Let $\mathcal{S}_N$ be the symmetric group on $N$ letters, and $\pi\in\mathcal{S}_N$ be an arbitrary permutation of those letters. The symmetric subspace of a Hilbert space $\mathcal{H}=(\mathbb{C}_d)^{\otimes N}$ is the set of states on $\mathcal{H}$ that are invariant under all such permutations:
$$
S_{\mathcal{H}}:=\{\ket{\psi}\in\mathcal{H}:P(\pi)\ket{\psi}=\ket{\psi}\forall\pi\in\mathcal{S}_N\}
$$
where $P(\pi)$ is a representation of $\pi$ on $\mathcal{H}$.
\end{definition}
\begin{definition}Projector onto the symmetric subspace:
$$
P_{\text{sym}}^{N,d}=\frac{1}{N!}\sum_{\pi\in\mathcal{S}_N}P(\pi).
$$
In this context, the superscript conveys that the projector acts on $N$ spins of dimension $d$. Later, it will also be used to specify which subset of spins it acts on.
\end{definition}
\noindent To see that this is a projector, note that $P(\pi_1)P(\pi_2)=P(\pi_1\pi_2)$, $\pi_1\pi_2$ is in the group, and the group multiplication is invertible, meaning that $\pi_1\pi_2$ maps to distinct permutations for all $\pi_2$ and a fixed $\pi_1$. Hence
$$
{P_{\text{sym}}^{N,d}}^2=\frac{1}{N!N!}\sum_{\pi_1\in\mathcal{S}_N}\sum_{\pi_2\in\mathcal{S}_N}P(\pi_1)P(\pi_2)=\frac{1}{N!}\sum_{\pi\in\mathcal{S}_N}P(\pi)=P_{\text{sym}}^{N,d}.
$$
Evidently, for any state $\ket{\psi}\in S_{\mathcal{H}}$, $P_{\text{sym}}^{N,d}\ket{\psi}=\ket{\psi}$, so the span of states that is projected onto includes the symmetric subspace. Furthermore, for any $\ket{\psi}\in\mathcal{H}$ and $\pi\in\mathcal{S}_N$,
$
P(\pi)P_{\text{sym}}^{N,d}\ket{\psi}=P_{\text{sym}}^{N,d}\ket{\psi}
$
because, again, we can fold the $P(\pi)$ into the sum over $\pi$ in the projector, and this is the definition of a state in $S_{\mathcal{H}}$. Hence the span of states that is projected onto is a subspace of the symmetric subspace. Taken together, this shows that $P_{\text{sym}}^{N,d}$ projects onto $S_{\mathcal{H}}$.

A useful feature of the symmetric subspace is its relation with the spin operators:
\begin{definition}
For a $d$-dimensional Hilbert space, we define the following spin operators:
\begin{eqnarray*}
S^{X,d}\!&=&\!\sum_{n=1}^{d-1}\!\sqrt{n(d-n)}(\ket{n-1}\bra{n}+\ket{n}\bra{n-1})	\\
S^{Y,d}\!&=&\!\sum_{n=1}^{d-1}\!i\sqrt{n(d-n)}(\ket{n}\bra{n-1}-\ket{n-1}\bra{n})	\\
S^{Z,d}\!&=&\!\sum_{n=0}^{d-1}(d-1-2n)\proj{n}.
\end{eqnarray*}
For an $N$-fold tensor product of such a Hilbert space, the total spin operators are
$$
J_K=\sum_{n=1}^N\identity^{\otimes (n-1)}\otimes S^{K,d}\otimes\identity^{\otimes(N-n)},
$$
with $K=X,Y,Z$, and $J^2=J_X^2+J_Y^2+J_Z^2$.
\end{definition}
It is important to note that $[J_Z,J^2]=0$, so they are simultaneously diagonalizable. We will use $\{\ket{\phi_i^N}\}$ to denote an orthonormal basis for $P_{\text{sym}}^{N,d}$, i.e.
$$
P_{\text{sym}}^{N,d}=\sum_i\proj{\phi_i^N},
$$
and will generally assume that the $\ket{\phi_i}$ are also eigenstates of the $J_Z$ operator (for qubits, for example, one can fix that $J_Z\ket{\phi_i^N}=(2i-N)\ket{\phi_i^N}$ for $i=0,\ldots, N$). Alternatively, we will use the superscript to denote the set of spins that the state covers. For instance, we may use $\ket{\phi^x_i}$ to denote a symmetric state of $w_x$ spins located on the sites $n$ specified by $x_n=1$. It is only a mild abuse of notation to then write $\ket{\phi_i^x}\ket{\phi_j^{\bar x}}$ to fully specify the state of $N$ spins. We can also create a Bell state from the symmetric states, defining
$$
\ket{B_x^{(M)}}=\frac{1}{\sqrt{\binom{M+d-1}{M}}}\sum_i\ket{\phi_i^{\text{IN}}}\ket{\phi_i^x}
$$
for any $x\in\{0,1\}^N: w_x=M$. Lemma \ref{lem:dimension} will confirm that this state is correctly normalized.

\subsubsection{Properties of the Symmetric Subspace}

We now review some basic properties of the symmetric subspace \cite{harrow2013} (itself a synthesis of works including \cite{barenco1997,christandl2006,goodman2009,harrow2005,jozsa1998,stanley2001,chiribella2010}), and describe the effect of the partial transpose operation.

\begin{lemma} \label{lem:dimension}
The dimension of $S_{\mathcal{H}}$ is $\binom{N+d-1}{N}$.
\end{lemma}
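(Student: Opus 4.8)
The plan is to count the dimension of the symmetric subspace $S_{\mathcal{H}}$ of $(\mathbb{C}_d)^{\otimes N}$ by exhibiting an orthonormal basis indexed by the relevant combinatorial objects. The natural candidates are the ``type'' states: for each way of distributing $N$ indistinguishable particles among $d$ levels, i.e.\ each tuple of occupation numbers $(n_0,n_1,\ldots,n_{d-1})$ with $n_j\ge 0$ and $\sum_j n_j=N$, form the normalised equal superposition over all computational basis strings having that multiset of symbols. Concretely, $\ket{\psi_{(n_0,\ldots,n_{d-1})}}\propto\sum_{\pi}P(\pi)\ket{0}^{\otimes n_0}\ket{1}^{\otimes n_1}\cdots\ket{d-1}^{\otimes n_{d-1}}$, which is manifestly permutation-invariant and hence lies in $S_{\mathcal{H}}$.

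First I would verify these states are nonzero, mutually orthogonal (they are supported on disjoint sets of computational basis vectors, since a string determines its occupation tuple uniquely), and that they span $S_{\mathcal{H}}$: applying $P_{\text{sym}}^N$ to an arbitrary computational basis vector $\ket{e}$ yields a multiple of $\ket{\psi_{\vec n}}$ where $\vec n$ is the occupation tuple of $\ket{e}$, and since $P_{\text{sym}}^N$ projects onto $S_{\mathcal{H}}$ (as established just above the lemma) and the computational basis spans all of $\mathcal{H}$, the images $P_{\text{sym}}^N\ket{e}$ span $S_{\mathcal{H}}$; hence so do the $\ket{\psi_{\vec n}}$. Therefore $\dim S_{\mathcal{H}}$ equals the number of occupation tuples, which is the number of weak compositions of $N$ into $d$ nonnegative parts, namely $\binom{N+d-1}{N}$ by the standard stars-and-bars count.

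The only mild obstacle is the bookkeeping: one must confirm that the projection $P_{\text{sym}}^N\ket{e}$ of \emph{some} computational basis vector with occupation tuple $\vec n$ is genuinely nonzero (so the basis has the full claimed size and not fewer elements), which is immediate because $\ket{e}$ itself appears with positive coefficient $n_0!\,n_1!\cdots n_{d-1}!/N!$ in the symmetrised sum. Everything else is routine combinatorics. An alternative, essentially equivalent route would be to use the representation-theoretic fact that for the $N$-fold tensor power of $\mathbb{C}_d$ the symmetric subspace carries the irreducible $\mathrm{SU}(d)$ representation with highest weight $(N,0,\ldots,0)$, whose dimension is given by the hook-content (or Weyl dimension) formula and evaluates to $\binom{N+d-1}{N}$; but the direct counting argument above is self-contained and preferable here.
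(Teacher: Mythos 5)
Your argument is correct, but it is not the route the paper takes. The paper computes the dimension as $\Tr(P_{\text{sym}}^N)$: it counts, for each computational basis string, the number of permutations fixing that string (namely $\prod_j c_j!$ for a string with occupation numbers $c_j$), notes that there are $N!/\prod_j c_j!$ strings of each type, so each occupation tuple contributes exactly $1$ to the trace, and then counts tuples by the same stars-and-bars argument you use. You instead exhibit an explicit orthogonal basis of type states $\ket{\psi_{\vec n}}\propto\sum_\pi P(\pi)\ket{0}^{\otimes n_0}\cdots\ket{d-1}^{\otimes n_{d-1}}$, verify they are nonzero and mutually orthogonal (disjoint supports), and show they span $S_{\mathcal{H}}$ because $P_{\text{sym}}^N\ket{e}$ is proportional to the type state of $\ket{e}$ and the images $P_{\text{sym}}^N\ket{e}$ span the range of the projector. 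Both arguments are sound and hinge on the same combinatorial count of weak compositions; the trace computation is slightly shorter since it needs no orthogonality or spanning check, while your construction buys more — it produces exactly the orthonormal basis $\{\ket{\phi_i^N}\}$ of $J_Z$ eigenstates that the paper goes on to use, so nothing in your write-up needs repair.
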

\begin{proof}{
The dimension of the symmetric subspace can be calculated from $\Tr(P_{\text{sym}}^{N,d})$. If we use $[d]$ to denote a choice of labels $1,2,\ldots, d$ then
$$
\Tr(P_{\text{sym}}^{N,d})=\frac{1}{N!}\sum_{\pi\in\mathcal{S}_N}\sum_{i\in[d]^N}\bra{i}P(\pi)\ket{i}.
$$
We can consider this sum as, for each $i$, how many permutations are there that map $i$ to $i$? If there are $c_1$ instances of 1 in $i$, $c_2$ of 2 etc. (subject to the constraint $\sum_jc_j=N$), then there are $c_1!$ permutations that map all the 1s back to the 1s. Hence,
$$
\Tr(P_{\text{sym}}^{N,d})=\frac{1}{N!}\sum_{i\in[d]^N}\prod_{j=1}^Nc_j!.
$$
But, of all the strings $i$, how many have $c_1$ 1s, $c_2$ 2s etc? $\frac{N!}{\prod_{j=1}^Nc_j!}$. This leaves us needing to know the number of distinct configurations of the $\{c_j\}$ that are possible, i.e.\ how many ways are there to distribute $N$ indistinguishable items between $d$ bins? $\binom{N+d-1}{N}$.
}\end{proof}

\begin{lemma} \label{lem:PT}
The operator defined on $M$ input qudits and $w$ output qudits by
$$
\rho_{(\text{IN},w)}:=\int (U^*\proj{0}U^T)^{\otimes M}\otimes (U\proj{0}U^\dagger)^{\otimes w} dU,
$$
with integration being taken uniformly over the Haar measure for $U\in SU(d)$, satisfies
$$
\rho_{(\text{IN},w)}^{T_{\text{IN}}}=\frac{P_{\text{sym}}^{M+w,d}}{\binom{M+w+d-1}{M+w}}.
$$
$T_{\text{IN}}$ denotes the partial transpose over the $M$ input spins.
\end{lemma}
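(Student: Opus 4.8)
The plan is to reduce the claim to the standard fact that the Haar average of a pure product power is the normalised symmetric projector,
$$
\int (U\proj{0}U^\dagger)^{\otimes n}\, dU=\frac{P_{\text{sym}}^{n}}{\binom{n+d-1}{n}},
$$
and to dispose of the partial transpose by a direct manipulation of the integrand.

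First I would record how $T_{\text{IN}}$ acts on a single term of the integral. Transposition distributes over tensor products, and $\proj{0}$ is a real diagonal matrix, hence symmetric, so, using $(U^T)^T=U$ and $(U^*)^T=U^\dagger$,
$$
\left((U^*\proj{0}U^T)^{\otimes M}\otimes(U\proj{0}U^\dagger)^{\otimes w}\right)^{T_{\text{IN}}}=(U\proj{0}U^\dagger)^{\otimes(M+w)}.
$$
Since partial transposition is linear and commutes with the Haar integral, this gives at once $\rho_{\text{IN},\text{OUT}}^{T_{\text{IN}}}=\int (U\proj{0}U^\dagger)^{\otimes(M+w)}\,dU$.

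Second I would prove the auxiliary identity above. Writing $\Pi=\int (U\proj{0}U^\dagger)^{\otimes n}\,dU$ with $n=M+w$, each term $(U\ket{0})^{\otimes n}$ lies in $S_{\mathcal{H}}$ and $S_{\mathcal{H}}$ is preserved by $V^{\otimes n}$, so $\Pi$ is supported on the symmetric subspace and vanishes on its orthogonal complement; invariance of the Haar measure under translation gives $V^{\otimes n}\Pi(V^{\otimes n})^\dagger=\Pi$ for every $V\in SU(d)$. Because the symmetric subspace carries an irreducible representation of $SU(d)$ under $V\mapsto V^{\otimes n}$ (Schur--Weyl duality, cf.\ \cite{harrow2013}), Schur's lemma forces $\Pi=cP_{\text{sym}}^n$. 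The scalar $c$ is then pinned down by taking the trace: partial transposition preserves the trace, so $\Tr\rho_{\text{IN},\text{OUT}}^{T_{\text{IN}}}=\Tr\rho_{\text{IN},\text{OUT}}=1$ (each integrand is a product of unit-trace operators), while $\Tr P_{\text{sym}}^n=\binom{n+d-1}{n}$ by Lemma \ref{lem:dimension}, whence $c=1/\binom{M+w+d-1}{M+w}$.

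The only ingredient that is not pure bookkeeping is the irreducibility of the symmetric subspace as an $SU(d)$-module, equivalently the statement that the coherent states $(V\ket{0})^{\otimes n}$ span $S_{\mathcal{H}}$. If one wishes to avoid quoting Schur--Weyl, this span can be obtained directly by differentiating $V^{\otimes n}\ket{0}^{\otimes n}$ along the one-parameter subgroups generated by $S^{X,d}$, $S^{Y,d}$ and $S^{Z,d}$ to reach every symmetrised computational-basis vector. I expect that to be the one place demanding care; the rest is the transpose identity and a single normalisation.
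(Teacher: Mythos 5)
Your proof is correct and follows essentially the same route as the paper's: take the partial transpose inside the Haar integral to obtain $\int(U\proj{0}U^\dagger)^{\otimes(M+w)}\,dU$, identify this with the normalised projector onto the symmetric subspace, and fix the constant by trace preservation together with Lemma \ref{lem:dimension}. The only difference is that you justify the middle step properly, via unitary invariance of the Haar average and Schur's lemma on the irreducible symmetric representation, where the paper merely asserts it as ``clearly'' the symmetric subspace -- a welcome tightening, since being a mixture of symmetric product states by itself only gives support on, not proportionality to, $P_{\text{sym}}^{M+w}$.
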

\begin{proof}{
If we take the partial transpose, we have that
$$
\rho_{(\text{IN},w)}^{T_{\text{IN}}}=\int (U\proj{0}U^\dagger)^{\otimes (M+w)} dU.
$$
This is clearly a mixture of all possible states $\proj{\psi}^{\otimes (M+w)}$, which is the symmetric subspace, we just have to be careful with the normalization. The trace is unaffected by partial transpose operations, so given that $\Tr(\rho_{(\text{IN},w)})=1$ and, by Lemma \ref{lem:dimension},
$$
\Tr(P_\text{sym}^{M+w,d})=\binom{M+w+d-1}{M+w},
$$
we have the desired result.
}\end{proof}

\begin{corollary}\label{lem:positive}
The matrix elements of $\rho_{(\text{IN},w)}$ are non-negative.
\end{corollary}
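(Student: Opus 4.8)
The plan is to read this off Lemma \ref{lem:PT} together with an elementary fact about permutation representations. By that lemma, $\rho^{T_{\text{IN}}}$ equals $P_{\text{sym}}^{M+w}$ up to the positive scalar $1/\binom{M+w+d-1}{M+w}$, so it suffices to check that $P_{\text{sym}}^{M+w}$ has non-negative matrix elements in the computational product basis $\{\ket{i}:i\in[d]^{M+w}\}$. But $P_{\text{sym}}^{M+w}=\frac{1}{(M+w)!}\sum_{\pi\in\mathcal{S}_{M+w}}P(\pi)$, and each $P(\pi)$ acts on the product basis as a genuine permutation of basis vectors, hence is a $0/1$ matrix; the normalised sum therefore has entries in $[0,1]$. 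Consequently $\rho^{T_{\text{IN}}}$ has non-negative entries.

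It then remains to note that undoing the partial transpose over the IN register cannot create negative entries, since the partial transpose merely permutes the matrix elements of an operator without changing their values. Explicitly, for product-basis labels $a,c\in[d]^M$ on the input spins and $b,e\in[d]^w$ on the output spins,
$$
\bra{a}\bra{b}\rho\ket{c}\ket{e}=\bra{c}\bra{b}\rho^{T_{\text{IN}}}\ket{a}\ket{e},
$$
so every entry of $\rho$ coincides with some entry of $\rho^{T_{\text{IN}}}$, which we have just shown to be non-negative. Hence all matrix elements of $\rho$ are non-negative.

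There is essentially no obstacle: the corollary is immediate from Lemma \ref{lem:PT} and the fact that permutation matrices are $0/1$. The only point deserving a word of care is that "non-negative matrix elements" is a basis-dependent statement, intended here (as throughout the paper) with respect to the computational product basis in which $P_{\text{sym}}^{M+w}$ and the matrices $G_y$ are naturally written.
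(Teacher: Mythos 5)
Your proof is correct and follows essentially the same route as the paper's: both reduce the claim via Lemma \ref{lem:PT} to the non-negativity of the entries of $P_{\text{sym}}^{M+w}$ in the product basis (the paper phrases this as counting permutations mapping one string to another, you phrase it as a normalised sum of $0/1$ permutation matrices, which is the same fact), and both finish by noting that the partial transpose merely rearranges matrix elements. No further comment is needed.
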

\begin{proof}{
An element $N!\bra{i}P_{\text{sym}}^{N,d}\ket{j}$ counts the permutations that map the string $i$ in to the string $j$. This is clearly non-negative. The partial transpose rearranges matrix elements and does not change their values.
}\end{proof}

\noindent Lemma \ref{lem:PT} contains the statement of twirling \cite{horodecki1999} as a special case (with, perhaps, a more straightforward proof):
\begin{corollary}\label{lem:twirl}
For $M=w=1$, $\rho_{(1,1)}=\frac{\proj{B^{(1)}}}{d+1}+\frac{\identity}{d(d+1)}$.
\end{corollary}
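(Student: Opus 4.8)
The plan is to read $\rho$ straight off Lemma~\ref{lem:PT} by undoing the partial transpose, exploiting the fact that $T_{\text{IN}}$ is an involution. First I would record the standard form of the two-qudit symmetric projector, $P_{\text{sym}}^2=\half(\identity+\mathrm{SWAP})$, which is immediate from the definition since $\mathcal{S}_2=\{e,(12)\}$ with $P(e)=\identity$ and $P((12))=\mathrm{SWAP}$. Specialising Lemma~\ref{lem:PT} to $M=w=1$ (so $M+w=2$ and $\binom{M+w+d-1}{M+w}=\binom{d+1}{2}=\tfrac12 d(d+1)$) then gives $\rho^{T_{\text{IN}}}=\frac{P_{\text{sym}}^2}{\binom{d+1}{2}}=\frac{1}{d(d+1)}(\identity+\mathrm{SWAP})$.

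Next, apply $T_{\text{IN}}$ to both sides. Using $\identity^{T_{\text{IN}}}=\identity$ together with the elementary identity $\mathrm{SWAP}^{T_{\text{IN}}}=\sum_{i,j}\ket{ii}\bra{jj}=d\proj{\Phi^+}$, where $\ket{\Phi^+}=\frac{1}{\sqrt d}\sum_i\ket{ii}$, we obtain $\rho=\frac{1}{d(d+1)}\identity+\frac{1}{d+1}\proj{\Phi^+}$. It then remains to identify $\ket{\Phi^+}$ with $\ket{B^{(1)}}$: for a single spin the symmetric subspace is the whole space, so the bases $\{\ket{\phi_i^{\text{IN}}}\}$ and $\{\ket{\phi_i^x}\}$ (with $w_x=1$) may both be taken to be the computational basis, $\binom{1+d-1}{1}=d$, and the definition of $\ket{B_x^{(M)}}$ collapses to $\ket{B^{(1)}}=\frac{1}{\sqrt d}\sum_i\ket{i}_{\text{IN}}\ket{i}_{\text{OUT}}=\ket{\Phi^+}$. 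A trace check, $\frac{d^2}{d(d+1)}+\frac{1}{d+1}=1$, confirms consistency with $\Tr\rho=1$.

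I do not expect a genuine obstacle; the only care needed is to use the same computational basis for $T_{\text{IN}}$ as is implicit in the definition of $\ket{B^{(1)}}$, and to note that no residual complex conjugation survives in that identification because the $M=1$ symmetric basis states are just the (real) computational vectors. If one preferred to avoid Lemma~\ref{lem:PT} altogether, an independent route observes that the integrand equals $\overline{(U\proj{0}U^\dagger)}\otimes(U\proj{0}U^\dagger)$, so $\rho$ lies in the commutant of $\bar U\otimes U$, which for $U\in SU(d)$ is spanned by $\identity$ and $\proj{\Phi^+}$; writing $\rho=a\identity+b\proj{\Phi^+}$ and fixing the coefficients by $\Tr\rho=1$ and $\bra{\Phi^+}\rho\ket{\Phi^+}=\frac1d$ (the latter because each $U\proj{0}U^\dagger$ is a rank-one projector, so $\bra{\Phi^+}\overline{\sigma}\otimes\sigma\ket{\Phi^+}=\frac1d\Tr(\sigma\sigma^\dagger)=\frac1d$) reproduces $a=\frac{1}{d(d+1)}$, $b=\frac{1}{d+1}$.
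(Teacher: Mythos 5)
Your main argument is correct and is essentially the paper's own proof: both invoke Lemma \ref{lem:PT} with $M=w=1$ and undo the partial transpose of $P_{\text{sym}}^2/\binom{d+1}{2}$, the only difference being that you phrase the computation via $P_{\text{sym}}^2=\half(\identity+\mathrm{SWAP})$ and $\mathrm{SWAP}^{T_{\text{IN}}}=d\proj{B^{(1)}}$ where the paper expands the matrix elements in the computational basis. Your alternative commutant argument is also sound, but it is not needed.
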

\begin{proof}{
The basis elements of the symmetric subspace of a $d\times d$ Hilbert space consist of $\ket{ii}$ and $(\ket{ij}+\ket{ji})/\sqrt{2}$ for $i< j$.
$$
P_{\text{sym}}^{2,d}=\sum_{i=0}^{d-1}\proj{ii} +\half\sum_{i<j}(\ket{ij}+\ket{ji})(\bra{ij}+\bra{ji}),
$$
so
\begin{equation*}
\begin{split}
\rho_{(1,1)}&=\frac{2}{d(d+1)}{P_{\text{sym}}^{2,d}}^{T_\text{IN}}	\\
&=\frac{2\!\displaystyle\sum_i\!\proj{ii}+\!\displaystyle\sum_{i<j}\!\ket{ij}\!\bra{ij}+\ket{ji}\!\bra{ji}+\ket{ii}\!\bra{jj}+\ket{jj}\!\bra{ii}}{d(d+1)}	\\
&=\frac{\identity}{d(d+1)}+\frac{\proj{B^{(1)}}}{d+1}.
\end{split}
\end{equation*}
}\end{proof}

\begin{lemma}\label{lem:commutator}
The matrix $\rho_{(IN,w)}$ satisfies
\begin{eqnarray*}%
\left[\rho_{(IN,w)},(U_I^{\otimes M}\otimes\identity^{\otimes w})J_Z({U_I^\dagger}^{\otimes M}\otimes\identity^{\otimes w})\right]&=&0	\\
\left[\rho_{(IN,w)},(U_I^{\otimes M}\otimes\identity^{\otimes w})J^2({U_I^\dagger}^{\otimes M}\otimes\identity^{\otimes w})\right]&=&0
\end{eqnarray*}
where $U_I=\sum_{n=0}^{d-1}(-1)^n\ket{n}\bra{d-1-n}$.
\end{lemma}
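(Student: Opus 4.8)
The plan is to exploit the manifest unitary invariance of the integral defining $\rho$. First I would substitute $U\to VU$ for a fixed $V\in SU(d)$ and use invariance of the Haar measure: since $(VU)^*\proj{0}(VU)^T=V^*(U^*\proj{0}U^T)V^T$ and $(VU)\proj{0}(VU)^\dagger=V(U\proj{0}U^\dagger)V^\dagger$, this gives $\rho=(V^{*\otimes M}\otimes V^{\otimes w})\,\rho\,(V^{T\otimes M}\otimes V^{\dagger\otimes w})$ for every $V$. As $V^T=(V^*)^\dagger$, the operator $W_V:=V^{*\otimes M}\otimes V^{\otimes w}$ is unitary and the relation says $\rho=W_V\rho W_V^\dagger$, i.e.\ $\rho$ is invariant under conjugation by the group $\{W_V\}$. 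Taking $V=e^{itH}$ for Hermitian $H$, one has $W_V=e^{itG_H}$ with $G_H:=-\sum_{n=1}^M (H^*)_n\otimes\identity^{\otimes w}+\identity^{\otimes M}\otimes\sum_{n=1}^w H_{M+n}$ (the subscript labelling the tensor factor; $G_H$ is Hermitian since $H^*$ is). Differentiating $\rho=e^{itG_H}\rho e^{-itG_H}$ at $t=0$ yields $[\rho,G_H]=0$ for every Hermitian $H$. The minus sign and the complex conjugate on the IN factors are exactly the fingerprint of the $U^*\,\cdots\,U^T$ on the IN space in the definition of $\rho$.

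Next I would identify the conjugated spin operators with operators of this form. A short basis computation from $U_I\ket{m}=(-1)^{d-1-m}\ket{d-1-m}$ and $U_I^\dagger\ket{m}=(-1)^m\ket{d-1-m}$ shows that $U_I S^{\alpha,d}U_I^\dagger=-(S^{\alpha,d})^*$ for each $\alpha\in\{X,Y,Z\}$: the relabelling $n\leftrightarrow d-1-n$ sends $S^{Z,d}\to -S^{Z,d}$ and $S^{X,d}\to -S^{X,d}$, and these are real so $-S=-S^*$; for $S^{Y,d}$ it produces $+S^{Y,d}=-(S^{Y,d})^*$ because $S^{Y,d}$ is purely imaginary. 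Writing $\Theta=U_I^{\otimes M}\otimes\identity^{\otimes w}$ and splitting $J_\alpha=J_\alpha^{\text{IN}}+J_\alpha^{\text{OUT}}$ into the pieces supported on the IN and OUT spins, this gives $\Theta J_\alpha\Theta^\dagger=-(J_\alpha^{\text{IN}})^*+J_\alpha^{\text{OUT}}=G_{S^{\alpha,d}}$. In particular $\Theta J_Z\Theta^\dagger=G_{S^{Z,d}}$, so the first commutator is precisely the $H=S^{Z,d}$ instance of the invariance argument above.

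For the second commutator, unitarity of $\Theta$ gives $\Theta J^2\Theta^\dagger=\sum_{\alpha}(\Theta J_\alpha\Theta^\dagger)^2=\sum_\alpha G_{S^{\alpha,d}}^2$. Since $\rho$ commutes with each $G_{S^{\alpha,d}}$ (the cases $H=S^{X,d},S^{Y,d},S^{Z,d}$), it commutes with each square and hence with their sum, so $[\rho,\Theta J^2\Theta^\dagger]=0$.

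The only genuine bookkeeping is in the middle step: one must keep track that $\rho$ carries a transpose/conjugate on the IN factors (not a bare $U$), so the symmetry group acting on $\rho$ is $\{V^{*\otimes M}\otimes V^{\otimes w}\}$ rather than $\{V^{\otimes(M+w)}\}$, and then verify by a direct index chase that $U_I$ is exactly the single-site unitary turning the generator "$S^{\alpha,d}$ on OUT, $-(S^{\alpha,d})^*$ on IN'' into $J_\alpha$ conjugated on the IN spins. I expect this sign/conjugate audit (together with confirming $U_I S^{Y,d}U_I^\dagger=+S^{Y,d}$ while the other two flip sign) to be the main obstacle; everything else is a one-line consequence of Haar invariance and unitarity of $\Theta$.
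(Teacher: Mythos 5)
Your argument is correct, and it reaches the paper's conclusion by a somewhat different mechanism in its first half. The paper does not differentiate a covariance relation: it starts from Lemma \ref{lem:PT}, $\rho^{T_{\text{IN}}}\propto P_{\text{sym}}^{M+w}$, uses the permutation invariance of the total spin operators to get $[P_{\text{sym}}^{M+w},J_\alpha]=[P_{\text{sym}}^{M+w},J^2]=0$, and then partial-transposes the commutator over the IN spins, which flips the sign and transposes the IN part of the generator, $[\rho,-\sum_{\text{IN}}(S^\alpha)^T+\sum_{\text{OUT}}S^\alpha]=0$. You instead work directly from the defining Haar integral: the substitution $U\to VU$ gives $\rho=W_V\rho W_V^\dagger$ with $W_V=V^{*\otimes M}\otimes V^{\otimes w}$, and differentiating $V=e^{itH}$ yields $[\rho,-\sum_{\text{IN}}H^*+\sum_{\text{OUT}}H]=0$ for (traceless) Hermitian $H$ --- the same generator, since $(S^\alpha)^*=(S^\alpha)^T$ for the Hermitian spin matrices. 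Your route is more self-contained and avoids the slightly delicate bookkeeping of transposing a product inside a commutator, and it delivers commutation with the whole generator family at once; the paper's route buys economy by reusing the already-proved partial-transpose identity and the standard fact that $P_{\text{sym}}$ commutes with the collective spin operators. Both proofs then converge on the same final identification, $U_IS^{\alpha,d}U_I^\dagger=-(S^{\alpha,d})^*$ (equivalently $S^Z\mapsto-S^Z$, $S^X\mapsto-S^X$, $S^Y\mapsto S^Y$), and your handling of $J^2$ via $\Theta J^2\Theta^\dagger=\sum_\alpha(\Theta J_\alpha\Theta^\dagger)^2$ matches what the paper implicitly relies on. The only cosmetic caveat: for $V\in SU(d)$ you should restrict to traceless $H$ (which suffices, as $S^{X,d},S^{Y,d},S^{Z,d}$ are traceless), or note that a trace part only contributes a global phase to $W_V$ and so is harmless.
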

\begin{proof}{
It is clear that $[P_{\text{sym}}^{M+w,d},J_Z]=0$ (and for $J_X, J_Y$) and $[P_{\text{sym}}^{M+w,d},J^2]=0$ given that the total spin operators are invariant under permutations of underlying spins. Now, divide the sum for $J_Z$ (for instance) into a sum over terms on the input space, and terms on the output space. It must be that
$$
\left[P_{\text{sym}}^{M+w,d},\sum_{\text{IN}}S^Z+\sum_{\text{OUT}}S^Z\right]=0,
$$
so we take the partial transpose over the input spins, using the fact that $[A,B]^T=-[A,B]$:
$$
\left[\rho_{(IN,w)},-\sum_{\text{IN}}{S^Z}^T+\sum_{\text{OUT}}S^Z\right]=0.
$$
Furthermore,
$-S_Z^T=-S_Z$, $-S_X^T=-S_X$ and $-S_Y^T=S_Y$, so given that the specified $U_I$ maps $S_Z\mapsto -S_Z$, $S_X\mapsto -S_X$ and $S_Y\mapsto S_Y$, it must be that
$$
[\rho_{(IN,w)},(U_I^{\otimes M}\otimes\identity^{\otimes w}J_Z{U_I^\dagger}^{\otimes M}\otimes\identity^{\otimes w}]=0.
$$
}\end{proof}

\begin{lemma}\label{lem:transpose}
For any $d\times d$ matrix $M$
$$
M^T\otimes\identity\ket{B^{(1)}}=\identity\otimes M\ket{B^{(1)}}.
$$
\end{lemma}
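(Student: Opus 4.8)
The plan is to unpack both sides in the computational basis and verify equality term by term; this is the familiar ``transpose trick'' (or ricochet property) for the maximally entangled state, so I do not anticipate any genuine obstacle — the only thing requiring care is matching normalisation and index conventions against the definition of $\ket{B^{(1)}}$ given earlier.

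First I would specialise the definition of $\ket{B_x^{(M)}}$ to $M=1$. For a single spin the symmetric subspace is the entire $d$-dimensional space, so one may take $\ket{\phi_i}=\ket{i}$ for $i=0,\dots,d-1$, and since $\binom{d}{1}=d$ we get $\ket{B^{(1)}}=\frac{1}{\sqrt d}\sum_{i=0}^{d-1}\ket{i}\ket{i}$, the standard maximally entangled state (its normalisation also being consistent with Lemma \ref{lem:dimension}).

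Next I would expand $M=\sum_{j,k}M_{jk}\ket{j}\bra{k}$ and compute each side. The left-hand side is
$$
M^T\otimes\identity\,\ket{B^{(1)}}=\frac{1}{\sqrt d}\sum_i (M^T\ket{i})\otimes\ket{i}=\frac{1}{\sqrt d}\sum_{i,j} M_{ij}\,\ket{j}\otimes\ket{i},
$$
using $(M^T)_{ji}=M_{ij}$, while the right-hand side is
$$
\identity\otimes M\,\ket{B^{(1)}}=\frac{1}{\sqrt d}\sum_i \ket{i}\otimes(M\ket{i})=\frac{1}{\sqrt d}\sum_{i,k} M_{ki}\,\ket{i}\otimes\ket{k}.
$$
Relabelling the dummy indices in one of the two expressions (for instance swapping the roles of the two summation labels in the second) makes them identical, which proves the claim.

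The main ``obstacle'' here is purely bookkeeping — keeping the order of the two tensor factors straight and not conflating $M_{ij}$ with $M_{ji}$ once the transpose is applied. If a coordinate-free phrasing is preferred, one can instead observe that $\sum_i\ket{i}\ket{i}$ is the vectorisation of $\identity$, and that acting with $A\otimes B$ on $\mathrm{vec}(C)$ produces $\mathrm{vec}(ACB^T)$; both $M^T\otimes\identity$ and $\identity\otimes M$ then send $\mathrm{vec}(\identity)$ to $\mathrm{vec}(M^T)$, giving the identity at once.
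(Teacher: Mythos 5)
Your proposal is correct and follows exactly the route the paper indicates (the paper merely states that one should expand $M=\sum_{i,j}M_{i,j}\ket{i}\bra{j}$ and verify, leaving the computation to the reader): you specialise $\ket{B^{(1)}}$ to $\frac{1}{\sqrt d}\sum_i\ket{i}\ket{i}$, expand both sides in components, and match them after relabelling dummy indices. The vectorisation remark is a nice coordinate-free restatement but does not change the substance of the argument.
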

The proof of this is a simple case of explicitly writing $M=\sum_{i,j}M_{i,j}\ket{i}\bra{j}$ and verifying the equivalence. It is left to the reader.

\subsection{The Choi-Jamio\l kowski Isomorphism} \label{sec:isomorphism}

A general scenario that encompasses quantum cloning is that, given one of a set of states $\ket{\psi_i}$ ($i = 1, \dots, K$), we are required to perform a particular state transformation on it, without being told which of the states we have been given. The required transformation may not be achievable exactly within the quantum formalism, but is best approximated within the theory by a completely positive, trace preserving map $\mathcal{E}$ that transforms input state $\ket{\psi_i}$ into $\mathcal{E}(\proj{\psi_i})$. The success of the state transformation task is then measured by a fidelity 
\begin{equation*}
F=\frac{1}{K}\sum_i\Tr\left(\mathcal{M}_i\mathcal{E}(\proj{\psi_i})\right). 
\end{equation*}
Here $\mathcal{M}_i$ are positive operators ($\mathcal{M}_i \succeq 0$) satisfying $\|\mathcal{M}_i\|\leq 1$ so that $F$ is indeed a fidelity taking values between $0$ and $1$. If the fidelity takes value $1$, we infer that the map has perfectly implemented the required state transformation for all the specified input states. For example, if required to transform the states $\ket{\psi_i}$ into states $\ket{\phi_i}$, we might define $\mathcal{M}_i=\proj{\phi_i}$. For a continuous set of states, the sum appearing in the definition of $F$ transforms to an integral. The factor of $1/K$ appearing in the definition stems from the assumption that each of the input states $\ket{\psi_i}$ is equally likely. If this is not the case, these parameters can be adjusted based on a given probability distribution of the input states.
\begin{lemma}\label{lem:transformation}
For the state transformation task, the achievable fidelity is upper bounded by the maximum eigenvalue of the operator
\begin{equation}
R=\frac{d'}{K}\sum_i\proj{\psi_i}^T_I\otimes \mathcal{M}_i, \label{eqn:def_R}
\end{equation}
where $d'$ is the dimension of the subspace spanned by the states $\{\ket{\psi_i}\}$.
\end{lemma}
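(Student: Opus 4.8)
The plan is to invoke the Choi--Jamio\l kowski isomorphism reviewed in this section. Since the fidelity $F$ depends on $\mathcal{E}$ only through its action on states in $V:=\mathrm{span}\{\ket{\psi_i}\}$, I would first replace $\mathcal{E}$ by its restriction to $V$: fix an orthonormal basis $\{\ket{j}\}_{j=1}^{d'}$ of $V$ and regard $\mathcal{E}$ as a completely positive, trace-preserving map from operators on $V$ to operators on the OUT space (restricting the Kraus operators by the projector onto $V$ keeps both properties, the trace-preservation condition becoming $\sum_k \Pi_V K_k^\dagger K_k \Pi_V=\identity_V$). Define its Choi operator $C=(\mathcal{I}_V\otimes\mathcal{E})(\proj{\Phi})$, with $\ket{\Phi}=\sum_{j=1}^{d'}\ket{j}_I\ket{j}_{I'}$ the unnormalised maximally entangled vector on $V\otimes V$. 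Complete positivity gives $C\geq 0$, while trace preservation gives $\Tr_{\text{OUT}}C=\identity_V$ and hence $\Tr C=d'$; consequently $\sigma:=C/d'$ is a genuine density operator.

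Next I would use the standard reconstruction identity $\mathcal{E}(\rho)=\Tr_I\!\big[(\rho^{T_I}\otimes\identity_{\text{OUT}})\,C\big]$, valid for all $\rho$ supported on $V$, where the transpose is taken in the basis $\{\ket{j}\}$ --- which I take to be the same basis that fixes $T_I$ in the definition of $R$. Substituting into $F=\frac{1}{K}\sum_i\Tr(\mathcal{M}_i\,\mathcal{E}(\proj{\psi_i}))$ and absorbing the partial trace into the full trace, the operators $\mathcal{M}_i$ (on OUT) and $\proj{\psi_i}^{T_I}$ (on $V$) assemble into exactly $\frac{1}{K}\sum_i\proj{\psi_i}^{T_I}\otimes\mathcal{M}_i=R/d'$, so that $F=\Tr[(R/d')\,C]=\Tr[R\sigma]$.

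Finally, for any Hermitian $R$ and any density operator $\sigma$ one has $\Tr[R\sigma]\leq\lambda_{\max}(R)$ (expand $\sigma$ in an eigenbasis of $R$ and use that the diagonal entries of $\sigma$ are non-negative and sum to $1$). This gives $F\leq\lambda_{\max}(R)$, as claimed.

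The individual steps are routine; the point requiring the most care is the reduction to the $d'$-dimensional subspace together with the attendant normalisation. One must check that the restriction of a CPTP map to $V$ is again CPTP, that $\ket{\Phi}$ and the transpose appearing in $R$ refer to a common basis of $V$, and --- crucially for the bound to take the stated clean form --- that trace preservation forces $\Tr C=d'$, so that $C/d'$ is a normalised state. Had one used the normalised maximally entangled vector in the Choi construction, the factor $d'$ would instead migrate onto $R$; tracking precisely where this dimension factor resides is the only genuine subtlety.
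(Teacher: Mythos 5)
Your proposal is correct and follows essentially the same route as the paper: both rewrite the fidelity as $\Tr(R\chi)$ with $\chi$ the (suitably normalised) Choi state of the channel restricted to the span of the $\ket{\psi_i}$, and then bound $\Tr(R\chi)$ by the maximum eigenvalue of $R$. The only difference is a bookkeeping convention --- you place the factor $d'$ by normalising the Choi operator of the unnormalised maximally entangled vector, while the paper works with the normalised entangled state and imports $d'$ via its transpose lemma --- which you track correctly.
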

\begin{proof}{
We start by introducing a Hilbert space of two parts, an input space and an output space, both of dimension $d$, the dimension of the Hilbert space from which the $\ket{\psi_i}$ are taken, and consider the maximally entangled state $\ket{\Psi}_{IN,OUT}$ between them, as applies only to the subspace  of states $\{\ket{\psi_i}\}$. This lets us rewrite $$\proj{\psi}_{\text{OUT}}=d'\Tr_{\text{IN}}(\proj{\psi}^T\otimes\identity\cdot(\identity\otimes\mathcal{E})(\proj{\Psi}))$$ using Lemma \ref{lem:transpose}. So,
$$
F=\frac{d'}{K}\sum_i\Tr\left(\proj{\psi}^T\otimes \mathcal{M}_i\cdot(\identity\otimes\mathcal{E})(\proj{\Psi})\right).
$$
Since $\mathcal{E}$ is a completely positive operator, its extension is well defined, and we can let $\chi=(\identity\otimes\mathcal{E})(\proj{\Psi})$. The trace preserving property of $\mathcal{E}$ imposes that $\Tr(\chi)=\Tr\proj{\Psi}=1$ (if it weren't trace preserving, the trace would be non-increasing, $\Tr(\chi)\leq 1$, which doesn't change our conclusion). So, $F=\Tr(R\chi)\leq \lambda\Tr(\chi)\leq \lambda$ where $\lambda$ is the maximum eigenvalue of $R$, and $\chi$ is the corresponding (normalized) maximum eigenvector.
}\end{proof}

The above proof does not guarantee that a map described by state $\chi$ can be implemented; that is the purpose of the following Lemma.

\begin{lemma} \label{lem:achieve}
The upper bound for the achievable fidelity in the state transformation task can be realised if there exists a mixture $\rho$ of the maximum eigenvectors of $R$ such that $\Tr_{\text{OUT}}(\rho)=\identity/d'$, where $\identity$ is over the subspace spanned by the states $\{\ket{\psi_i}\}$, of dimension $d'$. If $\rho$ is a pure state, then the transformation can be achieved {\em economically}.
\end{lemma}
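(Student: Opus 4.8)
The plan is to invert the construction from Lemma \ref{lem:transformation}, turning the Choi state $\rho$ back into a genuine CPTP map whose performance saturates the bound. Recall that in that proof we had $F = \Tr(R\chi)$ where $\chi = (\identity\otimes\mathcal{E})(\proj{\Psi})$, and the bound $F \le \lambda$ became an equality precisely when $\chi$ is supported on the maximum-eigenvalue eigenspace of $R$. So the first step is: given a mixture $\rho$ of maximum eigenvectors of $R$, I want to exhibit a CPTP map $\mathcal{E}$ with $(\identity\otimes\mathcal{E})(\proj{\Psi}) = \rho$; the resulting fidelity is then automatically $\Tr(R\rho) = \lambda$ since $\rho$ lies in the top eigenspace.

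The key step is the standard Choi--Jamio\l kowski reconstruction: from any bipartite operator $\rho_{IO} \ge 0$ one defines $\mathcal{E}(\sigma) = d' \Tr_{\text{IN}}\big((\sigma^T \otimes \identity_{\text{OUT}})\,\rho_{IO}\big)$ (on the relevant $d'$-dimensional input subspace), and one checks that $\mathcal{E}$ is completely positive iff $\rho \ge 0$, and trace-preserving iff $\Tr_{\text{OUT}}(\rho) = \identity/d'$. Complete positivity is immediate because $\rho$, being a mixture of eigenvectors, is positive semidefinite. Trace preservation is exactly the hypothesis imposed in the statement: $\Tr_{\text{OUT}}(\rho) = \identity/d'$ on the span of the $\{\ket{\psi_i}\}$. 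One then verifies $(\identity\otimes\mathcal{E})(\proj{\Psi}) = \rho$ by direct computation using Lemma \ref{lem:transpose} and the fact that $\proj{\Psi}$ is maximally entangled on the $d'$-dimensional subspace, so that contracting one half of $\proj{\Psi}$ against $\sigma^T$ reproduces $\sigma$ up to the dimensional factor. Feeding each $\ket{\psi_i}$ into this $\mathcal{E}$ and summing gives $F = \frac{1}{K}\sum_i \Tr(\mathcal{M}_i\mathcal{E}(\proj{\psi_i})) = \Tr(R\rho)$, and since $\rho$ is in the top eigenspace of $R$ this equals $\lambda$.

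Finally, for the ``economical'' remark: if $\rho = \proj{\chi}$ is pure, then the reconstructed channel $\mathcal{E}$ has a Kraus representation with a single Kraus operator (its Choi rank is one), i.e.\ $\mathcal{E}$ is itself, up to the isometric dilation, implemented by an isometry directly from the input space into the output space with no ancilla required — hence ``economical.'' I would make this precise by noting that the Choi rank equals the minimal number of Kraus operators, and a rank-one Choi matrix corresponds to $\mathcal{E}(\sigma) = A\sigma A^\dagger$ for a single $A$; trace preservation forces $A^\dagger A = \identity$, so $A$ is an isometry.

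The main obstacle is not any single computation — each piece is routine linear algebra — but rather being careful about the restriction to the $d'$-dimensional subspace spanned by $\{\ket{\psi_i}\}$ throughout: the maximally entangled state $\ket{\Psi}$, the identity in $\Tr_{\text{OUT}}(\rho) = \identity/d'$, and the transpose operation all live on that subspace rather than the full $d$-dimensional space, and one must check that the reconstructed $\mathcal{E}$ can be extended (arbitrarily, e.g.\ by composing with a projection onto the subspace) to a genuine channel on the whole input space without affecting its action on the $\ket{\psi_i}$ or the value of $F$.
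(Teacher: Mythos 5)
Your proof is correct, but it runs the argument in the opposite direction from the paper. You reconstruct the channel abstractly from its Choi state: define $\mathcal{E}(\sigma)=d'\,\Tr_{\text{IN}}\!\left((\sigma^T\otimes\identity_{\text{OUT}})\rho\right)$, obtain complete positivity from $\rho\geq 0$ and trace preservation from $\Tr_{\text{OUT}}(\rho)=\identity/d'$, verify $(\identity\otimes\mathcal{E})(\proj{\Psi})=\rho$, and conclude $F=\Tr(R\rho)=\lambda$; the economical remark is then the observation that a pure Choi state has Choi rank one, so the single Kraus operator is an isometry from IN into OUT with no ancilla. The paper instead works pure-state-first: it takes the Schmidt decomposition of a maximum eigenvector across the IN/OUT cut, notes that flat Schmidt coefficients $\beta_n^2=1/d'$ make the induced map act as $\proj{\psi}\mapsto U\proj{\psi}U^\dagger$ with $U\ket{\eta_n}=\ket{\lambda_n}$ (the economical case), and then handles a genuinely mixed $\rho$ by purifying it with an ancilla of dimension at most $d'^2$, which restores flat Schmidt coefficients between IN and the rest and reduces everything to the pure-state argument at the cost of economy. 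The two arguments are equivalent incarnations of inverting the Choi--Jamio\l kowski correspondence: yours is the more compact, standard channel-theoretic statement and delivers the fidelity equality immediately, while the paper's buys an explicit isometric/unitary implementation with a concrete bound on the ancilla size, which is what makes the term ``economical'' transparent and feeds into the later circuit construction. Your closing caveat about restricting to the $d'$-dimensional subspace spanned by the $\ket{\psi_i}$ is handled in the paper in the same spirit (``the relevant Hilbert spaces are extended as necessary''), so nothing essential is missing from your account.
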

\begin{proof}{
Let the maximum eigenvector of $R$ be $\ket{\chi}$. For any choice of $\ket{\chi}$ (allowing for the fact that the maximum eigenvalue may be degenerate), it can be expressed as a pure bipartite state between the subsystems IN and OUT with a Schmidt decomposition
$$
\ket{\chi}=\sum_{n=0}^{d-1}\beta_n\ket{\eta_n}\ket{\lambda_n},
$$
where $\{\ket{\eta_n}\}$ define an orthonormal basis over the $d'$ dimensional Hilbert space. If there exists a maximum eigenvector such that $\beta_n^2=\smallfrac{d'}$, then
$$
d'\Tr_{\text{IN}}(\proj{\psi}^T\otimes\identity \proj{\chi})=U\proj{\psi}U^\dagger
$$
where
$$
U\ket{\eta_n}=\ket{\lambda_n} \qquad\forall n\in [d'].
$$
Here the relevant Hilbert spaces are extended as necessary so that they have the same size. In this instance, the optimal strategy (application of $U$ to the input state) is called economical, meaning that one does not require an ancilla for the operation to be implemented.

If this cannot be done, but there exists a mixture of maximum eigenvectors such that $\Tr_{\text{OUT}}(\rho)=\identity/d'$, then it is always possible to introduce an ancillary system that purifies $\rho$ and gives Schmidt coefficients between the system IN and the rest of value $1/d'$. By the previous argument, we can therefore implement a unitary operation over this extended space (meaning it is no longer economical) that realizes the desired map.
}\end{proof}

\subsection{Cloning} \label{sec:clone}

We study the $M\rightarrow N$ universal cloning problem. This means that we are given $M$ copies of an unknown pure state, and are tasked with making $N>M$ copies of it, as well as we can. Universal cloning imposes that the unknown quantum state is drawn uniformly from all possible pure states (i.e.\ $U\ket{0}$ where $U$ is drawn uniformly over $SU(d)$). Most studies concentrate on symmetric cloning, in which we want all of the copies produced to be as good as each other. Here, we aim for the loftier goal of wanting to know how we can trade the qualities of the different outputs. Traditionally, one concentrates on the single-clone fidelity:
$$
M(\psi)=\sum_{n=1}^N\alpha_n\identity^{\otimes(n-1)}\otimes\proj{\psi}\otimes\identity^{\otimes(N-n)},
$$
where $\alpha_n\geq 0$ ensures positivity, and $\sum_n\alpha_n=1$ ensures that the maximum value is 1, which only happens for the state $\ket{\psi}^{\otimes N}$. This lets us examine the individual copies. However, other measures have been considered, such as the global fidelity:
$$
M(\psi)=\proj{\psi}^{\otimes N}.
$$
We aim to consider a fully general case where we evaluate the fidelities on arbitrary subsets of qudits. This will be specified by $\Lambda\subseteq\{0,1\}^N$, meaning that any $x\in\Lambda$ wants us to evaluate the fidelity across all sites $n$ for which $x_n=1$, and not over the sites $x_n=0$. As such,
$$
M(\psi)=\sum_{x\in\Lambda}\alpha_x\bigotimes_{n=1}^N\proj{\psi}^{x_n}
$$
where $\sum_{x\in\Lambda}\alpha_x=1$. The shorthand of $\proj{\psi}_x\otimes\identity_{\bar x}$ describes the tensor product $\bigotimes_{n=1}^N\proj{\psi}^{x_n}$  (and by $\proj{\psi}^0$ we understand $\identity$). According to Lemma \ref{lem:transformation}, our task is to find the maximum eigenvalue $\lambda$ and eigenvector $\ket{\chi}$ of the matrix
$$
R=\sum_{x\in\{0,1\}^N}\alpha_xR_x
$$
where
\begin{equation*}
\begin{split}
R_x=&d'\int \proj{\psi}^T_\text{IN}\otimes\proj{\psi}_x\otimes\identity_{\bar x}d\psi	\\
=&d'\int (U\proj{0}U^\dagger)^T_\text{IN}\otimes(U\proj{0}U^\dagger)_x\otimes\identity_{\bar x}dU.
\end{split}
\end{equation*}
The realized fidelity $F=\lambda$ can be described by $\sum_x\alpha_xF_x$ where $F_x$ is the fidelity of the set of clones at the sites $x_n=1$: $F_x=\bra{\chi}R_x\ket{\chi}$. By Lemma \ref{lem:twirl},
$$
R_x^{T_{\text{IN}}}=\frac{\binom{M+d-1}{M}}{\binom{M+w_x+d-1}{M+w_x}}P_{\text{sym}}^{M+w_x,d}.
$$
Lemma \ref{lem:commutator} also shows that $R$ simultaneously commutes with both $J^2$ and $J_Z$ (up to a unitary rotation), and it therefore has two quantum numbers $S$ (where $4S(S+1)$ is an eigenvalue of $J^2$) and $M_Z$ (taking values $-S$ to $S$ in integer steps) that distinguish subspaces of eigenvectors. 

\begin{definition}
We denote by $\ket{\psi_x}$ for $x\in\{0,1\}^N$ and $w_x=M$ the state $\ket{B_x^{(M)}}\ket{\Phi}_{\bar x}$, where $\ket{\Phi}$ is a symmetric state of the $N-M$ spins on $\bar x$.
\end{definition}
The subspace of these states will be show to have particular significance in Lemma \ref{lem:subspace}, but we need to determine some important properties first.

\begin{lemma} \label{lem:eta}
Let $x,y\in\{0,1\}^N$ where $w_y=M$. The state
$$
\ket{\eta_{x,y}}=\sqrt{\frac{\binom{M+d-1}{M}\binom{w_x-x\cdot y+d-1}{d-1}}{\binom{w_x+M-x\cdot y+d-1}{d-1}}}\identity_\text{IN}\otimes P_{\text{sym}}^{x\cup y,d}\otimes\identity_{\bar x\cap\bar y}\ket{\psi_y},
$$
is correctly normalized.
\end{lemma}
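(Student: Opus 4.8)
The plan is to compute $\braket{\eta_{x,y}}{\eta_{x,y}}$ directly. Since $P_{\text{sym}}^{x\cup y}$ is a projector, this norm equals the scalar prefactor $\binom{M+d-1}{M}\binom{w_x-x\cdot y+d-1}{d-1}/\binom{w_x+M-x\cdot y+d-1}{d-1}$ times the overlap $\mathcal{O}:=\bra{\psi_y}\identity_\text{IN}\otimes P_{\text{sym}}^{x\cup y}\otimes\identity_{\bar x\cap\bar y}\ket{\psi_y}$, so the whole job is to show that $\mathcal{O}$ is the reciprocal of that prefactor. I would start by partitioning the $N$ output spins into the four groups $A=x\cap y$, $B=\bar x\cap y$, $C=x\cap\bar y$, $D=\bar x\cap\bar y$, of weights $x\cdot y$, $M-x\cdot y$, $w_x-x\cdot y$ and $N-w_x-M+x\cdot y$ respectively, so that $y=A\cup B$, $x\cup y=A\cup B\cup C$ and $\bar x\cap\bar y=D$; thus $\ket{\psi_y}=\ket{B_y^{(M)}}\ket{\Phi}_{C\cup D}$ has its Bell part on $\text{IN}\cup A\cup B$ and its symmetric part $\ket{\Phi}$ on $C\cup D$.

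Next I would insert $\ket{B_y^{(M)}}=\binom{M+d-1}{M}^{-1/2}\sum_i\ket{\phi_i^\text{IN}}\ket{\phi_i^{A\cup B}}$ and trace out the $\text{IN}$ register; orthonormality of the $\ket{\phi_i^\text{IN}}$ collapses the double sum to $\sum_i\proj{\phi_i^{A\cup B}}=P_{\text{sym}}^{A\cup B}$, giving
\begin{equation*}
\mathcal{O}=\frac{1}{\binom{M+d-1}{M}}\Tr\left[\left(P_{\text{sym}}^{A\cup B\cup C}\otimes\identity_D\right)\left(P_{\text{sym}}^{A\cup B}\otimes\proj{\Phi}_{C\cup D}\right)\right].
\end{equation*}
Any state symmetric over all of $A\cup B\cup C$ is in particular symmetric over $A\cup B$, so $P_{\text{sym}}^{A\cup B\cup C}\left(P_{\text{sym}}^{A\cup B}\otimes\identity_C\right)=P_{\text{sym}}^{A\cup B\cup C}$, which lets me drop the $P_{\text{sym}}^{A\cup B}$ factor. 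Similarly $\ket{\Phi}$, being symmetric over $C\cup D$, is symmetric over $C$, so its reduction $\rho_C:=\Tr_D\proj{\Phi}$ is supported on the symmetric subspace $S_C$ of the $C$ spins. Performing the partial trace over $D$ then reduces the overlap to $\mathcal{O}=\binom{M+d-1}{M}^{-1}\Tr_C\left[\left(\Tr_{A\cup B}P_{\text{sym}}^{A\cup B\cup C}\right)\rho_C\right]$.

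The crucial step --- and the one I expect to be the main obstacle, since it is exactly what forces $\mathcal{O}$ (hence the normalisation) to be independent of the choice of $\ket{\Phi}$ --- is the identity $\Tr_{A\cup B}P_{\text{sym}}^{A\cup B\cup C}=\binom{w_x+M-x\cdot y+d-1}{d-1}\big/\binom{w_x-x\cdot y+d-1}{d-1}\cdot P_{\text{sym}}^C$. I would derive it from the coherent-state resolution recorded inside the proof of Lemma \ref{lem:PT}, namely $\int(U\proj{0}U^\dagger)^{\otimes n}dU=P_{\text{sym}}^n/\binom{n+d-1}{n}$: writing $P_{\text{sym}}^{A\cup B\cup C}$ this way with $n=w_x+M-x\cdot y$ and tracing out the $M=|A\cup B|$ spins of $A\cup B$ leaves $\binom{w_x+M-x\cdot y+d-1}{d-1}\int(U\proj{0}U^\dagger)^{\otimes(w_x-x\cdot y)}dU$, which equals the claimed multiple of $P_{\text{sym}}^C$ by the same resolution read backwards. (Equivalently, one can argue directly that this partial trace is supported on $S_C$ and commutes with $SU(d)$ there, hence is a multiple of $P_{\text{sym}}^C$ by Schur's lemma, fixing the constant by taking a trace.) Since $\rho_C$ is a normalised state on $S_C$, $\Tr_C[P_{\text{sym}}^C\rho_C]=1$, so $\mathcal{O}=\binom{M+d-1}{M}^{-1}\binom{w_x+M-x\cdot y+d-1}{d-1}\big/\binom{w_x-x\cdot y+d-1}{d-1}$, which is precisely the reciprocal of the prefactor; multiplying the two gives $\braket{\eta_{x,y}}{\eta_{x,y}}=1$. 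Degenerate cases with $C$ or $D$ empty are covered by the same formulae using $\binom{d-1}{0}=1$.
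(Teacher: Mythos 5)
Your proof is correct, and it reaches the normalisation by a route that is organised differently from the paper's. The paper converts $P_{\text{sym}}^{x\cup y}$ into the Haar integral $\binom{w_x+M-x\cdot y+d-1}{d-1}\int (U\proj{0}U^\dagger)^{\otimes(w_x+M-x\cdot y)}dU$ at the outset and then uses the Bell-state transpose identity (Lemma \ref{lem:transpose}) to slide the unitaries acting on the $y$ spins onto the IN register, where they cancel; what remains is $\binom{w_x-x\cdot y+d-1}{d-1}\int\Tr\bigl(\proj{\Phi}\,(U\proj{0}U^\dagger)_{x\cap\bar y}\otimes\identity_{\bar x\cap\bar y}\bigr)dU=\Tr\bigl(\proj{\Phi}\,P_{\text{sym}}^{x\cap\bar y}\otimes\identity\bigr)=1$. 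You never invoke Lemma \ref{lem:transpose}: instead you use idempotence of the projector, the fact that $\Tr_{\text{IN}}\proj{B_y^{(M)}}=P_{\text{sym}}^{y}/\binom{M+d-1}{M}$, absorption of the smaller symmetric projector into the larger one, and the partial-trace identity $\Tr_{A\cup B}P_{\text{sym}}^{A\cup B\cup C}=\frac{\binom{w_x+M-x\cdot y+d-1}{d-1}}{\binom{w_x-x\cdot y+d-1}{d-1}}P_{\text{sym}}^{C}$, which you correctly derive from the same coherent-state resolution (or Schur's lemma) that underlies the paper's integral. Both arguments ultimately rest on the same two facts (the Haar-integral form of $P_{\text{sym}}$ and the permutation invariance of $\ket{\Phi}$), but your packaging isolates a clean, reusable operator identity and makes it manifest that the norm depends only on the support of $\ket{\Phi}$ in the symmetric subspace, not on which symmetric state is chosen; the paper's packaging has the advantage that exactly the same integral-plus-transpose manipulation is recycled, with a doubled integral and the substitution $W=U^\dagger V$, in the proof of Lemma \ref{lem:subspace}, so the two proofs there run in parallel. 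Your handling of the degenerate cases ($C$ or $D$ empty) and the constant-fixing in the partial-trace identity are both fine.
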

\begin{proof}{
Rewriting the projection operator as the integral,
$$
P_{\text{sym}}^{x\cup y,d}=\textstyle\binom{w_x+M-x\cdot y+d-1}{d-1}\int (U\proj{0}U^\dagger)^{\otimes (w_x+M-x\cdot y)}dU,
$$
we have that
\begin{multline*}
\braket{\eta_{x,y}}{\eta_{x,y}}=\textstyle{\binom{M+d-1}{M}\binom{w_x-x\cdot y+d-1}{d-1}}\times\\
\Tr\left(\int \proj{B_y^{(M)}}\otimes\proj{\Phi}_{\bar y}\cdot \identity_{\text{IN}}\otimes(U\proj{0}U^\dagger)^{\otimes (w_x+M-x\cdot y)}\otimes\identity_{\bar{x}\cap\bar y}  dU\right).
\end{multline*}
However, for the spins where $y_n=1$, we can absorb the $U$s into the $\ket{B^{(M)}_y}$ ($U_y$ acts as a unitary within the symmetric subspace of spins $y_n=1$): $\identity_{\text{IN}}\otimes U_y\ket{B^{(M)}_y}=U^{\star}_{\text{IN}}\otimes\identity_y\ket{B^{(M)}_y}$ by Lemma \ref{lem:transpose}, and they cancel, leaving
\begin{eqnarray*}
\braket{\eta_{x,y}}{\eta_{x,y}}&=&\textstyle{\binom{M+d-1}{M}\binom{w_x-x\cdot y+d-1}{d-1}}\times \\
&&\Tr\left(\int \proj{B_y^{(M)}}\otimes\proj{\Phi}\cdot \identity_{\text{IN}}\otimes\proj{0}_y\otimes(U\proj{0}U^\dagger)_{\otimes x\cap\bar y}\otimes\identity_{\bar{x}\cap\bar y}  dU\right)	\\
&=&\textstyle\binom{w_x-x\cdot y+d-1}{d-1}\Tr\left(\int \proj{\Phi}\cdot (U\proj{0}U^\dagger)_{x\cap\bar y}\otimes\identity_{\bar{x}\cap\bar y}  dU\right)\\
&=&\Tr\left(\proj{\Phi}\cdot P_{\text{sym}}^{x\cap\bar y,d}\otimes\identity_{\bar{x}\cap\bar y} \right).
\end{eqnarray*}
Since $\ket{\Phi}$ is a $+1$ eigenstate of all possible permutations of its spins, this includes the permutations involved on the subset of spins $x\cap\bar y$. Thus, the trace has value 1 and we are left with
$\braket{\eta_{x,y}}{\eta_{x,y}}=1$.
}\end{proof}

\begin{lemma}\label{lem:inner}
For any two binary strings $x,y\in\{0,1\}^N$ with $w_x=w_y=M$,
$$
\braket{\psi_x}{\psi_y}=\frac{1}{\binom{M-x\cdot y+d-1}{d-1}}.
$$
\end{lemma}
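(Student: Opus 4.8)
The plan is to prove the identity by a direct tensor-network contraction, organised around the four-way partition of the $N$ output spins into the blocks $x\cap y$ (of size $x\cdot y$), $x\cap\bar y$, $\bar x\cap y$ and $\bar x\cap\bar y$. I would expand $\ket{\psi_x}=\ket{B_x^{(M)}}\ket{\Phi}_{\bar x}$ and $\ket{\psi_y}=\ket{B_y^{(M)}}\ket{\Phi}_{\bar y}$ through the definitions of the Bell state and of $\ket{\Phi}$. Since the two Bell states share the IN register, the first move is to contract it out: orthonormality of $\{\ket{\phi_i^{\text{IN}}}\}$ collapses the double sum over Bell indices to a single sum, leaving a quantity built entirely from symmetric states of the output blocks.

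Rather than attack $\braket{\psi_x}{\psi_y}$ head-on, I would first establish the cleaner intermediate identity
$$
\big(\bra{B_x^{(M)}}\otimes\identity_{\bar x}\big)\ket{\psi_y}=\frac{1}{\binom{M-x\cdot y+d-1}{d-1}}\,\ket{\Phi}_{\bar x},
$$
which says that contracting $\ket{\psi_y}$ against the Bell bra on $\text{IN}\cup x$ returns the \emph{same} symmetric state $\ket{\Phi}$, now sitting on $\bar x$, up to the advertised scalar; the Lemma then follows by taking the inner product with $\bra{\Phi}_{\bar x}$ and using $\braket{\Phi}{\Phi}=1$. To prove the intermediate identity I would substitute the explicit form of the symmetric basis vectors — the symmetric vector over $m$ spins with occupation vector $i$ is the (multinomial) normalisation $\binom{m}{i}^{-1/2}$ times the equal-amplitude sum of the $\binom{m}{i}$ computational-basis strings with that occupation — and expand $\ket{\Phi}$ the same way. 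The structural point is that, $\ket{\Phi}$ being symmetric, its amplitudes depend only on the occupation vector, so once the block-by-block contractions are carried out the coefficient of a generic basis vector on $(\bar x\cap y)\cup(\bar x\cap\bar y)$ reduces to a sum over the string that sits on $x\cap y$.

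That sum over the $x\cap y$ block is the only genuine computation. With $k=x\cdot y$ and $s$ the occupation vector of the relevant string it has the form $\sum_{r:\,\sum_j r_j=k}\binom{k}{r}\big/\binom{M}{r+s}$ (multinomials, $r+s$ componentwise), and it evaluates to $\binom{M+d-1}{k}\big/\big(\binom{M}{k}\binom{M-k}{s}\big)$; I would obtain this from the generating function $\sum_{r}\big(\prod_j\binom{r_j+s_j}{s_j}\big)t^{\sum_j r_j}=(1-t)^{-(|s|+d)}$, read off at $t^k$, combined with $\binom{k}{r}\binom{M}{r+s}^{-1}=\binom{M}{k}^{-1}\binom{M-k}{s}^{-1}\prod_j\binom{r_j+s_j}{s_j}$. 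Once this is inserted, the factor $\binom{M-k}{s}$ cancels the multiplicity of occupation-$s$ strings in $\ket{\Phi}$, the amplitudes reassemble into $\ket{\Phi}$ on $\bar x$, and the prefactor collapses via $\binom{M+d-1}{M}^{-1}\binom{M+d-1}{k}/\binom{M}{k}=\binom{M-k+d-1}{d-1}^{-1}$ to exactly the stated coefficient. It is a reassuring by-product that the particular choice of $\ket{\Phi}$ disappears entirely, and that $x=y$ (so $k=M$) returns $1$, consistent with $\ket{\psi_x}$ being normalised.

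The main obstacle is pure bookkeeping: keeping track of which of the four blocks each index belongs to, which indices get contracted between which tensors, and massaging the multinomial sum over $x\cap y$ into precisely the shape to which the generating-function identity applies. No single step is deep — the identity itself is routine once the sum is correctly posed — so essentially all of the care goes into setting things up so that no combinatorial factor is dropped along the way.
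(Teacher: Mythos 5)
Your proof is correct, but it takes a genuinely different route from the paper. The paper proves Lemma \ref{lem:inner} by induction on the overlap $x\cdot y$: it first establishes (Lemma \ref{lem:eta}, via Haar integrals over $SU(d)$ and the symmetric-projector normalisation) that the projected state $\ket{\eta_{x,y}}$ is normalised, expands $\ket{\eta_{x,y}}$ in the $\{\ket{\psi_z}\}$, and extracts the overlap at $x\cdot y=k$ from $\braket{\eta_{x,y}}{\eta_{x,y}}=1$ using the inductive hypothesis for all larger overlaps (together with the symmetry observation that the overlap depends only on $x\cdot y$). You instead compute the overlap directly in the occupation-number basis, via the stronger intermediate identity $\bigl(\bra{B_x^{(M)}}\otimes\identity_{\bar x}\bigr)\ket{\psi_y}=\binom{M-x\cdot y+d-1}{d-1}^{-1}\ket{\Phi}_{\bar x}$, reducing everything to the multinomial sum $\sum_{r:|r|=k}\binom{k}{r}/\binom{M}{r+s}$ evaluated by a negative-binomial generating function; I checked the block decomposition, the identity $\binom{k}{r}\binom{M}{r+s}^{-1}=\binom{M}{k}^{-1}\binom{M-k}{s}^{-1}\prod_j\binom{r_j+s_j}{s_j}$, the coefficient extraction giving $\binom{M+d-1}{k}$, and the prefactor collapse to $\binom{M-k+d-1}{d-1}^{-1}$, and all are right. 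The trade-off: the paper's argument recycles machinery it needs anyway (Lemma \ref{lem:eta} also feeds Lemma \ref{lem:subspace}) and avoids multi-block bookkeeping, but is indirect and leans on the "depends only on $x\cdot y$" claim being evident; your argument is self-contained and elementary, makes both the independence from the choice of $\ket{\Phi}$ and the dependence only on $x\cdot y$ manifest, and your intermediate identity is actually stronger than the lemma -- it is essentially the closure statement of Lemma \ref{lem:subspace} in disguise (equivalently, it follows from Schur's lemma on the irreducible symmetric subspace, with your computation fixing the constant), so it could be used to streamline that later step as well. The price is exactly the bookkeeping you flag: the four-block decomposition and the multiplicity factors must be tracked carefully, but as set out they do all cancel as claimed.
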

\begin{proof}{
Clearly, the value $\braket{\psi_x}{\psi_y}$ will only depend on the value $x\cdot y$, and not on the specific choices of $x$ and $y$. We will prove the value by induction. As a base case, take $x\cdot y=M$, i.e. $x=y$. Evidently, $\braket{\psi_x}{\psi_y}=1$, as predicted. For the inductive step, assume this formula holds for all values of $x\cdot y=k+1,\ldots, M$, and we aim to show that it holds for $x\cdot y=k$.

Select $x$ and $y$ such that $x\cdot y=k$. Now consider the normalised state $\ket{\eta_{x,y}}$ of Lemma \ref{lem:eta}:
$$
\ket{\eta_{x,y}}=\sqrt{\frac{\binom{M+d-1}{M}\binom{M-k+d-1}{d-1}}{\binom{2M-k+d-1}{d-1}}}\frac{1}{\binom{2M-k}{M}}\sum_{\substack{z\in\{0,1\}^N\\ w_z=M\\ z\cdot(x\cup y)=M}}\ket{\psi_z}.
$$
Taking the inner product gives
$$
\frac{\binom{2M-k}{M}\binom{2M-k+d-1}{d-1}}{\binom{M+d-1}{M}\binom{M-k+d-1}{d-1}}={\textstyle\binom{M}{k}}\braket{\psi_x}{\psi_y}+\sum_{q=k+1}^M\frac{\binom{M}{q}\binom{M-k}{M-q}}{\binom{M-q+d-1}{d-1}},
$$
making use of the inductive assumption to give the denominator of the final term. Hence,
$$
\frac{\binom{2M-k+d-1}{M}}{\binom{M+d-1}{M}}={\textstyle\binom{M}{k}}\left(\braket{\psi_x}{\psi_y}-\frac{1}{\binom{M-k+d-1}{d-1}}\right)+\frac{\binom{2M-k+d-1}{M}}{\binom{M+d-1}{M}}.
$$
Rearranging gives the desired result:
$$
\braket{\psi_x}{\psi_y}=\frac{1}{\binom{M-k+d-1}{d-1}}.
$$
}\end{proof}

\begin{lemma}\label{lem:subspace}
$\mathcal{S}_{\text{special}}:=\text{span}\{\ket{\psi_y}\}$ is a closed subspace under the action of $\{R_x\}$. By fixing a $J_Z$ subspace for the symmetric state $\ket{\Phi}$, $\mathcal{S}_{\text{special}}$ is a subspace of fixed quantum number $M_Z$.
\end{lemma}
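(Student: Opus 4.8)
The plan is to handle the two assertions separately. The $M_Z$ statement is quick and I would do it first; closure of the span of the states $\ket{\psi_y}$ with $w_y=M$ under each $R_x$ (hence under $R$) is the substantive part.

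For the $M_Z$ claim, recall from Lemma~\ref{lem:commutator} that every $R_x$ commutes with the rotated operator $\widetilde J_Z:=(U_I^{\otimes M}\otimes\identity^{\otimes N})J_Z(U_I^{\dagger\,\otimes M}\otimes\identity^{\otimes N})=-J_Z^{\text{IN}}+J_Z^{\text{OUT}}$, so the $M_Z$ label of a vector is its $\widetilde J_Z$-eigenvalue. Write $\ket{\psi_y}=\ket{B_y^{(M)}}\ket{\Phi}_{\bar y}$ and split $J_Z^{\text{OUT}}=J_Z^{y}+J_Z^{\bar y}$. Because $\ket{B_y^{(M)}}=\frac{1}{\sqrt{\binom{M+d-1}{M}}}\sum_i\ket{\phi_i^{\text{IN}}}\ket{\phi_i^{y}}$ is built from a $J_Z$-eigenbasis in which the $i$-th symmetric basis state on IN and on $y$ carry the same $J_Z$-eigenvalue, one has $(J_Z^{\text{IN}}-J_Z^{y})\ket{B_y^{(M)}}=0$, hence $(-J_Z^{\text{IN}}+J_Z^{y})\ket{B_y^{(M)}}=0$. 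If $\ket{\Phi}$ is fixed in a definite $J_Z$-eigenspace, $J_Z^{\bar y}\ket{\Phi}_{\bar y}=\mu\ket{\Phi}_{\bar y}$, then $\widetilde J_Z\ket{\psi_y}=\mu\ket{\psi_y}$, with $\mu$ depending only on the chosen eigenspace and not on $y$; so every such $\ket{\psi_y}$ sits in one $M_Z$-eigenspace.

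For closure, I would compute $R_x\ket{\psi_y}$ for arbitrary $x\in\{0,1\}^N$ and $w_y=M$. Using $R_x^{T_{\text{IN}}}=\frac{\binom{M+d-1}{M}}{\binom{M+w_x+d-1}{M+w_x}}P_{\text{sym}}^{M+w_x}$ over the sites $\text{IN}\cup x$ together with $P_{\text{sym}}^{M+w_x}=\binom{M+w_x+d-1}{d-1}\int(U\proj{0}U^\dagger)^{\otimes(M+w_x)}\,dU$, one gets $R_x\propto\int(U^*\proj{0}U^T)^{\otimes M}_{\text{IN}}\otimes(U\proj{0}U^\dagger)^{\otimes w_x}_{x}\otimes\identity_{\bar x}\,dU$. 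Applying this to $\ket{B_y^{(M)}}_{\text{IN}\cup y}\ket{\Phi}_{\bar y}$ and using the manoeuvre from the proof of Lemma~\ref{lem:eta} --- absorbing the input-side projector into the Bell register, $(U^*\proj{0}U^T)^{\otimes M}_{\text{IN}}\ket{B_y^{(M)}}=(U\proj{0}U^\dagger)^{\otimes M}_{y}\ket{B_y^{(M)}}$, valid since $U\proj{0}U^\dagger$ is Hermitian and the symmetric basis may be taken real --- followed by idempotency of $U\proj{0}U^\dagger$ on the overlap $x\cap y$, the integrand collapses to $(U\proj{0}U^\dagger)^{\otimes w_{x\cup y}}_{x\cup y}\otimes\identity_{\bar x\cap\bar y}$ acting on $\ket{\psi_y}$. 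Performing the $U$-integral then gives $R_x\ket{\psi_y}\propto\bigl(\identity_{\text{IN}}\otimes P_{\text{sym}}^{x\cup y}\otimes\identity_{\bar x\cap\bar y}\bigr)\ket{\psi_y}$, which by Lemma~\ref{lem:eta} is proportional to $\ket{\eta_{x,y}}$.

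It then remains to show $\ket{\eta_{x,y}}\in\mathcal{S}_{\text{special}}$, and this is where I expect the only real care to be needed. Since $P_{\text{sym}}^{x\cup y}$ is a convex combination of permutation operators acting solely on OUT sites, it suffices that $\mathcal{S}_{\text{special}}$ be invariant under every permutation of the $N$ OUT sites, and since transpositions generate $\mathcal{S}_N$ I would check a transposition $\sigma$. If $\sigma$ swaps two sites of $y$, or two sites of $\bar y$, it fixes $\ket{\psi_y}$, the Bell register and $\ket{\Phi}$ being permutation invariant. If $\sigma$ swaps a site $n\in y$ with a site $s\in\bar y$, then expanding $\ket{B_y^{(M)}}$ along the $n$-register and $\ket{\Phi}_{\bar y}$ along the $s$-register and regrouping shows $\sigma\ket{\psi_y}=\ket{B_{y'}^{(M)}}_{\text{IN}\cup y'}\otimes\ket{\Phi'}_{\overline{y'}}$, with $y'=(y\setminus\{n\})\cup\{s\}$ (still of weight $M$) and $\ket{\Phi'}$ a symmetric state on $\overline{y'}$ --- i.e.\ another element $\ket{\psi_{y'}}$; the point is that the factorisation $\ket{B_y^{(M)}}\otimes\ket{\Phi}$ is what lets the two expansions recombine. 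Hence each $R_x$, and therefore $R=\sum_x\alpha_xR_x$, maps $\mathcal{S}_{\text{special}}$ into itself. The main obstacle is bookkeeping rather than conceptual: keeping straight which sites each $P_{\text{sym}}$ factor acts on during the Bell-register absorption, and checking that the cross-block transposition recombines cleanly.
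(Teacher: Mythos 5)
Your proof is correct, and it reaches the key identity by a more direct route than the paper. The paper identifies $R_x\ket{\psi_y}$ with $\ket{\eta_{x,y}}$ indirectly: it defines the normalised vector $\ket{\tilde\eta}\propto {P_{\text{sym}}^{\text{IN},x}}^{T_{\text{IN}}}\ket{\psi_y}$ and shows via two double Haar integrals (with the substitution $W=U^\dagger V$) that $\braket{\tilde\eta}{\tilde\eta}=\braket{\tilde\eta}{\eta_{x,y}}=1$, so the two unit vectors coincide; the statement that $\ket{\eta_{x,y}}$ lies in the span of $\{\ket{\psi_z}\}$ is then delegated to the expansion quoted in the proof of Lemma \ref{lem:inner}. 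You instead compute $R_x\ket{\psi_y}$ as a vector identity: absorb $(U^*\proj{0}U^T)^{\otimes M}_{\text{IN}}$ through the Bell register (legitimate, as you note, because the operator is Hermitian, preserves the symmetric subspace, and the symmetric basis can be taken real -- the same manoeuvre the paper uses in Lemma \ref{lem:eta}), use idempotency on $x\cap y$, and perform a single Haar integral to land directly on $\identity_{\text{IN}}\otimes P_{\text{sym}}^{x\cup y}\otimes\identity\,\ket{\psi_y}\propto\ket{\eta_{x,y}}$. Your approach buys two things: it avoids the overlap-saturation step, and it makes explicit two points the paper leaves implicit -- that each permutation of the sites $x\cup y$ maps $\ket{\psi_y}$ to $\ket{\psi_{\pi(y)}}$ (so $P_{\text{sym}}^{x\cup y}\ket{\psi_y}$ is manifestly in the span, for arbitrary $w_x$, not just $w_x=M$ as in Lemma \ref{lem:inner}), and that $\ket{\psi_y}$ is an eigenvector of the rotated $J_Z$ with eigenvalue set by $\ket{\Phi}$ alone, which is the fixed-$M_Z$ claim. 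One small sharpening: in your cross-block transposition step the output is not merely ``a symmetric state $\ket{\Phi'}$ on $\overline{y'}$'' but literally the same $\ket{\Phi}$ relocated (permutation invariance makes the factor ordering irrelevant); this matters because $\mathcal{S}_{\text{special}}$ and the fixed-$M_Z$ statement refer to a fixed $\ket{\Phi}$, and your own conclusion ``another element $\ket{\psi_{y'}}$'' already presupposes it.
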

\begin{proof}{
The previous Lemma conveyed that $\ket{\eta_{x,y}}$ is supported on $\mathcal{S}_{\text{special}}$. We now claim that $\ket{\tilde\eta_{x,y}}=\ket{\eta_{x,y}}$ where
$$
\ket{\tilde\eta_{x,y}}:=\frac{\sqrt{\binom{M+d-1}{M}\binom{w_x-x\cdot y+d-1}{d-1}\binom{w_x+M-x\cdot y+d-1}{d-1}}}{\binom{M+w_x+d-1}{d-1}}{P_{\text{sym}}^{(\text{IN},x),d}}^{T_{\text{IN}}}\ket{\psi_y},
$$
i.e.\ we aim to show that $\braket{\tilde\eta_{x,y}}{\tilde\eta_{x,y}}=1$ and $\braket{\tilde\eta_{x,y}}{\eta_{x,y}}=1$.
\begin{multline*}
\frac{\braket{\tilde\eta_{x,y}}{\tilde\eta_{x,y}}}{\binom{w_x-x\cdot y+d-1}{d-1}\binom{w_x+M-x\cdot y+d-1}{d-1}\binom{M+d-1}{M}}=\\\!\!\iint\!\!\bra{\psi_y}(U^*\proj{0}U^TV^*\proj{0}V^T)_{\text{IN}}\otimes (U\proj{0}U^\dagger V\proj{0}V^\dagger)_x\otimes\identity_{\bar x}\ket{\psi_y}dUdV
\end{multline*}
Moving the unitaries through the states $\ket{\Psi}$ where possible, this reduces to
\begin{multline*}
\frac{\braket{\tilde\eta_{x,y}}{\tilde\eta_{x,y}}}{\binom{w_x-x\cdot y+d-1}{d-1}\binom{w_x+M-x\cdot y+d-1}{d-1}}=\\\!\!\iint\!\!\bra{\Phi}_{\bar y}(U\proj{0}U^\dagger V\proj{0}V^\dagger)_{x\cap \bar y}\otimes\identity_{\bar x\cap \bar y}\ket{\Phi}_{\bar y}\bra{0}U^\dagger V\ket{0}^{M+w_x-x\cdot y}\bra{0}U^TV^*\ket{0}^MdUdV
\end{multline*}
We substitute $W=U^\dagger V$ (eliminating $V$) and integrate over $U$, giving
\begin{eqnarray*}
\braket{\tilde\eta_{x,y}}{\tilde\eta_{x,y}}&=&\textstyle\binom{w_x+M-x\cdot y+d-1}{d-1}\int|\bra{0}W\ket{0}|^{2(M+w_x-x\cdot y)}dW\\&=&1
\end{eqnarray*}
Calculation of $\braket{\eta_{x,y}}{\tilde\eta_{x,y}}$ follows an identical trajectory, yielding the desired result.
}\end{proof}

We therefore see that there are many degenerate spaces (different choices of the state $\ket{\Phi}$), and eigenvectors within these subspaces can be described by states
$$
\ket{\chi}=\sum_{\substack{x\in\{0,1\}^N\\ w_x=M}}\beta_x\ket{\psi_x}
$$
for coefficients $\{\beta_x\}$ which are normalized as
\begin{equation}
\sum_{x,z}\frac{\beta_x\beta_z}{\binom{M-x\cdot z+d-1}{d-1}}= 1\quad\Leftrightarrow\quad \underline{\beta}^TG_{\underline 0}^{(M)}\underline{\beta}= 1		\label{eqn:norm}
\end{equation}
by Lemma \ref{lem:inner}. Here we have grouped all the parameters $\beta_x$ into a single vector $\underline{\beta}$. Each of the individual fidelities $F_y=\bra{\chi}R_y\ket{\chi}$ can be evaluated as
$$
\bra{\chi}R_y\ket{\chi}={\textstyle{\binom{M+d-1}{M}}}\!\!\sum_{\substack{x,z\in\{0,1\}^N\\ w_x=w_z=M}}\!\!\beta_x\beta_z\frac{\bra{\psi_x}{P_{\text{sym}}^{(\text{IN},y),d}}^{T_{\text{IN}}}\ket{\psi_z}}{\binom{M+w_y+d-1}{d-1}}
$$
By Lemma \ref{lem:subspace}, this simplifies to
$$
F_y=\underline{\beta}^TG_y^{(M)}\underline{\beta}.
$$
Note that in the next section we will prove that the maximum eigenvector will correspond to all the entries $\beta_x\geq 0$, so we can use $\underline{\beta}^T$ instead of $\underline{\beta}^\dagger$ without loss of generality.

\subsubsection{Lieb-Mattis Theorem} \label{sec:liebmattis}

So far, we have demonstrated that the maximum eigenvalue of the matrix $R$ is an upper bound to the optimal cloning fidelity. We have studied a particular subspace of $R$, $\mathcal{S}_{\text{special}}$, so it remains to prove that this subspace contains the maximum eigenvector of $R$, and that this fidelity can be achieved. We do this by modifying the Lieb-Mattis Theorem \cite{lieb1962}.

\begin{lemma}\label{lem:liebmattis}
The maximum eigenvector of $R$ within a given $M_Z$ subspace has non-negative coefficients on all the basis states.
\end{lemma}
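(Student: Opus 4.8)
The plan is to recognise this as a Perron--Frobenius statement in disguise. The first step is to observe that $R$ has non-negative matrix elements in the computational basis of $\text{IN}\otimes\text{OUT}$. Comparing the integral definition of $R_x$ with Lemma~\ref{lem:PT}, and using that $\proj{0}$ is real and that the Haar measure is left/right invariant, one finds that $R_x$ equals $d'$ times the operator $\rho_{\text{IN},\text{OUT}}$ of Lemma~\ref{lem:PT} (with $w=w_x$) tensored with $\identity_{\bar x}$. Corollary~\ref{lem:positive} states precisely that the matrix elements of that operator are non-negative; tensoring with $\identity_{\bar x}$ and scaling by $d'>0$ preserves this, and since $\alpha_x\geq 0$ the sum $R=\sum_x\alpha_xR_x$ inherits it too. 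So $R$ is a real, symmetric matrix with non-negative entries.

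The second step handles the restriction to a fixed $M_Z$ sector. By Lemma~\ref{lem:commutator}, $R$ commutes not with $J_Z$ itself but with the rotated operator $\tilde J_Z=(U_I^{\otimes M}\otimes\identity^{\otimes N})J_Z((U_I^\dagger)^{\otimes M}\otimes\identity^{\otimes N})$, the rotation acting only on the $M$ input spins. The crucial point is that $U_I=\sum_n(-1)^n\ket{n}\bra{d-1-n}$ is a \emph{signed permutation} matrix, so conjugating the diagonal operator $S^{Z,d}$ by $U_I$ leaves it diagonal; hence $\tilde J_Z$ is still diagonal in the computational basis. Consequently each $M_Z$ eigenspace of $\tilde J_Z$ is spanned by a subset of computational basis vectors, and the restriction of $R$ to that eigenspace is a principal submatrix of $R$ --- still real, symmetric, and entrywise non-negative.

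The third step is the Perron--Frobenius conclusion. For any entrywise non-negative matrix the spectral radius is itself an eigenvalue and has a non-negative eigenvector, and for a real symmetric matrix this spectral radius equals the largest eigenvalue (it is a non-negative number that occurs as an eigenvalue, so no eigenvalue exceeds it). Applying this to the restriction of $R$ to the chosen $M_Z$ sector produces a maximum eigenvector whose coefficients on every basis state of that sector are non-negative, which is exactly the assertion.

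The step I expect to be the real obstacle --- and the reason this is a \emph{modification} of Lieb--Mattis rather than a direct citation --- is making the symmetry sectors and the positivity basis coincide. In the original Lieb--Mattis setting one performs a sublattice rotation to fix signs and recover positivity; here positivity is handed to us for free by the partial-transpose construction of Lemma~\ref{lem:PT} via Corollary~\ref{lem:positive}, but that same construction replaces $J_Z$ by $\tilde J_Z$, and one has to check that the rotation relating them is benign enough (a signed permutation) that the $\tilde J_Z$-sectors remain spanned by computational basis states. Were $U_I$ a generic unitary this would fail. A secondary, milder point: if the restricted matrix is reducible, Perron--Frobenius still supplies a non-negative top eigenvector (possibly non-unique and not strictly positive), which suffices for the claim as stated; upgrading to strict positivity would require an irreducibility/connectivity analysis of the associated graph, which is not needed here.
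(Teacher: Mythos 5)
Your proposal is correct and rests on the same two ingredients as the paper's own proof --- entrywise non-negativity of $R$ in the computational basis (Corollary \ref{lem:positive}) together with the fact that the relevant $M_Z$ sectors are spanned by computational basis states --- the only difference being that you invoke Perron--Frobenius for the last step, whereas the paper reproves exactly that step inline via the modulus/variational (Lieb--Mattis) argument. Your explicit check that $U_I$ is a signed permutation, so that the rotated $J_Z$ remains diagonal and the sectors stay basis-aligned, is a point the paper leaves implicit, but it does not change the substance of the argument.
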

\begin{proof}{
Divide $R$ into two components, the diagonal elements ($R_d$) and the remaining, off-diagonal, elements ($R_o$). In the computational basis, $\ket{a}$, we have $\bra{a}R_d\ket{a}=e_a$ and $\bra{a}R_o\ket{b}=K_{ab}$. Note that $K_{ab}\geq 0$ by Corollary \ref{lem:positive} and $K_{aa}=0$ by definition. Assume that in a particular excitation subspace, $M_Z$, we know the maximum eigenvector,
$$
\ket{\chi}=\sum_a f_a\ket{a},
$$
with eigenvalue $E_{M_Z}$. Hence,
$$
\sum_b K_{ab}f_b=(E_{M_Z}-e_a)f_a.
$$
Any other state must have a smaller expectation value of $R$, unless it is also a maximum eigenvector. Let us first try a state $\ket{a}$ as the ansatz. This reveals $e_a\leq E_{M_Z}$. Hence, we can take the modulus of the above equation,
$$
\left|\sum_bK_{ab}f_b\right|=(E_{M_Z}-e_a)|f_a|.
$$
Next, consideration of the ansatz state
$$
\ket{\tilde\chi}=\sum_a|f_a|\ket{a}
$$
imposes that there is at least one non-zero $f_a$ such that
$$
\sum_bK_{ab}|f_b|\leq(E_{M_Z}-e_a)|f_a|
$$
but since $\sum_bK_{ab}|f_b|\geq\left|\sum_bK_{ab}f_b\right|$, this can only be satisfied with equality for every $a$, meaning that, up to a global phase factor, the coefficients of the maximum eigenvector in each $M_Z$ subspace satisfy
$
f_a\geq 0.
$ 
}\end{proof}

\begin{corollary} \label{cor:important}
The maximum eigenvector of $R$ is contained within the span of states of $\mathcal{S}_{\text{special}}$.
\end{corollary}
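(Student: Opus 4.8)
\emph{The plan} is to couple the preceding Lieb--Mattis-type lemma (which supplies Perron--Frobenius positivity \emph{within} a fixed $M_Z$ block) with a Lieb--Mattis comparison \emph{between} the blocks, and then to use the $R$-invariance of $\mathcal{S}_{\text{special}}$ from Lemma \ref{lem:subspace} to project the global maximum eigenvector into it. Here ``$M_Z$ block'' means an eigenspace of $T_Z:=-\sum_{\text{IN}}S^Z+\sum_{\text{OUT}}S^Z$, which is diagonal in the computational basis; the partial-transpose manipulation in Lemma \ref{lem:commutator} shows that $R$ commutes with the entire $su(2)$ generated by $T_Z$, hence with its ladder operators $T_\pm:=-\sum_{\text{IN}}S^{\mp}+\sum_{\text{OUT}}S^{\pm}$ (which shift $M_Z$ by $\pm2$) and with the Casimir $T^2$. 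Crucially, $T_\pm$ are \emph{not} entrywise non-negative, so they will not transport the Perron--Frobenius structure from block to block.

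\emph{Step 1 (locating the maximal block).} Fix $M_Z>0$ and let $\ket{\chi_{M_Z}}$ be the maximum eigenvector of $R$ in that block. By the preceding lemma and the Perron--Frobenius theorem (the off-diagonal entries of $R$ are non-negative by Corollary \ref{lem:positive}, and irreducible inside a block), $\ket{\chi_{M_Z}}$ is non-degenerate; since $[R,T^2]=0$ it therefore carries a definite total ``spin'' $S\geq|M_Z|/2>0$. Were $T_-\ket{\chi_{M_Z}}$ zero, $\ket{\chi_{M_Z}}$ would be a lowest-weight vector and hence satisfy $M_Z=-S<0$, which is impossible; so $T_-\ket{\chi_{M_Z}}\neq0$, and since $[R,T_-]=0$ it is an eigenvector of $R$ in the $(M_Z-2)$ block with the same eigenvalue. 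Thus the block-maxima satisfy $\lambda^{(M_Z-2)}_{\max}(R)\geq\lambda^{(M_Z)}_{\max}(R)$ for all $M_Z>0$, and symmetrically (using $T_+$) $\lambda^{(M_Z+2)}_{\max}(R)\geq\lambda^{(M_Z)}_{\max}(R)$ for $M_Z<0$; a simultaneous spin flip on all sites commutes with $R$ and exchanges the blocks $M_Z$ and $-M_Z$, so the two halves match up. Hence $\lambda_{\max}(R)$ is attained in the block $M_Z^0$ with $|M_Z^0|\leq1$.

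\emph{Step 2 (reaching $\mathcal{S}_{\text{special}}$).} Take $\ket{\Phi}$ to be a symmetric state of the $N-M$ spins on $\bar y$ with non-negative computational-basis amplitudes (a Dicke state), chosen in the $J_Z$-subspace of eigenvalue $M_Z^0$: this is possible since an $(N-M)$-spin symmetric state realises every $J_Z$ value of parity $(N-M)(d-1)$ in $[-(N-M)(d-1),(N-M)(d-1)]$, a set that contains $M_Z^0$, whose parity is also that of $(N-M)(d-1)$. With the $\ket{\phi_i}$ likewise taken as Dicke states, each $\ket{\psi_y}=\ket{B_y^{(M)}}\ket{\Phi}_{\bar y}$, and hence the witness $\ket{v}:=\sum_{y:w_y=M}\ket{\psi_y}$, has non-negative coefficients, and by Lemma \ref{lem:subspace} the whole of $\mathcal{S}_{\text{special}}$ lies in the $M_Z^0$ block. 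Let $\ket{\chi}$ be the maximum eigenvector of $R$ in that block, taken non-negative by the preceding lemma and, by Perron--Frobenius, strictly positive on every basis state of the block; then $\braket{v}{\chi}>0$. Since $\mathcal{S}_{\text{special}}$ is $R$-invariant and $R$ Hermitian, its orthogonal projector $\Pi$ commutes with $R$, so $\Pi\ket{\chi}$ is again an eigenvector of $R$ for eigenvalue $\lambda_{\max}(R)$, and it is non-zero because $\bra{v}\Pi\ket{\chi}=\braket{v}{\chi}\neq0$; by non-degeneracy $\Pi\ket{\chi}\propto\ket{\chi}$, i.e.\ $\ket{\chi}\in\mathcal{S}_{\text{special}}$. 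Finally, because $T_\pm$ send each $\ket{\psi_y}$ to $\ket{B_y^{(M)}}$ tensored with a symmetric state of $\bar y$, the span of all the $\mathcal{S}_{\text{special}}$'s (over $\ket{\Phi}$) is $T_\pm$-invariant, and therefore also contains every other eigenvector in the maximal $T^2$-multiplet.

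\emph{Main obstacle.} The delicate part is Step 1: the ladder operators that commute with $R$ are the twisted operators $T_\pm$, which are not entrywise non-negative, so the Lieb--Mattis comparison cannot be run by carrying non-negative trial vectors between blocks; instead one pins the total spin of the block-maximal eigenvector via non-degeneracy and argues only that $T_-$ cannot annihilate it. The single genuinely technical input is the irreducibility of $R$ restricted to an $M_Z$ block (needed both to sharpen the preceding lemma's ``$\geq0$'' to ``$>0$'' and to obtain non-degeneracy); everything else is bookkeeping with Lemmas \ref{lem:subspace}, \ref{lem:commutator} and Corollary \ref{lem:positive}.
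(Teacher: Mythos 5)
Your architecture largely parallels the paper's (non-negativity of the block maximizer from the Lieb--Mattis lemma, a positive witness built inside $\mathcal{S}_{\text{special}}$, $R$-invariance of $\mathcal{S}_{\text{special}}$ from Lemma \ref{lem:subspace}, and the twisted $su(2)$ structure to control the outlying $M_Z$ blocks), but the step you yourself flag as the ``single genuinely technical input'' --- irreducibility of $R$ restricted to an $M_Z$ block --- is not merely unproven, it is false in general, and Step 2 collapses without it. The reason is that $R$ conserves much more than $M_Z$: writing $\bra{a,b}R_x\ket{a',b'}\propto\bra{a',b_x}P_{\text{sym}}^{\text{IN},x}\ket{a,b'_x}\,\delta_{b_{\bar x},b'_{\bar x}}$, one sees that for every letter $c\in\{0,\dots,d-1\}$ the difference (number of $c$'s in the OUT string) minus (number of $c$'s in the IN string) is conserved by every $R_x$. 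Each $M_Z$ block therefore splits into finer invariant sectors, and in particular the basis states $\ket{a}\ket{b}$ in which the input multiset $a$ is \emph{not} contained in the output multiset $b$ --- exactly the states the paper's proof has to discount --- are disconnected from the sector supporting $\mathcal{S}_{\text{special}}$. Concretely, for $d=3$, $M=1$, $N=2$, the block containing $\ket{0}\ket{00}$, $\ket{1}\ket{01}$, $\ket{1}\ket{10}$, $\ket{2}\ket{02}$, $\ket{2}\ket{20}$ also contains $\ket{2}\ket{11}$, and $\bra{2,11}R_x\ket{\cdot}$ vanishes on every other basis state of the block for every $x$ and every choice of $\{\alpha_x\}$ (reducibility can also occur for $d=2$ whenever some $\alpha_x$ vanish). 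Hence Perron--Frobenius gives you neither strict positivity on all basis states of the block nor non-degeneracy, and the linchpin of Step 2, $\braket{v}{\chi}>0$, is unjustified: a non-negative block maximizer could a priori be supported entirely on a disconnected sector orthogonal to your witness $\ket{v}$.

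This is precisely the contingency the paper's proof addresses separately: it argues that the ``$a\not\subset b$'' components cannot contribute to the optimal cloner (they amount to cloning fewer than $M$ copies, with outputs orthogonal to the input), and only then does the positive-overlap argument with $\ket{\chi_{\text{sym}}}$ go through. Your proposal needs an analogous comparison between these finer sectors; without it the proof is incomplete. By contrast, Step 1 is repairable: you do not need non-degeneracy --- take any block maximizer, pick one of its definite-$S$ components (which is again a maximizer since $[R,T^2]=0$), note $S\geq|M_Z|$ so it is not a lowest/highest-weight vector, and apply $T_\mp$. With that fix, your ladder-operator route to confining the maximum to the central blocks is a legitimate, mildly different packaging of the paper's ``any such eigenvalue is degenerate in $M_Z$'' argument. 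The genuine gap is thus confined to, but fatal for, the strict-positivity claim underpinning Step 2.
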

\begin{proof}{
Consider the state $\ket{\chi_{\text{sym}}}$ for which all the $\beta_x$ are chosen to be equal, and the state $\ket{\Phi}$ is chosen to be the uniform superposition of all basis states of $N-M$ spins within a fixed $M_Z$ subspace (total excitation number). Overall, this state contains all basis states $\ket{a}\ket{b}$ for $a\in\{0,1\}^M$ and $b\in\{0,1\}^N$ provided $a$ exists as a subset of $b$, for a fixed value of $M_Z=M_Z(b)-M_Z(a)$, and has a positive amplitude on all such states. Note that the remaining basis states cannot contribute to the optimal cloner: the output of the cloning map, $\ket{b}$, when applied to an input $\ket{a}$, would be orthogonal to the input for all possible clones. We can therefore discount these states (intuitively, if a subset of $a$ of size $\tilde M<M$ can be found as a subset of $b$, this corresponds to $\tilde M\rightarrow N$ cloning, which must be worse than $M\rightarrow N$ cloning). From Lemma \ref{lem:liebmattis}, the maximum eigenvector $\ket{\chi}$ for a given $M_Z$ has $\braket{\chi}{\chi_{\text{sym}}}>0$. Since $\ket{\chi_{\text{sym}}}$ is in $\mathcal{S}_{\text{special}}$, the only way that this can happen is if $\ket{\chi}\in\mathcal{S}_{\text{special}}$.

This establishes that for the subspaces $M_Z=-\half(d-1)(N-M),\ldots ,\half(d-1)(N-M)$, the maximum eigenvector is found in $\mathcal{S}_{\text{special}}$. The maximum eigenvalue in the other subspaces cannot be larger -- the eigenvector corresponding to the larger eigenvalue would have to be drawn from a subspace of $S>\half(d-1)(N-M)$. However, any such eigenvalue is degenerate in $M_Z$ so it would also be present in all other $M_Z$ subspaces from $-S$ to $S$, in particular in the ones in which we have already found a different maximum eigenvector.
}\end{proof}

In principle, we now have an upper bound on the cloning fidelity. Can this be achieved? We invoke Lemma \ref{lem:achieve}, where, although the coefficients $\{\beta_x\}$ are fixed, we are free to pick $\ket{\Phi}$ to be an arbitrary symmetric state, or a mixture thereof.

\begin{lemma}\label{lem:upper}
For the upper bound on the cloning fidelity to be achievable, it suffices to find a symmetric state $\ket{\Phi}$ of $N-M$ spin $d$ systems, or mixture thereof, such that
$$
\sum_{\Phi}p_{\Phi}\Tr_{\min(M+1,N-M+1),\ldots,N-M}\proj{\Phi}=\frac{P_{\text{sym}}^{\min(M,N-M),d}}{\binom{\min(M,N-M)+d-1}{d-1}}.
$$
\end{lemma}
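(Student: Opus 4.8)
The plan is to read the statement off the achievability criterion of Lemma~\ref{lem:achieve} together with the description of the top eigenspace in Corollary~\ref{cor:important}. By that corollary (and the discussion following it) every maximum eigenvector of $R$ may be taken in the form $\ket{\chi^{[\Phi]}}=\sum_{x:\,w_x=M}\beta_x\ket{B_x^{(M)}}\ket{\Phi}_{\bar x}$ with the coefficients $\{\beta_x\}$ fixed and $\ket{\Phi}$ an arbitrary symmetric state of the $N-M$ sites on $\bar x$ (and mixtures of such vectors are admissible). Since $\braket{\psi_x}{\psi_y}$ does not depend on $\ket{\Phi}$ (Lemma~\ref{lem:inner}), $\braket{\chi^{[\Phi]}}{\chi^{[\Phi]}}=\underline{\beta}^TG_0^{(M)}\underline{\beta}=1$ for every symmetric $\ket{\Phi}$, so each $\ket{\chi^{[\Phi]}}$ is a legitimate normalised eigenvector. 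Lemma~\ref{lem:achieve} then reduces attaining the upper bound to exhibiting a (possibly mixed) symmetric $\ket{\Phi}$ with $\Tr_{\text{OUT}}(\proj{\chi^{[\Phi]}})=P_{\text{sym}}^M/\binom{M+d-1}{M}$, and the job is to convert this into the stated condition on $\ket{\Phi}$ alone.

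First I would expand the Bell states to obtain, with $D=\binom{M+d-1}{M}$,
$$\Tr_{\text{OUT}}\left(\proj{\chi^{[\Phi]}}\right)=\frac{1}{D}\sum_{i,j}\ket{\phi_i^{\text{IN}}}\bra{\phi_j^{\text{IN}}}\sum_{\substack{x,z\\ w_x=w_z=M}}\beta_x\beta_z\,\bra{\phi_j^z}\bra{\Phi}_{\bar z}\,\ket{\phi_i^x}\ket{\Phi}_{\bar x},$$
the $x=z$ terms contributing $\delta_{ij}$. For $x\neq z$ I would split the $N$ output sites into the four disjoint blocks $x\cap z$, $x\cap\bar z$, $\bar x\cap z$, $\bar x\cap\bar z$ and use the Clebsch--Gordan-type factorisation of a symmetric state across a partition of its sites to expand each of $\ket{\phi_i^x},\ket{\phi_j^z},\ket{\Phi}_{\bar x},\ket{\Phi}_{\bar z}$ as a sum of products of symmetric states on those blocks. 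Carrying out the contraction, the symmetric-state data supported on $x\cup z$ cancels in a fixed, $\Phi$-independent fashion, and the only residual dependence on $\ket{\Phi}$ is through the reduced density matrix of $\proj{\Phi}$ on $w_x-x\cdot z=M-x\cdot z$ sites; since $x\cdot z\ge 2M-N$ this is always a further reduced state of $\rho_\Phi:=\Tr_{M+1,\ldots,N-M}\proj{\Phi}$. Hence $\Tr_{\text{OUT}}(\proj{\chi^{[\Phi]}})=\mathcal{A}(\rho_\Phi)$ for a single fixed linear map $\mathcal{A}$ from states on the $M$-site symmetric subspace to states on IN, and by linearity the same holds for mixtures of the $\ket{\chi^{[\Phi]}}$.

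It remains to pin down $\mathcal{A}$ on the maximally mixed symmetric state, and two facts do this. (i) $\mathcal{A}$ is trace preserving, since $\Tr\,\Tr_{\text{OUT}}\proj{\chi^{[\Phi]}}=\braket{\chi^{[\Phi]}}{\chi^{[\Phi]}}=1$. (ii) $\mathcal{A}$ is $SU(d)$-covariant: because each $\ket{B_x^{(M)}}$ is invariant under $A\otimes\bar A$ for $A$ acting within the symmetric subspace (Lemma~\ref{lem:transpose}), one has $(U_{\text{IN}}^{\otimes M}\otimes\bar U_{\text{OUT}}^{\otimes N})\ket{\chi^{[\Phi]}}=\ket{\chi^{[\bar U^{\otimes(N-M)}\Phi]}}$, and tracing out OUT gives $\mathcal{A}\left(\bar U^{\otimes M}\rho\,\bar U^{\dagger\otimes M}\right)=U^{\otimes M}\mathcal{A}(\rho)U^{\dagger\otimes M}$. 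As $P_{\text{sym}}^M$ spans the unique invariant for the adjoint action on the $M$-site symmetric subspace, Schur's lemma forces $\mathcal{A}(P_{\text{sym}}^M)\propto P_{\text{sym}}^M$, and (i) fixes the constant: $\mathcal{A}\left(P_{\text{sym}}^M/D\right)=P_{\text{sym}}^M/D$. Therefore, whenever $\ket{\Phi}$ (or a mixture) is chosen so that $\rho_\Phi=P_{\text{sym}}^M/\binom{M+d-1}{M}$ — which is possible, e.g.\ taking the maximally mixed symmetric state of $N-M$ sites, whose reduction onto $M$ sites is again the maximally mixed symmetric state — we get $\Tr_{\text{OUT}}(\rho)=P_{\text{sym}}^M/\binom{M+d-1}{M}$, and Lemma~\ref{lem:achieve} delivers the optimal fidelity.

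The crux is the middle paragraph: showing that the cross-terms $\bra{\phi_j^z}\bra{\Phi}_{\bar z}\ket{\phi_i^x}\ket{\Phi}_{\bar x}$ genuinely depend on $\ket{\Phi}$ only through $\rho_\Phi$. This requires the block-factorisation of symmetric states together with honest bookkeeping of which reduced state of $\proj{\Phi}$ appears at each overlap value $x\cdot z$ (using that reduced states of $\rho_\Phi$ onto fewer than $M$ sites are themselves reductions of $\rho_\Phi$); the covariance and trace arguments that follow are routine. I would also flag that the statement as displayed tacitly presumes $N\ge 2M$, so that $\rho_\Phi$ is a bona fide $M$-site state; for $N<2M$ the identical argument applies with the roles of $M$ and $N-M$ interchanged.
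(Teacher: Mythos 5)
Your middle-paragraph bookkeeping is sound and is essentially the paper's own central identity: after the block factorisation, the cross terms depend on $\ket{\Phi}$ only through $\bra{\phi_i^{(M)}}\Tr_{N-2M+x\cdot z}(\proj{\Phi})\otimes\identity\ket{\phi_j^{(M)}}$, so your map $\mathcal{A}$ is well defined. The problem is what you do with it: the covariance/Schur argument proves only the converse of the lemma. You show that if $\rho_\Phi=P_{\text{sym}}^M/\binom{M+d-1}{M}$ then the criterion of Lemma \ref{lem:achieve} is met. But the lemma as stated (``we require''), and as it is used downstream -- the $N\geq 3M$ necessary condition for economical cloning in Corollary \ref{cor:1}, and the proof that economical cloning is impossible for $d=2$, $M\geq 4$, both rest on it -- is the necessity direction: attaining the bound forces the (mixture-averaged) reduced state of $\ket{\Phi}$ on $M$ sites to be the maximally mixed symmetric state. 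Your argument cannot simply be reversed, because that would require $\mathcal{A}$ to be injective on the relevant set, and it is not in general: if the optimal $\underline{\beta}$ happens to be supported on a single string $x$ (pure teleportation onto those $M$ output sites), every cross term vanishes, $\mathcal{A}$ is constant, and the condition on $\ket{\Phi}$ is not forced at all -- Schur's lemma gives you nothing in the backwards direction.

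The paper obtains necessity by a different mechanism, which is where the real work sits: because the relevant matrix elements are non-negative (the Lieb--Mattis input), the off-diagonal contributions $\braket{\lambda_i(\Phi)}{\lambda_j(\Phi)}$ must vanish term by term in the mixture, and imposing the diagonal conditions separately at each overlap value $x\cdot z=k$ forces $\Tr_{N-M-k}\left(\sum_\Phi p_\Phi\proj{\Phi}\right)=P_{\text{sym}}^{k}/\binom{k+d-1}{k}$ for every $k=0,\ldots,M$, of which the $k=M$ case implies the rest. To complete your proposal you would need to supply this direction (or an injectivity argument under an explicit non-degeneracy assumption on $\underline{\beta}$). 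As it stands, your sufficiency half together with the observation that the maximally mixed symmetric state of $N-M$ spins satisfies the condition is really the content of Corollary \ref{cor:1}, not of Lemma \ref{lem:upper}; the flag about $N\geq 2M$ is a fair observation but is not the missing piece.
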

\begin{proof}{
If the upper bound is to be achieved, then Lemma \ref{lem:achieve} requires that there exists $\{(p_{\Phi},\ket{\chi(\Phi)})\}$ satisfying $\sum_{\Phi}p_{\Phi}=1$ such that
$$
\sum_{\Phi}p_{\Phi}\Tr_{\text{OUT}}\proj{\chi(\Phi)}=\frac{1}{\binom{M+d-1}{M}}P^{M,d}_{\text{sym}}.
$$
This is equivalent to
\begin{equation}
\sum p_\Phi\braket{\lambda_i(\Phi)}{\lambda_j(\Phi)}=\delta_{i,j}		\label{eqn:quickref}	
\end{equation}
where
$$
\ket{\lambda_i(\Phi)}=\sum_x\beta_x\ket{\phi_i^x}\ket{\Phi}_{\bar x}.
$$
Recalling from Lemma \ref{lem:liebmattis} that all the matrix elements must be non-negative (which includes the $\{\beta_x\}$), this reduces the condition on the off-diagonal elements to
$$
\bra{\phi_i^{(M)}}\Tr_{N-2M+x\cdot z}(\proj{\Phi})\otimes\identity_{x\cdot z}\ket{\phi_j^{(M)}}=0\qquad \forall i\neq j,\forall x,z\in\{0,1\}^N:w_x=w_z=M.
$$
We have dropped an explicit enumeration of which spins the states apply to because this purely mathematical manipulation has removed such a direct connection, and the partial trace is thus just taken over any set of $N-2M+x\cdot z$ spins. This must be true for all values of $x\cdot z=\max(0,2M-N),\ldots,M$ and for all $i=1,\ldots\binom{M+d-1}{M}$, which in turn imposes that
$\Tr_{N-M-k}\left(\sum_{\Phi} p_\Phi\proj{\Phi}\right)$
is diagonal in the symmetric subspace for $k=0,1,\ldots \min(M,N-M)$. Now, observe that if we set $\rho_{\Phi}=\sum_{\Phi}p_{\Phi}\proj{\Phi}$, and
$$\Tr_{N-M-\min(M,N-M)}\left(\rho_{\Phi}\right)=\frac{P_{\text{sym}}^{\min(M,N-M),d}}{\binom{\min(M,N-M)+d-1}{d-1}},$$
then
$$\Tr_{N-M-k}\left(\rho_{\Phi}\right)=\frac{P_{\text{sym}}^{k,d}}{\binom{k+d-1}{k}}\qquad\forall k$$
and we also get the correct value for the diagonal elements, $i=j$ in Eq.\ (\ref{eqn:quickref}):
\begin{eqnarray*}
\sum p_\Phi\braket{\lambda_i(\Phi)}{\lambda_i(\Phi)}&=&\sum_{k=0}^{\min(M,N-M)}\bra{\phi_i^{(M)}}\Tr_{N-M-k}(\rho_{\Phi})\otimes\identity_{M-k}\ket{\phi_i^{(M)}}\sum_{x,z:x\cdot z=M-k}\beta_x\beta_z	\\
&=&\sum_{k=0}^{\min(M,N-M)}\bra{\phi_i^{(M)}}\frac{P_{\text{sym}}^{k,d}}{\binom{k+d-1}{k}}\otimes\identity_{M-k}\ket{\phi_i^{(M)}}\sum_{x,z:x\cdot z=M-k}\beta_x\beta_z	\\
&=&\sum_{k=0}^{\min(M,N-M)}\frac{\braket{\phi_i^{(M)}}{\phi_i^{(M)}}}{\binom{k+d-1}{k}}\sum_{x,z:x\cdot z=M-k}\beta_x\beta_z	\\
&=&\underline{\beta}^T\cdot G_{\underline{0}}^{(M)}\cdot\underline{\beta}	\\
&=& 1
\end{eqnarray*}
by virtue of the normalisation condition, Eq.\ (\ref{eqn:norm}). 
}\end{proof}

\begin{corollary}\label{cor:1}
The optimal cloning fidelity can always be achieved.
\end{corollary}
\begin{proof}{
We can pick a mixture of symmetric states
$$
\rho_{\Phi}=\frac{P_\text{sym}^{N-M,d}}{\binom{N-M+d-1}{d-1}},
$$
which clearly satisfies the condition of Lemma \ref{lem:upper}, so we can certainly always realise the optimal cloning transformation. However, this version requires the use of an ancilla spin (or subspace) of dimension  $\binom{N-M+d-1}{d-1}$ if one wants to describe a pure state. For example, in place of $\ket{\Phi_{\bar x}}$, one uses a maximally entangled state (within the symmetric subspace) between the $\bar x$ spins of the out space and the equivalent number ($N-M$) ancilla spins.
}\end{proof}

This concludes the proof of Thm. \ref{thm:main}; we have proven an upper bound on the achievable fidelity, and shown that this bound can always be obtained. We are now in the position that, given a set $\{\alpha_x\}$, we can find the corresponding optimal cloning fidelities, simply by solving for the maximum eigenvector of an $\binom{N}{M}\times\binom{N}{M}$ matrix. However, being given the set $\alpha_x$ is an unnatural setting, and was merely a mathematical convenience. It would be far more useful to be able to either describe the region of achievable cloning fidelities and how they trade-off against each other, or to be able to ascertain whether a given set of fidelities are achievable.

The former question makes most sense when the possible fidelities are constrained. For example, if we consider the set of fidelities $\Lambda=\{y\in\{0,1\}^N:w_y=L\}$, then instead of trying to find the maximum eigenvector as a function of $\{\alpha_x\}$ and eliminating them, we use the fidelities to eliminate the $\{\beta_x\}$ from the normalization condition.

\subsection{The Economy of Cloning}
Economical cloning means that the cloning transformation can be realised without the use of ancillas. In the previous section, we saw how the optimal cloning transformation could be realised, but the construction required $N-M$ ancilla qudits, when using their symmetric subspace, or, equivalently, a single ancilla of dimension $\binom{N-M+d-1}{d-1}$. Under what circumstances can the ancilla be removed?

\begin{lemma}\label{cor:2}
Economical cloning is impossible if $N< 3M$.
\end{lemma}
\begin{proof}{
For economy, we demand the existence of a pure state $\ket{\Phi}$. However, $\rho_{\Phi}$ must satisfy
$$
\Tr_{\min(M+1,N-M+1),\ldots,N-M}\rho_{\Phi}=\frac{P_{\text{sym}}^{\min(M,N-M),d}}{\binom{\min(M,N-M)+d-1}{d-1}}.
$$
Lemma \ref{lem:upper} proved this was sufficient, but here we claim necessity. We already know that $\Tr_{\min(M+1,N-M+1),\ldots,N-M}\rho_{\Phi}$ must be diagonal in the symmetric subspace, so let
$$\Tr_{\min(M+1,N-M+1),\ldots,N-M}\rho_{\Phi}=\sum_{n=0}^{\binom{\min(M,N-M)+d-1}{d-1}}\eta_n\proj{\phi^{(\min(M+1,N-M+1))}_n}.$$
Given that we require $\sum_{\Phi}p_{\Phi}\braket{\lambda_i(\Phi)}{\lambda_i(\Phi)}=1$ for all $i$, this gives $\binom{M+d-1}{M}$ linear conditions in the $\eta_n$ which must span the entire space, i.e.\ there can be no more than one solution for the values of the $\eta_n$. However, we already know one solution, where all the $\eta_n$ are equal, so this must be the only solution. For example, if $d=2$ and $M=3,N\geq 6$, one has
$$
\left(\begin{array}{cccc}
B_1+B_2+B_3 & \frac{2}{3}B_1+\frac{B_2}{3} & \frac{B_1}{3} & 0	\\
\frac{2}{3}B_1+\frac{B_2}{3} & \frac{5}{9}(B_1+B_2) & \frac{4}{9}(B_1+B_2) & \frac{B_1}{3} \\
\frac{B_1}{3} & \frac{4}{9}(B_1+B_2) & \frac{5}{9}(B_1+B_2) & \frac{2}{3}B_1+\frac{B_2}{3} \\
0 & \frac{B_1}{3} & \frac{2}{3}B_1+\frac{B_2}{3} & B_1+B_2+B_3 
\end{array}\right)\left(\begin{array}{c} \eta_0 \\ \eta_1 \\ \eta_2 \\ \eta_3 \end{array}\right)=(1-B_0)\left(\begin{array}{c} 1 \\ 1 \\ 1 \\ 1 \end{array}\right)
$$
where $B_k=\sum_{x,z:x\cdot z=3-k}\beta_x\beta_z$ and $4B_0+2B_1+\frac{4}{3}B_2+B_3=4$ is equivalent to the statement $\underline{\beta}^T\cdot G_{\underline{0}}^{(M)}\cdot\underline{\beta}$. Here, each of the 4 rows corresponds to a different condition $\bra{\phi_i^{(M)}}\Tr_{4,\ldots,N-M}(\rho_{\Phi})\otimes\identity\ket{\phi_i^{(M)}}=1$, $i=0,\ldots,3$. The only solution is $\eta_k=\frac{1}{4}$.

Now, consider the Schmidt decomposition of any such state when the spins of state $\ket{\Phi}$ are split into a bipartition of $\min(M,N-M)$ vs.\ $\max(N-2M,0)$. This means that the dimension of the symmetric subspace of the remaining $\max(N-2M,0)$ spins must be at least $\binom{\min(M,N-M)+d-1}{d-1}$, i.e.\ $N\geq 3M$ (since $N>M$).
}\end{proof}

A trivial example of the economy of cloning is the case of $M=1$. We now know that an ancilla, often known as an `anti-clone' \cite{durt2005}, is required for $N=2$. For $N\geq 3$, we have to find a symmetric state that has partial trace on a single spin of the maximally mixed state:
$$
\ket{\Phi}=\frac{1}{\sqrt{d}}\sum_{i=0}^{d-1}\ket{i}^{\otimes(N-M)}.
$$
Such choices are not unique \cite{baguette2014}. Nevertheless, it seems that economical cloning is an unusual property. For $d=2$, we can construct states for $M=2,3$ and $N=3M$. For instance,
$$
\sqrt{3}\ket{\Phi}=\ket{00}\frac{\ket{01}+\ket{10}}{\sqrt{2}}+\frac{\ket{01}+\ket{10}}{\sqrt{2}}\ket{00}+\ket{11}\ket{11}
$$
is permutation invariant and has the requisite Schmidt basis for $M=2$. Meanwhile,
\begin{lemma} \label{lem:econ2}
For $d=2$ and $M\geq 4$, economical universal cloning is impossible.
\end{lemma}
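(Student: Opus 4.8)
The plan is to invoke Corollary~\ref{cor:1}: for $d=2$, economical $M\to N$ cloning is possible only if $N\geq 3M$ and there is a \emph{pure} symmetric state $\ket{\Phi}$ of $n:=N-M\geq 2M$ qubits with $\Tr_{M+1,\ldots,n}\proj{\Phi}=P_{\text{sym}}^M/(M+1)$ (Lemma~\ref{lem:upper} specialised to $d=2$, where $\binom{M+d-1}{M}=M+1$). So it suffices to show that for every $M\geq 4$ and every $n\geq 2M$ no such $\ket\Phi$ exists.

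I would reformulate this explicitly. Expand $\ket\Phi=\sum_{k=0}^n c_k\ket{\phi_k^{(n)}}$ in the symmetric basis and split the $n$ qubits into the first $M$ (register $A$) and the remaining $n-M$ (register $B$). Since $\ket\Phi$ is permutation invariant it lies in $S_A\otimes S_B$, and the branching rule $\ket{\phi_k^{(n)}}=\sum_{j+m=k}\sqrt{\binom Mj\binom{n-M}m/\binom nk}\,\ket{\phi_j^A}\ket{\phi_m^B}$ gives $\ket\Phi=\sum_{j=0}^M\ket{\phi_j^A}\ket{w_j}$ with
$$
\ket{w_j}=\sum_{m}c_{j+m}\sqrt{\frac{\binom Mj\binom{n-M}m}{\binom n{j+m}}}\ket{\phi_m^B}\in S_B .
$$
The marginal condition is then $\braket{w_{j'}}{w_j}=\delta_{jj'}/(M+1)$: the diagonal part is $\sum_k|c_k|^2\binom kj\binom{n-k}{M-j}=\binom nM/(M+1)$ for $j=0,\ldots,M$ (using the hypergeometric identity $\binom Mj\binom{n-M}m/\binom n{j+m}=\binom{j+m}j\binom{n-j-m}{M-j}/\binom nM$), and the off-diagonal part is $\sum_m c_{j+m}\bar c_{j'+m}\sqrt{h_j(m)h_{j'}(m)}=0$ for $j\neq j'$, with $h_j(m)=\binom Mj\binom{n-M}m/\binom n{j+m}$. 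The rigidity is that $\ket{w_0},\ldots,\ket{w_M}$ are obtained by sliding a single window of length $n-M+1$ along the one sequence $(c_k)$ and reweighting, so consecutive $\ket{w_j}$ agree in all but one coordinate. In the extreme case $n=2M$ this says that the $(M+1)\times(M+1)$ matrix $U_{mj}=\sqrt{M+1}\,\braket{\phi_m^B}{w_j}=\gamma_{mj}c_{m+j}$, with fixed mask $\gamma_{mj}=\sqrt{(M+1)\binom Mj\binom Mm/\binom{2M}{m+j}}$, must be a complex-symmetric \emph{unitary} whose entries are a prescribed mask times a Hankel matrix; for $n>2M$ one has the analogous statement with $U$ a rectangular isometry (equivalently, purity forces $\Tr_A\proj\Phi$ to be maximally mixed on an $(M+1)$-dimensional subspace of $S_B$ whose $M$-qubit marginal is $P_{\text{sym}}^M$).

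The crux is to show this system is infeasible once $M\geq 4$. I would proceed by: (i) using the $M+1$ diagonal equations — ``moment-matching'' conditions forcing the weight distribution $p_k=|c_k|^2$ to average all the hypergeometric profiles $(\binom kj\binom{n-k}{M-j})_k$ to the common value $\binom nM/(M+1)$ — to cut the admissible $p$ to a low-dimensional family; (ii) feeding this into the nearest-neighbour off-diagonal equations ($j'=j+1$), which, because the windows differ in a single coordinate, reduce to a small linear system for the products $c_k\bar c_{k+1}$, of fixed moduli $\sqrt{p_kp_{k+1}}$ and free phase differences, constrained to be orthogonal to the $M$ hypergeometric vectors and with the end products $c_{-1}\bar c_0=c_n\bar c_{n+1}=0$ killed by the boundary of $(c_k)$; and (iii) showing these two-sided boundary conditions over-determine the system once $M\geq 4$, the remaining $|j-j'|\geq 2$ equations then excluding the residual family. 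To keep this uniform in $M$ I would exploit the $SU(2)$-covariance of the whole set-up — the target $P_{\text{sym}}^M/(M+1)$ and all constraints are invariant under $U^{\otimes M}$, equivalently under $U^{\otimes n}$ on $\ket\Phi$ — to fix a gauge (normalising one end of $(c_k)$, i.e.\ placing a Majorana point of $\ket\Phi$ at a chosen location) and then count surviving degrees of freedom against the number of window-orthogonality constraints; the count turns negative precisely at $M=4$, whereas for $M\leq 3$ it stays non-negative, consistent with the GHZ-type solution for $M=1$ and the states exhibited above for $M=2,3$.

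The step I expect to be the real obstacle is making the infeasibility \emph{uniform} in $M\geq 4$. The Hankel-times-mask unitarity constraint is a genuinely non-convex polynomial system, and although a dimension count makes generic infeasibility very plausible for all $M\geq 4$, turning this into a proof needs either an explicit obstructing identity for each $M$ (ideally in closed form) or an a priori structural reduction — for instance showing that the window-overlap relations force $(c_k)$ to be supported on an arithmetic progression of indices, after which the diagonal equations alone become a small, manifestly inconsistent linear system for $M\geq 4$.
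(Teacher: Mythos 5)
Your setup is the same as the paper's (reduce via Corollary \ref{cor:1} and Lemma \ref{lem:upper} to the existence of a pure symmetric $\ket{\Phi}$ on $N-M$ qubits with $M$-qubit marginal $P_{\text{sym}}^M/(M+1)$, expand in the Dicke basis, and study the sliding-window/Hankel structure of the marginal constraints), but the proof is not closed: your step (iii) is a parameter count, and you acknowledge yourself that making the infeasibility rigorous and uniform in $M\geq 4$ is an open obstacle. That is a genuine gap, not a technicality — a negative dimension count does not rule out non-generic solutions of a non-convex polynomial system, so even if the count were carried out it would not by itself prove impossibility; and keeping the phases of the $c_k$ as free parameters is precisely what makes your off-diagonal system intractable.

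The paper avoids both difficulties. First, uniformity in $M$ is dissolved by partial-trace monotonicity: if the $M$-qubit marginal of $\proj{\Phi}$ were $P_{\text{sym}}^M/(M+1)$, then tracing down further would make its $4$-qubit marginal equal to $P_{\text{sym}}^4/5$, so it suffices to show that no symmetric pure state of $N-4\geq 8$ qubits has $4$-qubit marginal $P_{\text{sym}}^4/5$. Second, the Dicke coefficients may be taken non-negative, $\alpha_i\geq 0$ (this is inherited from the Lieb--Mattis-type lemma on the maximum eigenvector), and since the Schmidt coefficients in the branching $\ket{\phi_i^{(N-4)}}\to\ket{\phi_j^{(4)}}\ket{\phi_{i-j}^{(N-8)}}$ are also non-negative, the off-diagonal contributions to $\Tr_{N-8}\proj{\Phi}$ cannot cancel: every overlapping pair $i\neq k$ forces $\alpha_i\alpha_k=0$. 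This immediately kills all middle indices $4\leq i\leq N-8$ and leaves at most one surviving index in $\{0,1,2,3\}$ and one in $\{N-7,\ldots,N-4\}$; equating the $\proj{0000}$ and $\proj{1111}$ diagonal entries to $1/5$ together with normalisation then yields a short list of equations in $N$ with no integer solution. In short, the missing ingredients in your proposal are the reduction to $M=4$ and the exploitation of non-negativity to eliminate the phase freedom, after which only a finite arithmetic check remains rather than the uniform-in-$M$ analysis you anticipate.
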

\begin{proof}{
It is sufficient to prove that for $M=4$ and $N\geq 12$ economical universal cloning is impossible, because if the partial trace of $\proj{\Phi}$ onto just four qubits does not give the projector onto the symmetric subspace, then the partial trace onto any larger number of qubits cannot give a projector onto the symmetric subspace (because that projector's partial trace would, itself, be a projector onto the symmetric subspace). We write
$$
\ket{\Phi}=\sum_{i=0}^{N-4}\alpha_i\ket{\phi^{(N-4)}_i}
$$
where the states $\ket{\phi^{(N-4)}_i}$ are the uniform superpositions of $i$ $\ket{1}$ states (a basis of the symmetric subspace of $N-4$ qubits). We require that $\sum_i|\alpha_i|^2=1$ and that $\alpha_i\geq 0$ for all $i$ (since the maximum eigenvector must have non-negative coefficients). Now examine the Schmidt decomposition of the states $\ket{\phi_i^{(N-4)}}$:
$$
\ket{\phi_i^{(N-4)}}=\sum_{j=\max(0,i+8-N)}^{\min(i,4)}\frac{\binom{4}{j}\binom{N-8}{i-j}}{\binom{N-4}{i}}\ket{\phi^{(4)}_j}\ket{\phi^{(N-8)}_{i-j}}.
$$
So, if two terms $\alpha_i$ and $\alpha_k$ are non-zero ($i\neq k$), they yield an off-diagonal term in $\Tr_{N-8}\proj{\Phi}$, as written in the basis of the symmetric states, whenever there exist $j$ and $l$ such that $i-j=k-l$ within their appropriate summation ranges. If such a term arises, at least one of $\alpha_i,\alpha_k$ must be 0 for the state to be diagonal in the symmetric basis. In particular, for $i=4,\ldots,N-8$, this imposes that $\alpha_i=0$ because these states $\ket{\phi_i}$ give off-diagonal terms with every other possible state. Thus, to ensure a diagonal outcome, we either pick $\alpha_i=0$ or all $\alpha_k=0$ for $k\neq i$. In the latter case, it is easy to verify that $\Tr_{N-8}\proj{\phi_i}\neq P_{\text{sym}}^4$. A similar analysis continues between all the $i=0,1,2,3$, where we conclude that only one of them can be non-zero to ensure that the output is diagonal. And, again, for all the $i=N-8+1,\ldots N-4$. So, let us pick $i\in\{0,1,2,3\}$ and $j\in\{N-7,N-6,N-5,N-4\}$ to be $\alpha_i$ and $\alpha_j$ are the only non-zero terms. What are the matrix elements of the $\proj{0000}$ and $\proj{1111}$ components of the reduced state of $\proj{\Phi}$?
$$
|\alpha_i|^2\frac{\binom{N-8}{i}}{\binom{N-4}{i}}	\qquad |\alpha_j|^2\frac{\binom{N-8}{j-4}}{\binom{N-4}{j}},
$$
both of which need to be $1/5$ in order to get the projector on the symmetric subspace. However, we also require the normalization condition for $\ket{\Phi}$:
$$
|\alpha_i|^2+|\alpha_j|^2=\frac{1}{5}\left(\frac{\binom{N-4}{i}}{\binom{N-8}{i}}+\frac{\binom{N-4}{j}}{\binom{N-8}{j-4}}\right)=1.
$$
By iterating through all possible values of $i,j$ and solving the equation for $N$, we find that there is never an integer value of $N$ that is a valid solution.
}\end{proof}

\section{Linear Constraints on Cloning}\label{sec:linear}

The contents of Thm.\ \ref{thm:constraints} describe the reduction of the quadratic constraints for specifying fidelities down to linear ones, which reduce the complexity of solution of the system. These are inspired by the derivation of fidelity relations if $M=1$ and $L=1$ or $N-1$ in Sec.\ \ref{sec:1}.

\subsection{\texorpdfstring{$1\rightarrow N$}{1 to N} cloning} \label{sec:1p2}

The essential feature of these derivations of optimal cloning fidelities in Sec.\ \ref{sec:1} was the implicit ability to find linear combinations of the matrices $G_y^{(M)}$ and $G_{\underline 0}^{(M)}$ that are rank 1. This lets us reduce the quadratic constraints $\underline{\beta}^TG_y\underline{\beta}=F_y$ to linear ones on the $\{\beta_x\}$: if there exists a set of coefficients $\{g_y\}$ such that
$$
g_0G_{\underline 0}+\sum g_yG_y=\underline{\Gamma}\cdot\underline{\Gamma}^T
$$
where $\underline{\Gamma}$ is a unit vector, then
$$
\underline{\Gamma}^T\underline{\beta}=\sqrt{g_0+\sum g_yF_y}.
$$

\begin{lemma} \label{lem:linear}
In $(1,L,N)$ cloning, there exist values $g_0,g_1,g_2\in\mathbb{R}$ that yield a rank 1 projector
$$
P_1=g_0G_{\underline 0}+g_1G_{(1,0,0,\ldots,0)}^{(M,L)}+g_2G_{\underline 0}^{(M,L)}.
$$
\end{lemma}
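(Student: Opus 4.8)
The plan is to exploit the highly structured form of the matrices $G_0^{(1)}$, $G^{(1,L)}_{(1,0,\ldots,0)}$ and $G_0^{(1,L)}$ when $M=1$, in which case the basis is indexed by a single site $n\in[N]$ and every matrix element depends only on whether two indices coincide and, for the $y$-dependent matrices, on whether they lie in a distinguished set. Concretely, with $M=1$ one has $x\cdot z\in\{0,1\}$, and $w_{\bar x\cap\bar z\cap y}$ takes only a handful of values depending on whether $n$ and $m$ are in the support of $y$. So each of the three matrices will be a linear combination of the all-ones matrix $J=\underline{e}\,\underline{e}^T$ (where $\underline{e}$ is the all-ones vector), the identity $\identity$, and a diagonal projector $D$ onto the $w_y$ coordinates in the support of $y$ (together with the rank-one cross terms $\underline{e}\,\underline{v}^T+\underline{v}\,\underline{e}^T$ where $\underline{v}$ is the indicator of that support). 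First I would write each of $G_0$, $G^{(1,L)}_{(1,0,\ldots,0)}$, $G_0^{(1,L)}$ explicitly in this $\{J,\identity,D,\underline{e}\,\underline{v}^T+\underline{v}\,\underline{e}^T,\underline{v}\,\underline{v}^T\}$ basis, reading off the coefficients from the binomial denominators in the Definition. Because $L$ is fixed, $G_0^{(1,L)}$ is a sum over all weight-$L$ strings, which by symmetry collapses to a combination of $J$ and $\identity$ only; similarly $G^{(1,L)}_{(1,0,\ldots,0)}$ — the sum over weight-$L$ strings $x$ with $x_1=1$ — will produce the $\underline{v}=\underline{e}_1$ cross terms and a multiple of $\proj{1}$.

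The key step is then a dimension count. The span of all matrices reachable as $g_0G_0+g_1G^{(1,L)}_{(1,0,\ldots,0)}+g_2G_0^{(1,L)}$ is (at most) three-dimensional, sitting inside the low-dimensional algebra generated by $J$, $\identity$, $\proj{1}$, and $\underline{e}\,\underline{e}_1^T+\underline{e}_1\,\underline{e}^T$. I would diagonalise simultaneously: every such matrix is block-diagonal with a $2\times2$ block on $\mathrm{span}\{\underline{e}_1,\underline{e}-\underline{e}_1\}$ (equivalently $\mathrm{span}\{\underline{e}_1,\underline{e}\}$) and acts as a scalar on the $(N-2)$-dimensional orthogonal complement (vectors supported on sites $2,\ldots,N$ and summing to zero). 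For the combination to be rank 1 we need this scalar to vanish, which is one linear equation in $(g_0,g_1,g_2)$, and we need the $2\times2$ block to be rank 1, i.e.\ to have zero determinant, which is one more (quadratic, but homogeneous) equation. Two homogeneous conditions on three unknowns generically leave a nonzero solution; I would solve the linear condition first to cut down to a pencil of $2\times2$ matrices and then check the determinant condition has a real root. Positivity of the resulting rank-one matrix (so that it really is a projector $\underline{\Gamma}\,\underline{\Gamma}^T$ rather than $-\underline{\Gamma}\,\underline{\Gamma}^T$) should follow because $G_0$ is positive definite (it is a Gram matrix, as noted before Lemma~\ref{lem:convex}) and dominates; in any case one can absorb an overall sign.

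The main obstacle I expect is purely bookkeeping: getting the coefficients of $J,\identity,\proj{1},\underline{e}\,\underline{e}_1^T+\underline{e}_1\,\underline{e}^T$ correct for $G^{(1,L)}_{(1,0,\ldots,0)}$ and $G_0^{(1,L)}$, since these involve summing the denominators $\binom{1+d-1-x\cdot z+w_{\bar x\cap\bar z\cap y}}{d-1}=\binom{d-x\cdot z+w_{\bar x\cap\bar z\cap y}}{d-1}$ over all weight-$L$ strings with the relevant intersection constraint, and the value $w_{\bar x\cap\bar z\cap y}$ genuinely depends on the case split (whether each of the two base sites lies in the distinguished set). Once those entries are in hand, the vanishing-scalar-plus-vanishing-determinant argument is routine. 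A secondary subtlety is confirming that the rank-one block is the \emph{top} of the spectrum (so $\underline{\Gamma}$ really is the relevant direction), but for the statement of the Lemma — mere existence of real $g_0,g_1,g_2$ making $P_1$ a rank-one projector — this is not needed; it will matter only when the Lemma is applied to linearise the fidelity constraints.
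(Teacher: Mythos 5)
Your structural setup is the same as the paper's: for $M=1$ all three matrices lie in the four-dimensional span of $\identity$, $\proj{1}$, $\ket{1}\bra{j}+\ket{j}\bra{1}$ and $\proj{j}$ (with $\ket{j}\propto\sum_{n\geq 2}\ket{n}$, i.e.\ exactly your $\{J,\identity,\proj{1},\underline{e}\,\underline{e}_1^T+\underline{e}_1\underline{e}^T\}$ algebra), so any combination is a $2\times 2$ block on $\mathrm{span}\{\ket{1},\ket{j}\}$ plus a scalar on the orthogonal complement, and rank $1$ means ``scalar $=0$'' together with ``block determinant $=0$''. This is precisely the paper's route (it computes the coefficients $a_0,\ldots,a_5$ and imposes these conditions), so the disagreement is not in the approach but in how existence of real $(g_0,g_1,g_2)$ is concluded.

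That is where your argument has a genuine gap. ``Two homogeneous conditions on three unknowns generically leave a nonzero solution'' is not a proof over $\mathbb{R}$: after imposing the linear (scalar $=0$) condition you are left with a two-parameter pencil of real symmetric $2\times 2$ matrices, and the determinant along such a pencil is a homogeneous quadratic which can perfectly well be definite -- for instance the pencil spanned by $\mathrm{diag}(1,-1)$ and the purely off-diagonal matrix has determinant $-s^2-t^2$, with no nonzero singular member. So the step you defer (``check the determinant condition has a real root'') is exactly the content of the lemma and does not follow from counting. What makes it work, and what the paper's proof actually establishes, is a structural coincidence: the $2\times 2$ blocks of both $G_0$ and $G_0^{(1,L)}$ are a multiple of the \emph{same} rank-one projector (onto $\underline{e}=\ket{1}+\sqrt{N-1}\ket{j}$) plus a multiple of $\identity$, so with $u=g_0/d+g_2a_4$ the block determinant is
\begin{equation*}
(u+g_1a_1)\bigl(u(N-1)+g_1a_3\bigr)-\bigl(u\sqrt{N-1}+g_1a_2\bigr)^2,
\end{equation*}
in which the $u^2$ terms cancel; for $g_1\neq 0$ the determinant condition is therefore \emph{linear} in $(g_0,g_1,g_2)$, and two homogeneous linear equations in three unknowns always have a nontrivial real solution (sign and normalisation being absorbed as you say). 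To complete a proof along your lines you must exhibit this cancellation, which requires the bookkeeping you postpone, and -- for the lemma to be usable in Eq.~(\ref{eq:eliminate}) -- also verify that the resulting $\underline{\Gamma}$ is not proportional to $\underline{e}$ (i.e.\ $\gamma_1\neq\gamma_2$); the paper shows this degeneracy occurs only when $(N-L)(d-1)=0$.
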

\begin{proof}{
We observe that each of the 3 matrices $G_{\underline 0}, G_{(1,0,0,\ldots,0)}^{(M,L)}, G_{\underline 0}^{(M,L)}$ has the form
\begin{eqnarray}
G_{\underline 0}&=&\frac{1}{d}(\ket{1}+\sqrt{N-1}\ket{j})(\bra{1}+\sqrt{N-1}\bra{j})+\frac{d-1}{d}\identity	\nonumber\\
G_{(1,0,0,\ldots,0)}^{(M,L)}&=&a_1\proj{1}+a_2(\ket{1}\bra{j}+\ket{j}\bra{1})+a_3\proj{j}+a_0\identity		\label{eqn:form}\\
G_{\underline 0}^{(M,L)}&=&a_4(\ket{1}+\sqrt{N-1}\ket{j})(\bra{1}+\sqrt{N-1}\bra{j})+a_5\identity	\nonumber
\end{eqnarray}
where $\ket{j}=\sum_{n=2}^{N}\ket{n}/\sqrt{N-1}$ and we are using an index $n\in[N]$ in place of a bit string of length $N$ with a single entry 1 at position $n$, and
\begin{eqnarray*}
a_0&=&a_2-a_3	\\
a_1&=&\frac{\binom{N-1}{L-1}}{\binom{d+L-2}{d-1}}-a_0		\\
a_2&=&\frac{\binom{N-1}{L-1}}{N-1}\left(\frac{L-1}{\binom{d+L-2}{d-1}}+\frac{N-L}{\binom{d+L-1}{d-1}}\right)		\\
a_3&=&\frac{\binom{N-1}{L-1}}{\binom{N-1}{2}}\left(\frac{\binom{L-1}{2}}{\binom{d+L-2}{d-1}}+\frac{(L-1)(N-L)}{\binom{d+L-1}{d-1}}+\frac{\binom{N-L}{2}}{\binom{d+L}{d-1}}\right)		\\
a_4&=&\frac{\binom{N}{L}}{\binom{N}{2}}\left(\frac{\binom{L}{2}}{\binom{d+L-2}{d-1}}+\frac{L(N-L)}{\binom{d+L-1}{d-1}}+\frac{\binom{N-L}{2}}{\binom{d+L}{d-1}}\right)		\\
a_5&=&\frac{\binom{N}{L}}{N}\left(\frac{L}{\binom{d+L-2}{d-1}}+\frac{N-L}{\binom{d+L-1}{d-1}}\right)-a_4.	\\
\end{eqnarray*} 
Any linear combination also has the form of Eq.\ (\ref{eqn:form}), such that by judicious choice of $\{g_i\}$, a rank 1 projector of the form $\underline{\Gamma}=\gamma_1\ket{1}+\gamma_2\sqrt{N-1}\ket{j}$ results, where $\gamma_1\neq \gamma_2$ and
\begin{eqnarray*}
\frac{g_0}{d}+g_1a_1+g_2a_4&=&\gamma_1^2	\\
\frac{g_0}{d}+g_1a_3+g_2a_4&=&\gamma_2^2	\\
\frac{d-1}{d}g_0+a_5g_2+g_1a_4&=&0	\\
\gamma_1^2+(N-1)\gamma_2^2&=&1	\\
\left(\frac{g_0}{d}+g_1a_2+g_2a_4\right)^2&=&\gamma_1^2\gamma_2^2.
\end{eqnarray*}
Provided $g_1\neq 0$, the last condition reduces to the linear one
$$
g_0\frac{a_1+a_3-2a_2}{d}+g_1(a_1a_3-a_2^2)+g_2a_4(a_3-a_2)=0.
$$

It can only be the case that $\gamma_1=\gamma_2$ if $a_1=a_2$. This is equivalent to
$$
a_2=\frac{\binom{N-1}{L-1}}{\binom{d+L-2}{d-1}},
$$
which further simplifies to $(N-L)(d-1)=0.$ Hence, provided $L\neq N$ (whose solution is already known), we know that $\gamma_1\neq\gamma_2$.
}\end{proof}

\noindent Consequently, we can write that
\begin{equation}
\gamma_2\sum_{m=1}^N\beta_m+(\gamma_1-\gamma_2)\beta_n=\sqrt{g_0+g_1\sum_{\substack{y\in\Lambda\\y_1=1}}F_y+g_2\sum_{y\in\Lambda}F_y}.	\label{eq:eliminate}
\end{equation}
Given that $\gamma_1\neq\gamma_2$, an equivalent but independent condition can be derived for each of the $N$ sites (singling out a different $\beta_n$). By summing all of these, we get
$$
((N-1)\gamma_2+\gamma_1)\sum_m\beta_m=\sum_{n=1}^N\sqrt{g_0+g_1\sum_{\substack{y\in\Lambda \\y_n=1}}F_y+g_2\sum_{y\in\Lambda}F_y},
$$
and hence we can calculate each of the individual $\beta_n$ in terms of the fidelities. With the $\beta_n$ in place, we simply have to verify if all the cloning fidelities, and the normalization condition, are satisfied. For $L=1,N-1$, there are no outstanding quadratic conditions aside from normalization, and the previous results are recovered.

For $1<L<N-1$, we consider our original problem, which can be phrased as the satisfiability problem
$$
\min_{\substack{\underline{\beta}^TG_{\underline 0}\underline{\beta}=1 \\ \underline{\beta}^TG_y\underline{\beta}=\tilde F_y\quad\forall y\\\tilde F_y\geq F_y\quad\forall y}} 1,
$$
where $F_y$ are the target fidelities and the free parameters are the $\binom{N}{M}$ parameters $\beta_n$ and the $\binom{N}{L}$ parameters $\tilde F_y$. By replacing the $\binom{N}{L}$ quadratic conditions with $N$ linear ones (in $\beta_n$), just derived, the problem becomes convex, and hence efficiently solvable \cite{boyd2004} but misses out some of the constraints -- satisfaction is necessary but not sufficient.

\subsubsection{Consistency Relations} This situation can be improved by verifying the existence of consistency conditions between the fidelities. These will yield further linear relations that can be incorporated into the initial solution, meaning that there will only be $\binom{N}{2}-N$ quadratic constraints left to verify, independent of $L$ (the forthcoming Lemma will return a space of $\binom{N}{2}$ quadratic constraints that need to be verified but Lemma \ref{lem:linear} allows a further $N$ to be removed).
\begin{lemma}\label{lem:consistent}
Define the $\binom{N}{2}\times\binom{N}{L}$ matrix $X$ as $\bra{x}X\ket{y}=\delta_{x\cdot y=2}$ where $x,y\in\{0,1\}^N$ and $w_x=2, w_y=L$. Any $\underline{v}\in\text{Ker}(X)$ satisfies $\sum_{y\in\Lambda}v_yG_y=0$.
\end{lemma}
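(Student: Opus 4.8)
The plan is to show directly that every entry of $\sum_yv_yG_y$ in the computational basis vanishes, where throughout $G_y$ abbreviates $G^{(1)}_y$ and the sum runs over all $y\in\{0,1\}^N$ with $w_y=L$. Specialising the defining formula of $G^{(M)}_y$ to $M=1$: for $a,b\in[N]$ the entry $\bra{a}G_y\ket{b}$ equals $1/\binom{d+L-\delta_{a,b}-k}{d-1}$, where $k$ denotes the number of sites in $\{a,b\}$ at which $y$ takes the value $1$ (indeed $w_{\bar a\cap\bar b\cap y}=L-k$ and $x\cdot z=\delta_{a,b}$ for the weight-one strings $x,z$ supported at $a,b$). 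The essential observation is that this entry depends on $y$ only through $k$, which lies in $\{0,1\}$ if $a=b$ and in $\{0,1,2\}$ if $a\neq b$. Consequently $\bra{a}(\sum_yv_yG_y)\ket{b}$ is a linear combination, with $v$-independent coefficients, of the partial sums $\sigma_k(a,b):=\sum_yv_y$ over all weight-$L$ strings $y$ having exactly $k$ ones inside $\{a,b\}$, and it suffices to show that each $\sigma_k(a,b)$ vanishes whenever $\underline v\in\text{Ker}(X)$.

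First I would re-express these partial sums, by inclusion--exclusion over the subsets of $\{a,b\}$, in terms of the up-set sums $f(T):=\sum_{y:\,w_y=L,\,T\subseteq\{n:\,y_n=1\}}v_y$ for $T\subseteq\{a,b\}$: when $a\neq b$ one has $\sigma_2=f(\{a,b\})$, $\sigma_1=f(\{a\})+f(\{b\})-2f(\{a,b\})$ and $\sigma_0=f(\emptyset)-f(\{a\})-f(\{b\})+f(\{a,b\})$, with the obvious one-term degeneration when $a=b$. Now $X\underline v=0$ is precisely the statement that $f(T)=0$ for every two-element $T$, since $(X\underline v)_x=f(\{n:\,x_n=1\})$ when $w_x=2$. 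It then remains to upgrade this to $f(\emptyset)=0$ and $f(\{p\})=0$ for every singleton $\{p\}$, which I would obtain by summing the vanishing two-element relations: $\sum_{|T|=2}f(T)=\binom{L}{2}f(\emptyset)$ and $\sum_{q\neq p}f(\{p,q\})=(L-1)f(\{p\})$, whose left-hand sides are zero, so the right-hand sides must vanish provided $L\geq2$ (the regime of interest; for $L=N$ the kernel of $X$ is trivial and the statement is vacuous, and for $L\le1$ the matrix $X$ is identically zero). Substituting $f(\emptyset)=f(\{p\})=f(\{p,q\})=0$ into the formulas for the $\sigma_k$ gives $\sigma_k(a,b)=0$ for all $a,b$ and all $k$, hence $\sum_yv_yG_y=0$.

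The only real obstacle is the bookkeeping in the first reduction: one must verify, treating $a=b$ and $a\neq b$ separately, that the argument of the binomial coefficient in $\bra{a}G_y\ket{b}$ genuinely collapses to a function of the single quantity $k$, and then keep the inclusion--exclusion signs consistent. Beyond that the argument uses nothing but the definition of $G^{(1)}_y$ and elementary double counting -- no structural properties of the matrices $G_y$, and no facts about the rank of $X$, are required.
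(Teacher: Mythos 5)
Your proof is correct and follows essentially the same route as the paper's: both compute the matrix elements of $G_y^{(1)}$, observe they depend on $y$ only through how many of the two sites lie in the support of $y$, and then deduce from $\underline{v}\in\text{Ker}(X)$ that the three moments $\sum_y v_y$, $\sum_y y_nv_y$ and $\sum_y y_ny_mv_y$ (your $f(\emptyset)$, $f(\{p\})$, $f(\{p,q\})$) vanish by summing the weight-2 relations, exactly as in the paper. Your inclusion--exclusion bookkeeping is just a repackaging of the paper's explicit polynomial expansion of the entries, so no substantive difference.
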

\begin{proof}{
We need to calculate both the diagonal
$$
\bra{n}\sum_{y\in\Lambda}v_yG_y\ket{n}=\frac{1}{\binom{d+L-1}{d-1}}\sum_{y\in\Lambda}v_y\left(1+y_n\frac{d-1}{L}\right)
$$
and off-diagonal, $\bra{n}\sum_{y\in\Lambda}v_yG_y\ket{m}:$
$$
\frac{1}{\binom{d+L}{d-1}}\sum_{y\in\Lambda}v_y\left(1+(y_n+y_m)\frac{d-1}{L+1}+y_ny_m\frac{d(d-1)}{L(L+1)}\right)
$$
matrix elements. Both are 0 due to the relations
\begin{equation*}
\begin{split}
\frac{1}{\binom{L}{2}}\sum_{x\in\{0,1\}^N:w_x=2}\bra{x}X\ket{v}=\sum_yv_y&=0	\\
\bra{n,m}X\ket{v}=\sum_{y}y_ny_mv_y&=0	\\
\frac{1}{L-1}\sum_{m\neq n}\bra{n,m}X\ket{v}=\sum_yy_nv_y&=0
\end{split}
\end{equation*}
since $\underline{v}$ is in the Null space of $X$, and we have used $\ket{n,m}$ as a synonym for a weight-2 binary string where the 1s are at sites $n\neq m$.
}\end{proof}

\noindent Any optimal set of fidelities $\{F_y\}$, expressed as a vector $\underline{F}$, must satisfy $\underline{v}\cdot\underline{F}=0$ for all $\underline{v}\in\text{Ker}(X)$, which yields $\binom{N}{L}-\binom{N}{2}$ independent conditions. We can therefore formulate our best solution to the cloning problem as a convex optimization problem \cite{boyd2004} to satisfy
\begin{multline*}
d(\gamma_1-\gamma_2)^2+(d-1)\left(Ng_0+\frac{Ng_2}{L}q+g_1q\right)=\\ \frac{\sum_n\sqrt{g_0+g_1F_n+\frac{g_2}{L}q}}{((N-1)\gamma_2+\gamma_1)^2}\left((\gamma_1-\gamma_2)^2-(d-1)\gamma_2(N\gamma_2+2(\gamma_1-\gamma_2))\right)
\end{multline*}
(the normalisation condition) subject to the constraints
\begin{eqnarray*}
F_n&=&\sum_{y:y_n=1}\tilde F_y \\
q&=&\sum_{n=1}^NF_n	\\
0&=&\text{Ker}(X)\cdot\underline{\tilde F}	\\
\tilde F&\geq& \underline{F}
\end{eqnarray*}
This is a necessary condition for cloning -- if it cannot be satisfied, cloning is impossible. If there is a satisfying assignment, then the $\beta_n$ need to be derived so that the remaining conditions can be checked. If all are satisfied, cloning is possible, and the fidelities $\tilde F_y$ are attained. If not, the question is unresolved. The remaining conditions do not appear to reduce to linear ones, but are readily specified. If we use $G_{a,b}^{(M,L)}$ as a synonym for $G_y^{(M,L)}$ when $y$ is of weight 2, with $y_a=y_b=1$ ($a\neq b$), and pick any 4 distinct sites ($N\geq 4$), then
\begin{equation}
\underline{\beta}^T(G_{a,b}+G_{c,d}-G_{a,c}-G_{b,d})\underline{\beta}=\frac{2(d-1)(\beta_a-\beta_d)(\beta_b-\beta_c)}{(d+L)\binom{d+L-1}{d}}.	\label{eq:cons2}
\end{equation}
Recalling Eq.\ (\ref{eq:eliminate}), we can easily evaluate terms such as $\beta_a-\beta_d$, and hence we have a whole new set of (non-convex) consistency conditions to verify. If we think of $F_{a,b}+F_{c,d}-F_{a,c}-F_{b,d}$ as an inner product $\underline{v}_{abcd}\cdot\underline{F}$, then one just has to pick $\half N(N-3)$ linearly independent variants of $\underline{v}$, and all necessary conditions have been checked (this is the dimension of the space comprised of all possible vectors $\underline{v}$, and is orthogonal to the vectors $\underline{\tilde v}$ for which $\underline{\tilde v}\cdot\underline{F}=F_n$).

To see that the main (normalization) condition is convex, observe that for sufficiently large $N$, $1-2\sqrt{\alpha\gamma}d-(d-1)(N-2)\gamma$ is certainly negative, and
\begin{lemma} \label{lem:convex}
The function
$$
f_N(x)=\left(\sum_{n=1}^N\sqrt{x_n}\right)^2
$$
is concave.
\end{lemma}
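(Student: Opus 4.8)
The plan is to show concavity of $f_N(x) = \left(\sum_{n=1}^N \sqrt{x_n}\right)^2$ on the nonnegative orthant by exhibiting it as a composition of operations that preserve concavity, or alternatively by a direct Hessian computation. The cleanest route I would take is the composition argument: the inner map $x \mapsto \sum_n \sqrt{x_n}$ is concave (a sum of concave functions $\sqrt{x_n}$, each concave since $t \mapsto \sqrt t$ is concave on $[0,\infty)$) and nonnegative, and the outer map $t \mapsto t^2$ is convex but \emph{increasing} on $[0,\infty)$; so one cannot directly invoke the ``convex nondecreasing of concave is concave'' rule (that gives convexity, not concavity, and with the wrong monotonicity). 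Hence I would instead argue directly.

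First I would reduce to the two-variable case: a function on a convex set is concave iff its restriction to every line segment is concave, and since $f_N$ is a sum over coordinates in a symmetric way, it suffices to check concavity along an arbitrary segment, which amounts to showing $t \mapsto f_N(x + t v)$ has nonpositive second derivative. Equivalently, I would compute the Hessian of $f_N$ directly. Writing $S = \sum_n \sqrt{x_n}$, we have $\partial_i f_N = S/\sqrt{x_i}$, and then $\partial_i \partial_j f_N = \frac{1}{2\sqrt{x_i}\sqrt{x_j}}$ for $i \neq j$ and $\partial_i^2 f_N = \frac{1}{2 x_i} - \frac{S}{2 x_i^{3/2}}$. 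So the Hessian is $H = \frac{1}{2}\left(u u^T - \mathrm{diag}(S x_i^{-3/2})\right)$ where $u_i = x_i^{-1/2}$, and I need $H \preceq 0$, i.e.\ $u u^T \preceq \mathrm{diag}(S x_i^{-3/2})$. Testing against an arbitrary vector $w$, this is the inequality $\left(\sum_i w_i x_i^{-1/2}\right)^2 \le \left(\sum_i \sqrt{x_i}\right)\left(\sum_i w_i^2 x_i^{-3/2}\right)$, which is exactly Cauchy--Schwarz applied to the vectors with components $x_i^{1/4}$ and $w_i x_i^{-3/4}$: $\left(\sum_i (x_i^{1/4})(w_i x_i^{-3/4})\right)^2 \le \left(\sum_i x_i^{1/2}\right)\left(\sum_i w_i^2 x_i^{-3/2}\right)$.

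The main obstacle is essentially bookkeeping rather than substance: handling the boundary of the orthant where some $x_n = 0$ (so the Hessian formula breaks down). There I would argue by continuity — $f_N$ is continuous on the closed orthant and concave on the open orthant, and a continuous function that is concave on the interior of a convex set is concave on its closure — or simply note that the statement is only used for the normalisation-condition argument where the relevant quantities are strictly positive. I would present the Hessian/Cauchy--Schwarz computation as the body of the proof and dispatch the boundary case with a one-line continuity remark, since the earlier text already flags that this lemma serves only to establish convexity of the main cloning constraint for sufficiently large $N$.
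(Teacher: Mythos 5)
Your proof is correct, but it takes a genuinely different route from the paper's. The paper proves concavity by induction on $N$ directly from the definition: it assumes $f_{N-1}(\alpha x+\beta z)\geq \alpha f_{N-1}(x)+\beta f_{N-1}(z)$, expands $f_N=(\sqrt{f_{N-1}}+\sqrt{x_N})^2$, and reduces the cross term to the elementary inequality $\left(\sqrt{x_N f_{N-1}(z)}-\sqrt{z_N f_{N-1}(x)}\right)^2\geq 0$; no derivatives are taken, so the argument works on the whole closed orthant with no boundary caveat. You instead compute the Hessian on the open orthant, identify it as $\tfrac12\left(uu^T-\mathrm{diag}(S\,x_i^{-3/2})\right)$ with $u_i=x_i^{-1/2}$, and verify negative semidefiniteness via Cauchy--Schwarz applied to $(x_i^{1/4})$ and $(w_i x_i^{-3/4})$, dispatching the boundary by continuity; your partial derivatives and the Cauchy--Schwarz step check out, and the continuity closure argument is standard and sound. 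You are also right that the naive composition rule fails here (the square of a nonnegative concave function need not be concave), so a direct argument is indeed required. Your approach is arguably more systematic and makes the mechanism (Cauchy--Schwarz) transparent in one shot, at the cost of smoothness/boundary bookkeeping; the paper's induction is more elementary and self-contained. As an aside, both proofs can be shortcut by noting $f_N(x)=\sum_{i,j}\sqrt{x_i x_j}$ is a sum of bivariate geometric means, each of which is concave.
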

\begin{proof}{
We use a proof by induction to show that $f_N(\alpha x+\beta z)\geq\alpha f_N(x)+\beta f_N(z)$. For the base case, we examine $N=1$:
$$
\left(\sqrt{\alpha x_1+\beta z_1}\right)^2\geq\alpha(\sqrt{x_1})^2+\beta(\sqrt{z_1})^2,
$$
which is straightforward. Now we make the inductive step. Consider
$$
f_N(\alpha x+\beta z)=\left(\sqrt{f_{N-1}(\alpha x+\beta z)}+\sqrt{\alpha x_N+\beta z_N}\right)^2,
$$
assuming that $f_{N-1}(\alpha x+\beta z)\geq\alpha f_{N-1}(x)+\beta f_{N-1}(z)$. We simply need to show that the left-over terms are positive, i.e.
\begin{equation*}
\sqrt{(\alpha x_N+\beta z_N)f_{N-1}(\alpha x+\beta z)}\geq \alpha\sqrt{f_{N-1}(x)x_N}+\beta\sqrt{f_{N-1}(z)z_N}
\end{equation*}
Square, and again apply convexity. This is equal to
$$
\left(\sqrt{x_Nf_{N-1}(z)}-\sqrt{z_Nf_{N-1}(x)}\right)^2\geq 0,
$$
which is clearly true.
}\end{proof}

\subsection{\texorpdfstring{$2\leq M\leq N-2$}{Other M} Cloning}\label{sec:otherM}

For $M=1,N-1$, we can completely solve certain special values of $L$, and are left with only a modest number of constraints to verify in other cases (computationally, resolution of whether cloning is possible may still be a hard problem, but the more constrained it is, the more effectively we can witness the feasibility of cloning). However, for other values of $M$, there is no equivalent to Lemma \ref{lem:linear}.

\begin{lemma} \label{lem:nolinear}
For $(M,L,N)$ cloning then if $N\leq 2M$ or $M$ is even, and $N>M+1$, there are no linear combinations
$$
P:=g_0+\sum_{y\in\Lambda}g_yG_y
$$
which are non-trivial rank 1 projectors.
\end{lemma}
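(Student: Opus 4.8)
The plan is to convert the rank-one requirement into an explicit algebraic system in the binomial matrix elements and then show it is unsolvable precisely in the stated range. First I would write the entries of a candidate $P=g_0G_0+\sum_{w_y=L}g_yG_y$ in coordinates adapted to the combinatorics of the $G_y$: for weight-$M$ strings $x,z$ set $k=x\cdot z$, note that $w_{\bar x\cap\bar z\cap y}=L-|y\cap(x\cup z)|$, and hence
$$\bra{x}P\ket{z}=\frac{g_0}{\binom{M-k+d-1}{d-1}}+\sum_{m\geq 0}\frac{1}{\binom{M-k+m+d-1}{d-1}}\sum_{\substack{w_y=L\\ |y\cap(x\cup z)|=L-m}}g_y .$$
So $P$ is fully controlled by $g_0$ together with, for each pair, the partial sums $T_{x,z}(j):=\sum_{|y\cap(x\cup z)|=j}g_y$; in particular $P$ is $\mathcal{S}_N$-invariant only to the extent that the coefficient vector $(g_y)$ is, while $G_0$ alone acts as the scalar $1/\binom{M-k+d-1}{d-1}$ on the block indexed by overlap $k$. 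I would also invoke Lemma \ref{lem:consistent} to record the linear relations among the $G_y$, so that the span of $\{G_0\}\cup\{G_y:w_y=L\}$ has dimension at most $\binom{N}{2}+1$.

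Next I would impose rank one via the vanishing of all $2\times2$ minors, $P_{x_1z_1}P_{x_2z_2}=P_{x_1z_2}P_{x_2z_1}$, evaluated on judiciously chosen $4$-tuples of $M$-subsets — tuples with prescribed pairwise overlaps but distributing the supports of the weight-$L$ strings differently relative to the unions $x_i\cup z_j$, so that the minor relations become clean identities among the quantities $\binom{M-k+m+d-1}{d-1}^{-1}$ and the $T_{x,z}(j)$ that can be analysed for all $d$ at once. The target is to show this system has no non-zero solution unless $M$ is odd and $N\geq 2M+1$, and I anticipate two obstructions. When $N\leq 2M$ one cannot select $x,z$ disjoint; this both limits the attainable overlaps and, via the relations above, collapses many of the $G_y$, leaving too few free parameters to reproduce the $\binom{M-k+m+d-1}{d-1}^{-1}$, which are pairwise distinct for $d\geq 2$. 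When $M$ is even a parity/sign obstruction should appear: a real rank-one symmetric matrix factors as $P_{xz}=\Gamma_x\Gamma_z$ and therefore has a frustration-free sign pattern, whereas expanding a suitable minor identity along an alternating combination of $M$-subsets yields a factor that flips with the parity of $M$ (equivalently, a demand that an alternating binomial sum, a perfect square exactly for $M$ odd, be a square) — incompatible with $P=\underline{\Gamma}\,\underline{\Gamma}^T$ when $M$ is even.

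The main obstacle is pinning down these two thresholds rather than merely exhibiting some obstruction: one must choose the $4$-tuples so that the minor identities isolate the disjointness and parity issues cleanly, confirm that for $M$ odd with $N\geq 2M+1$ no contradiction is forced (so the hypotheses are sharp), and run the argument uniformly in $L$, since the pattern of available overlaps and of the $T_{x,z}(j)$ depends on $L$. It may help to warm up with the fully symmetric subfamily — all $g_y$ equal, where $P$ becomes a scalar $c_k$ on each irreducible $S^{(N-k,k)}$ and cannot be rank one once there are at least three blocks, i.e.\ $M\geq 2$ — and then perturb away from it; the delicate point is that non-symmetric $(g_y)$ genuinely enlarges the family, so the symmetry argument alone does not suffice and the explicit minor computation seems unavoidable.
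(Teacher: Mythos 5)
Your proposal is a strategy outline rather than a proof, and the two mechanisms you rely on are both left conjectural at exactly the points where the work has to be done. For the $N\leq 2M$ case you say that the unavailability of disjoint $x,z$ "leaves too few free parameters", and for $M$ even you predict a "parity/sign obstruction" in which some alternating binomial sum is a perfect square precisely when $M$ is odd. Neither claim is derived, and the second one mis-locates where the parity of $M$ actually enters. In the paper's argument the hypothesis "$N\leq 2M$ or $M$ even" is used in a purely combinatorial covering step, not a sign argument: starting from the observation that $\bra{z_1}P\ket{z_2}$ depends only on $z_1\cup z_2$ (since $z_1\cdot z_2$ and $\bar z_1\cap\bar z_2$ are determined by the union for fixed weight $M$), one propagates the rank-one condition from a string $z_1$ with $\braket{v}{z_1}\neq 0$ to all $z_2$ with $z_1\cdot z_2\geq \half M$, and then to arbitrary $z_2$ by finding $z_3,z_4$ with $z_1\cup z_2=z_3\cup z_4$ and $z_1\cdot z_3,\,z_1\cdot z_4\geq\half M$; such $z_3,z_4$ exist exactly when $M$ is even or $N\leq 2M$, which is the only place the parity hypothesis is used. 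This forces the putative rank-one vector to be the uniform superposition over all weight-$M$ substrings of some fixed weight-$K$ string $t$. Your proposed $2\times 2$-minor system, by contrast, involves all the unknowns $g_y$ simultaneously, and you give no route to showing it is unsolvable uniformly in $d$, $L$, $N$, $M$.

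The second missing ingredient is the step that makes the final contradiction tractable: averaging the identity $\proj{v}=g_0G_0+\sum_y g_yG_y$ over all permutations in $\mathcal{S}_N$. Because $\ket{v}$ has the rigid form above, the symmetrised left-hand side has entries $\binom{N-2M+x\cdot z}{N-K}$ up to normalisation, while the right-hand side collapses to the two-parameter family $g_0G_0+\bigl(\sum_yg_y\bigr)G_0^{(M,L)}/\binom{N}{M}$. Matching entries for each overlap $q=\max(0,2M-N),\ldots,M$ gives at least three equations in the two unknowns once $N>M+1$ and $M\geq 2$, and the paper closes the argument by explicit case analysis (the $K\leq 2M-2$, $K=N-1$, $K=N$, and $N>2M$ regimes), using $d\geq 2$ to rule out the remaining borderline solutions. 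Without this reduction --- or some concrete substitute --- your plan stalls at "choose the $4$-tuples judiciously", which is precisely the part that needs to be exhibited; and the parity obstruction you hope to find (a frustrated sign pattern forcing an alternating sum to be a square) has no support in the actual structure of the problem, where all matrix elements of $G_0$ and the $G_y$ are positive.
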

\begin{proof}{
Central to our proof is the observation that for any 4 bit strings $z_1$ to $z_4$ of equal weight ($M$), if $z_1\cup z_2=z_3\cup z_4$, then $\bra{z_1}P\ket{z_2}=\bra{z_3}P\ket{z_4}$ because it must be that $z_1\cdot z_2=z_3\cdot z_4$ and $\bar z_1\cap\bar z_2=\bar z_3\cap\bar z_4$.

Assume that $P=\proj{v}$ for some vector $\ket{v}$. For any state $\ket{\psi}$, $P\ket{\psi}\propto\ket{v}$. Select two different basis states $z_1$, $z_2$ such that $\half M\leq z_1\cdot z_2<M$. We know that
$$
P\ket{z_1}\propto P\ket{z_2}.
$$
However, look at a basis element $\ket{x}$ for which $\bar{x}\cdot(z_1\oplus z_2)=0$ (i.e.\ $x\cup z_1=x\cup z_2$). By our observation, $\bra{x}P\ket{z_1}=\bra{x}P\ket{z_2}$. Hence, either $\bra{x}P\ket{z_1}=0$ (i.e.\ either $P\ket{x}=0$ or $P\ket{z_2}=0$ for $P$ to be rank 1) or $P(\ket{z_1}-\ket{z_2})=0$.

There must be at least one value $z_1$ for which $\braket{v}{z_1}\neq 0$, so start there. It is either that $\braket{v}{z_1}=\braket{v}{z_2}$, or $P\ket{x}=0$ for all compatible $x$. Pick a specific $x$ for which $x\cdot(z_1\cup z_2)=M$ and $\bar x\cdot(z_1\oplus z_2)=0$. There is always at least one such $x$. This choice means that $z_1\cup x=z_1\cup z_2$ so that, by our observation,
$$
\bra{z_1}P(\ket{z_2}-\ket{x})=0,
$$
but now we know that $P\ket{z_1}\neq 0$ by assumption, so either $P\ket{z_2}=P\ket{x}=0$ or $\braket{v}{z_1}=\braket{v}{z_2}=\braket{v}{x}$. Ultimately, this propagates -- either $\braket{v}{z_1}=\braket{v}{z_2}$ for all $z_2: z_2\cdot z_1\geq \half M$ or $P\ket{z_2}=0$ for all $z_2: \half M\leq z_2\cdot z_1<M$.

However, we can now use our observation again. For any arbitrary $z_2$, and the $z_1$ that we fixed, if $M$ is even or $N\leq 2M$, there always exist $z_3$ and $z_4$ such that $z_1\cup z_2=z_3\cup z_4$ and $z_1\cdot z_3\geq\half M$, $z_1\cdot z_4\geq\half M$ (If $M$ is odd and $N>2M$, then for $z_2: z_1\cdot z_2=0$, there are no suitable choices of $z_3,z_4$). The only possible solutions to this are that for some $t: w_x=K$ ($K\geq M$),
$$
\ket{v}=\frac{1}{\sqrt{\binom{K}{M}}}\sum_{\substack{z:w_z=M\\ z\cdot t=M}}\ket{z}.
$$

Of course, we already know that in the case of $N=M+1$, it is possible to find linear combinations for which $K=M$ (i.e.\ a projector on a single basis state). So, we aim to show that this is impossible for larger values of $N$. Consider an arbitrary permutation $\pi$ of the $N$-bit strings. If we take a sum of the relation
$$
\proj{v}=g_0G_{\underline 0}+\sum_yg_yG_y
$$
over all such permutations, we find that
$$
\frac{1}{\binom{N}{K}\binom{K}{M}}\sum_{x,z}\ket{x}\bra{z}{\textstyle\binom{N-2M+x\cdot z}{N-K}}=g_0G_{\underline 0}+\frac{\sum_yg_y}{\binom{N}{M}}G_{\underline 0}^{(M,L)}.
$$
All these matrices have matrix elements which depend only on the values $x\cdot z$, and we thus have a set of simultaneous equations
$$
\frac{\binom{K-M}{M-q}}{\binom{N-M}{M-q}}=\frac{1}{\binom{M+d-1-q}{d-1}}\tilde g_0+\frac{1}{\binom{2M+d-1-1}{d-1}}g_L
$$
to be satisfied, for $q=\max(0,2M-N)\ldots M$. Note that, under the conditions of the theorem ($N>M+1$ and $M\geq 2$), this means that there are at least 3 separate equations to satisfy, and only two free parameters to select, and thus cannot be solved in general. Indeed, since $K$ only appears on the LHS of the equation, it must be that for any given $M,N$, there is no more than one compatible value of $K$.  For simplicity, we have assumed that $L=M$, and have made the replacements $\tilde g_0=\binom{N}{M}g_0$, $g_L=\binom{N+d-1}{M}\sum_yg_y$.

In the case where $N\leq 2M$, we start by considering the possibility that $K\leq 2M-2$ (the choice of $L$ is irrelevant here). In this case, $q=2M-K-1$ and $q=2M-K-2$ are both valid values, and mean that the left-hand side of the equation is 0 in both cases. Hence, $\tilde g_0=g_L=0$. This is clearly incompatible with any instance where the left-hand side is non-zero, such as $q=M$. Only the special cases of $K=N-1,N$ remain. For $K=N$, take the cases $q=M,M-1,M-2$, and solve simultaneously between the equation pairs $M,M-1$ and $M-1,M-2$ to derive two possible solutions for $g_L$. These are
$$
\frac{g_LM}{d}=\binom{M+d}{d}=\binom{M+d-1}{d},
$$
which are clearly never equal, so there is no satisfying assignment. The equivalent expression for $K=N-1$ is
\begin{equation*}
\begin{split}
\frac{g_LM(d-1)(N-M)}{d\binom{M+d-1}{d}}&=\frac{M+d}{M}((N-M)(d-1)-d)	\\
&=(N-M-1)(d-1)-d-1,
\end{split}
\end{equation*}
which would require
$$
d=\frac{N-2M-1}{N-M-2},
$$
but $d\geq 2$, so this cannot happen.

To analyse larger values of $N$, we first observe that our existing arguments automatically cover the case $K\leq 2M-2$. We can thus restrict to the range $2M-1<K$. In a similar vein to before, we consider cases of $q$ in pairs, but now it's $q=0,1$ and $q=1,2$, which we are assured exist due to $N>2M$. We find that
\begin{eqnarray*}
g_L\frac{M(d-1)\binom{K-M}{M-1}}{\binom{N-M}{M-1}\binom{2M+d}{d}d}&=&(M+1)\left(\frac{K-2M+1}{N-2M+1}-\frac{M+d}{M+1}\right)	\\
&=&\frac{(2M+d+1)(M+2)}{2M+1}\left(1-\frac{M+d+1}{M+2}\frac{N-2M+2}{K-2M+2}\right).
\end{eqnarray*}
This equality can be rewritten as
\begin{equation*}
\frac{K-2M+2}{N-2M+1}\frac{2M+1}{2M+d+1}-1=
\frac{N-M+d-1}{(M+d)(N-2M+1)-(M+1)(K-2M+1)}.
\end{equation*}
The left-hand side is negative, while the right is positive. Satisfaction is impossible.
}\end{proof}

However, we do still benefit from a generalization of Lemma \ref{lem:consistent}.

\begin{lemma}\label{lem:consistent2}
Let $2M< L<N-2M$. Define the $\binom{N}{2M}\times\binom{N}{L}$ matrix $X$ as $\bra{x}X\ket{y}=\delta_{x\cdot y=2M}$ where $x,y\in\{0,1\}^N$ and $w_x=2M, w_y=L$. Any $\underline{v}\in\text{Ker}(X)$ satisfies $\sum_{y\in\Lambda}v_yG_y=0$.
\end{lemma}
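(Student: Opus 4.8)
The plan is to mimic the proof of Lemma~\ref{lem:consistent}: take an arbitrary matrix element of the candidate zero matrix and show it vanishes for every $\underline v\in\text{Ker}(X)$. The relevant matrix now acts on the weight-$M$ basis $\{\ket{x}:w_x=M\}$, and since $w_y=L$ we have $G_y=G_y^{(M,L)}=G_y^{(M)}$, so by the definition of $G_y^{(M)}$,
$$
\bra{x}\sum_{y\in\Lambda}v_yG_y\ket{z}=\sum_{y:\,w_y=L}\frac{v_y}{\binom{M+d-1-x\cdot z+w_{\bar x\cap\bar z\cap y}}{d-1}}\qquad(w_x=w_z=M).
$$
The key observation is that $w_{\bar x\cap\bar z\cap y}=L-w_{y\cap(x\cup z)}$, so the summand depends on $y$ only through the single integer $j:=w_{y\cap(x\cup z)}$. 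Writing $T:=x\cup z$, a set of size $w_T=2M-x\cdot z\le 2M$ (the diagonal case $x=z$ being $w_T=M$), I would group the sum by the value of $j$, reducing the claim to showing that each bucket sum $\sum_{y:\,w_{y\cap T}=j}v_y$ vanishes.

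From here the argument is purely combinatorial on the Boolean lattice of subsets of $T$. First I would strengthen the kernel conditions: since $w_w=2M$ forces $w\cdot y=2M\iff w\subseteq y$, the condition $\bra{w}X\ket{v}=0$ reads $\sum_{y\supseteq w}v_y=0$ for every weight-$2M$ set $w$. For an arbitrary set $S$ with $|S|=s\le 2M$, summing this over all weight-$2M$ supersets $w$ of $S$ and counting, for each $y\supseteq S$, the $\binom{L-s}{2M-s}$ weight-$2M$ sets $w$ with $S\subseteq w\subseteq y$, gives $\binom{L-s}{2M-s}\sum_{y\supseteq S}v_y=0$; since $L>2M\ge s$ the coefficient is nonzero, so every upward sum $\sum_{y\supseteq S}v_y$ with $|S|\le 2M$ vanishes. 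Applying this to all $S\subseteq T$ (each has $|S|\le w_T\le 2M$) and Möbius-inverting on the subset lattice of $T$ — equivalently a downward induction on $|U|$ applied to $c_U:=\sum_{y:\,y\cap T=U}v_y$, using $\sum_{U\in[S,T]}c_U=\sum_{y\supseteq S}v_y=0$ — yields $c_U=0$ for all $U\subseteq T$. Hence $\sum_{y:\,w_{y\cap T}=j}v_y=\sum_{U\subseteq T,\,|U|=j}c_U=0$, every matrix element vanishes, and $\sum_{y\in\Lambda}v_yG_y=0$.

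The only real subtlety I anticipate is this strengthening step: $\text{Ker}(X)$ is defined only by the weight-$2M$ conditions, and one must extract from them the vanishing of all the coarser bucket sums. This is exactly where the hypothesis $L>2M$ enters (it keeps $\binom{L-s}{2M-s}\neq0$ for every $s\le 2M$), while $L<N-2M$ keeps us in the generic regime in which, for $w_T\le 2M$, the index $j$ runs over $0,\ldots,w_T$ and the $G_y^{(M)}$ that appear are the intended ones; buckets that happen to be empty for smaller $N$ contribute zero and cause no difficulty. Everything else is the same bookkeeping as in Lemma~\ref{lem:consistent}, carried out with weight-$M$ labels and the overlap set $x\cup z$ (of size up to $2M$) in place of the two-element overlap used there; in particular Lemma~\ref{lem:consistent} is recovered as the $M=1$ case, with the three buckets $j=0,1,2$ re-expressed through $1$, $y_n+y_m$ and $y_ny_m$.
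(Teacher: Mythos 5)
Your proposal is correct and takes essentially the same route as the paper: you note that $\bra{x}\sum_y v_yG_y\ket{z}$ depends on $y$ only through the overlap $w_{y\cap(x\cup z)}$, and then upgrade the kernel conditions of $X$, via a triangular counting/induction argument, to the vanishing of every bucket sum $\sum_{y:\,w_{y\cap(x\cup z)}=j}v_y$. The only difference is organizational -- you first establish $\sum_{y\supseteq S}v_y=0$ for all $|S|\le 2M$ (where $L>2M$ enters) and then M\"obius-invert on the subsets of $x\cup z$, whereas the paper runs a downward induction on the overlap $q$ directly against weight-$2M$ strings; your bookkeeping handles the sets $x\cup z$ of weight strictly below $2M$ a bit more explicitly, but the substance is the same.
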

\begin{proof}{
Again, if $Y=\sum_{y\in\Lambda}v_yG_y$, then we must consider the diagonal, $\bra{x}Y\ket{x}$ and off-diagonal elements $\bra{x}Y\ket{z}$. We have that
\begin{eqnarray*}
\bra{x}Y\ket{x}&=&\sum_{q=0}^M\frac{1}{\binom{M+L+d-q-1}{d-1}}\sum_{y:y\cdot x=q}v_y	\\
\bra{x}Y\ket{z}&=&\sum_{q=0}^{2M-x\cdot z}\frac{1}{\binom{M+L+d-q-1-x\cdot z}{d-1}}\sum_{y:y\cdot (x\cup z)=q}v_y,	\\
\end{eqnarray*}
so all we need to know is that for all $q=0,1,\ldots, 2M$, $\sum_{y:y\cdot x=q}v_y=0$ when $x\in\{0,1\}^N: w_x=q$. Pick any such $x$, then
$$
\bra{x}X\ket{v}=0=\sum_{y:x\cdot y=2M}v_y
$$
by definition. This proves a base case for induction. Now assume that
$$
0=\sum_{y:x\cdot y=q}v_y
$$
for all $k+1\leq q\leq 2M$, and we aim to prove it for the value $k$. Consider
\begin{eqnarray*}
0&=&\sum_{\substack{z\in\{0,1\}^N\\w_z=2M\\z\cdot x=k}}\bra{z}X\ket{v}=\sum_{z:z\cdot x=k}\sum_{y:z\cdot y=2M}v_y\\
&=&\sum_{q=k}^{2M}{\textstyle{\binom{2M}{k}\binom{N-2M}{2M-k}\binom{2M-k}{q-k}\binom{N-4M+k}{L-2M+k-q}}}\sum_{y:y\cdot x=q}v_y.
\end{eqnarray*}
Using our prior assumption, the desired result is clear, and the inductive step is proven.
}\end{proof}

Hence, our general problem is reduced from $\binom{N}{L}$ quadratic constraints to $\binom{N}{2M}$ quadratic ones, with the difference being made up by linear constraints.

\section{Computational Complexity of Cloning}

Ultimately, we would like to be able to answer the question ``Given a set of fidelities $\{F_y\}$ for $y\in\Lambda\subseteq\{0,1\}^N$, is cloning possible with these fidelities?". This yes/no question lends itself to an analysis of its computational complexity. For any family of sets $\Lambda(N)$ for increasing $N$, a solution is easily verified -- we can be given a proof in the form of a set of $\{\beta_x\}$, and all we have to do is verify the normalization condition, and evaluate the achieved fidelities. The run time of such a check is polynomial in the size of the problem instance $|\Lambda|$. (Note that, unless $M$ is finite, this does not necessarily mean that the run time is polynomial in $N$.) Thus, the problem is contained within the complexity class NP.

The fact that our computational problem is necessarily phrased as a non-convex optimization problem suggests that it is a hard problem (assuming P$\neq$NP), and we conjecture that the problem is NP-complete, but have no proof. While the matrices of Eq.\ (\ref{eq:cons2}) are reminiscent of those in \cite{pardalos1991}, in fact it is not even clear that large $N$ scaling is an interesting question, because we know that if all clones have to have non-trivial fidelities, the optimal strategy must tend towards the best classical strategy (measure and reproduce the measurement result) on all but a finite subset of the inputs and outputs (this is reminiscent of the behaviour emerging from the case study of $2\rightarrow 4$ cloning of qubits in section \ref{sec:casestudy}). Of course, there may be subtleties of the attainable fidelities, but these could easily lie at the the limit of being so subtle as to be irrelevant. As such, the issue of computational complexity remains open.

\section{Quantum Circuits for Asymmetric Cloning} \label{sec:circuits}

Let us assume that we have found that a particular set of fidelities is achievable, and hence have a corresponding set of parameters $\{\beta_x\}$. How is cloning implemented? In principle, we know the unitary operation that we could implement -- it's specified in Lemma \ref{lem:upper}, but we don't have an explicit circuit construction for it to show that it can be efficiently implemented.

Previously \cite{wang2011}, it has been suggested that the optimal cloner could be implemented by applying a superposition of different swap operations: start with the state $\ket{B^{(M)}}\ket{\Phi}$ and, with amplitude $\beta_x$, swap the spins 1 to $M$ with those specified by the bit string $x$ in order to produce $\sum_x\beta_x\ket{B_{x}}\ket{\Phi}_{\bar x}$. However, the construction of Wang et al. \cite{wang2011} was only a sketch: it did not contain an efficiency analysis.
Moreover, the implementation was necessarily probabilistic.
This is inappropriate for quantum cloning because, with only  one set of input states, there is no option to repeat until successful. For this reason, we do not give an explicit construction, but it is closely related to the next construction.

In any case, a far better construction is to make the state $\ket{\chi}$ and to teleport the input states into the input spins. While this will use a similar technique to create the state, the advantage is that the probabilistic part of the algorithm can be allowed to fail as it doesn't affect the state to be cloned, and can therefore be run using a repeat-until success strategy. The teleportation protocol runs as follows: one starts with the $M$ input copies of state $\ket{\psi}$ on system $\text{IN}'$, and implements a Bell measurement on the symmetric subspaces of $\text{IN}'$ and $\text{IN}$, i.e.\ projecting onto the basis
$$
\ket{\psi_{ab}}=\frac{1}{\sqrt{\binom{M+d-1}{M}}}\!\!\!\!\sum_{i=0}^{\binom{M+d-1}{d-1}-1}\!\!\!\!\omega^{ia}\ket{\phi_i^\text{IN'}}\ket{\phi^\text{IN}_{i+b\text{ mod }\binom{M+d-1}{M}}}
$$
where $\omega=e^{2\pi i/\binom{M+d-1}{M}}$ and $a,b=0,1,\ldots \binom{M+d-1}{M}-1$. If the result is $a=b=0$, then the teleportation protocol works exactly as intended, and the clones appear on the output space $\text{OUT}$. Otherwise, a correction operation is required which maps states $$\sum_x\beta_x\omega^{-ia}\ket{\phi^x_{i+b\text{ mod }\binom{M+d-1}{M}}}\ket{\Phi}_{\bar x}\mapsto\sum_x\beta_x\ket{\phi^x_{i}}\ket{\Phi}_{\bar x}.$$ This is unitary by construction.

The required unitary is closely related to that for the unitary implementation of cloning and, as such, we do not have a general method for its construction (or an efficiency analysis). However, for the special case of $M=1$, the corrective gates are particularly simple -- each different Bell state projection effectively implements teleportation with a different single-qubit rotation applied to it (for $d=2$, these are just the standard Pauli operators), and so the cloned states are just the ideal clones with those same rotations applied. They can therefore be removed by transversal application (i.e.\ simultaneous application to each of the $N$ output spins) of the corresponding inverse operation. Indeed, for general $M$ we can compensate for the different $a$ answers in this way, although the $b$ answers are not so easily compensated. However, even if we just proceeded to post-select on a $b=0$ result, there would be a success probability of $1/\binom{M+d-1}{d-1}$, which is better than the success rate of the previous method.

To construct the desired state $\ket{\chi}$ (in fact, we will give the construction for the mixed state described in Corollary \ref{cor:1}), we first produce a state $\ket{\chi_0}$:
$$
\frac{\sum_{i=0}^{\binom{M+d-1}{d-1}-1}\sum_{j=0}^{\binom{N-M+d-1}{d-1}-1}\!\!\!\!\!\ket{\phi_i^\text{IN}}\ket{\phi_i^M}\ket{\phi_j^{N-M}}\ket{\phi_j^{N-M}}}{\sqrt{\binom{M+d-1}{d-1}\binom{N-M+d-1}{d-1}}}
$$
where the last set of $N-M$ spins are ancilla systems $A$ that one should trace over to return the desired mixed state (although keeping the ancillas is probably useful for applying the compensatory rotations to account for the different measurement results). This is easy since we can start from a GHZ-like state and can convert between computational basis states and symmetric states \cite{bacon2007}. Next, we need to produce a state
$$
\ket{\beta}=\frac{\sum_x\beta_x\ket{x_1}\ket{x_2}\ket{x_3}\ldots\ket{x_M}}{\sqrt{\sum_x\beta_x^2}},
$$
where we use $\ket{x_n}$ to denote a set of $\lceil\log_2N\rceil$ qubits containing a binary representation of the value $m$ where the $m$th bit of $x$ is the $n$th 1 in the string (to avoid confusion when swapping different sites, one should make the small modification that any bits $m$ in $x$ which are 1 and $m\leq M$ must be specified as $x_m=m$). For fixed $M$, since there are only $\binom{N}{M}$ terms $\beta_x$, the probability distribution $\beta_x^2$ can be efficiently integrated. Thus, $\ket{\beta}$ is easily constructed \cite{grover2002}. Next, we take $\ket{\beta}\ket{\chi_0}$ and apply controlled-swaps controlled off the spins of the $\ket{\beta}$ system. In essence, if $\ket{x_n}$ takes value $m$ then apply a swap between spins $n$ and $m$ of the output space in system $\ket{\chi_0}$. Therefore, we have produced the state
\begin{equation*}
\frac{\sum_x\beta_x\ket{x_1}\ket{x_2}\ket{x_3}\ldots\ket{x_M}}{\sqrt{\sum_x\beta_x^2}}\times \frac{\sum_{i=0}^{\binom{M+d-1}{d-1}-1}\sum_{j=0}^{\binom{N-M+d-1}{d-1}-1}\ket{\phi_i^{\text{IN}}}\ket{\phi_i^x}\ket{\phi_j^{\bar x}}\ket{\phi_j^{A}}}{\sqrt{\binom{M+d-1}{d-1}\binom{N-M+d-1}{d-1}}}.
\end{equation*}
Now we project all the $\ket{\beta}$ qubits onto the state $\ket{+}=(\ket{0}+\ket{1})/\sqrt{2}$. If successful, the output state is that desired,
$$
\sum_x\beta_x\frac{\sum_{i=0}^{\binom{M+d-1}{d-1}-1}\sum_{j=0}^{\binom{N-M+d-1}{d-1}-1}\ket{\phi_i^\text{IN}}\ket{\phi_i^x}\ket{\phi_j^{\bar x}}\ket{\phi_j^A}}{\sqrt{\binom{M+d-1}{d-1}\binom{N-M+d-1}{d-1}}},
$$
and this happens with a probability
$$
\frac{1}{N^M\sum_x\beta_x^2}.
$$
Since $1=\sum_{x,z}\beta_x\beta_z/\binom{M+d-1-x\cdot z}{d-1}\geq\sum_x\beta_x^2$, this probability is no smaller than $N^{-M}$. For fixed $M$, a polynomial number of repetitions produces the target state.

\section{Conclusions} \label{sec:conclusion}

This paper has demonstrated necessary and sufficient conditions for optimal $1\rightarrow N$ ($L=1,N-1$) and $N-1\rightarrow N$  universal cloning. In principle, these conditions can be used for bounding many-body correlations in a quantum system. We also explored why the cases of $2\leq M\leq N-2$ are more challenging -- the conditions are necessarily formulated as non-convex constraints (whereas those that we have been able to solve can be reduced to linear constraints, which are consequently convex). We conjecture that, in such cases, the cloning problem is NP-complete to resolve, and anticipate that the formulation provided in this paper should prove a suitable starting point for such studies of the computational complexity.

\appendix

\label{app:1}

We study the spectral properties of the matrices $G_{\underline 0}^{(M)}$ which, while only tangentially relevant to the main text, may prove useful for future investigations.

\begin{lemma}\label{lem:evalues1}
For a fixed $N$, consider a matrix
$$
G^{(M)}=\sum_{x,z:w_x=w_z=M}f_{x\cdot z}^{(M)}\ket{x}\bra{z}.
$$
If $\lambda$ is an eigenvalue of $G^{(M)}$ of degeneracy $g$ and
$$
\tilde\lambda=\frac{(M+1-k)f_{k+1}^{(M+1)}+(N-2M-1+k)f_k^{(M+1)}}{(M+1-k)f_k^{(M)}+kf_{k-1}^{(M)}}
$$
is independent of $k$, then $\lambda\tilde\lambda$ is an eigenvalue of $G^{(M+1)}$ with degeneracy $g$. If $N>2M$, then there are $\binom{N}{M+1}-\binom{N}{M}$ additional (degenerate) eigenvalues given by
$$
\frac{\binom{N}{M+1}f_{M+1}^{(M+1)}-\tilde\lambda\binom{N}{M}f_M^{(M)}}{\binom{N}{M+1}-\binom{N}{M}}.
$$
Recursive calculation of the eigenvalues starts from $M=0$ with a single eigenvalue of $f_0^{(0)}$.
\end{lemma}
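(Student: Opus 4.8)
The plan is to relate $G^{(M+1)}$ to $G^{(M)}$ via the raising operator $U$ from the weight-$M$ to the weight-$(M+1)$ sector, acting on basis states by
$$
U\ket{x}=\sum_{\substack{z\in\{0,1\}^N:\,w_z=M+1\\ x\subset z}}\ket{z}\qquad(w_x=M),
$$
whose transpose $D:=U^T$ satisfies $D\ket{z}=\sum_{x\subset z,\,w_x=M}\ket{x}$. The computational heart of the argument is the intertwining identity
$$
G^{(M+1)}U=\tilde\lambda\,U\,G^{(M)}.
$$
I would prove this by comparing matrix elements: for $w_x=M$, $w_{z'}=M+1$ and $k:=x\cdot z'$, a direct enumeration of which added (resp.\ removed) site lies inside $z'$ gives $\bra{z'}G^{(M+1)}U\ket{x}=(M+1-k)f^{(M+1)}_{k+1}+(N-2M-1+k)f^{(M+1)}_k$ while $\bra{z'}U G^{(M)}\ket{x}=(M+1-k)f^{(M)}_k+k f^{(M)}_{k-1}$, so the identity holds exactly when the ratio defining $\tilde\lambda$ is independent of $k$, which is the hypothesis.

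It follows at once that if $\ket{v}$ is an eigenvector of $G^{(M)}$ with eigenvalue $\lambda$ then $G^{(M+1)}U\ket{v}=\tilde\lambda\lambda\,U\ket{v}$. For $N>2M$ the raising operator is injective (the standard fact that the subset up-operator from level $M$ is injective when $M+1\le N-M$), so a $g$-dimensional eigenspace of $G^{(M)}$ for $\lambda$ is carried faithfully to a $g$-dimensional eigenspace of $G^{(M+1)}$ for $\tilde\lambda\lambda$, and — since $\tilde\lambda\ne 0$ in the cases of interest — distinct eigenvalues of $G^{(M)}$ give distinct rescaled eigenvalues of $G^{(M+1)}$ with unchanged degeneracies. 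This accounts for $\binom{N}{M}$ of the $\binom{N}{M+1}$ eigenvalues of $G^{(M+1)}$, namely those supported on $\mathrm{Im}(U)$.

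For the remaining $\binom{N}{M+1}-\binom{N}{M}$ eigenvalues I would argue as follows. Since $G^{(M+1)}$ is real symmetric and preserves $\mathrm{Im}(U)$ (by the intertwining identity), it also preserves $\mathrm{Im}(U)^\perp=\ker(D)$. The key point is that $G^{(M+1)}$ acts on $\ker(D)$ as a single scalar $\mu$: this follows from the relation $UD=(M+1)\identity+A_1$, with $A_1=\sum_{x\cdot z=M}\ket{x}\bra{z}$ the first adjacency matrix of the Johnson scheme on weight-$(M+1)$ strings (so every vector of $\ker(D)$ is an $A_1$-eigenvector of eigenvalue $-(M+1)$), together with the classical fact that the matrices $\sum_{x,z}f_{x\cdot z}\ket{x}\bra{z}$ lie in the Bose--Mesner algebra of this metric scheme and are hence polynomials in $A_1$; thus $G^{(M+1)}$ is constant on the $A_1$-eigenspace containing $\ker(D)$. (Equivalently, $\ker(D)\cong S^{(N-M-1,M+1)}$ is an irreducible $\mathcal{S}_N$-module, $G^{(M+1)}$ commutes with the permutation action, and Schur's lemma applies.) Finally, $\mu$ is pinned down by the trace: $\Tr G^{(M+1)}=\binom{N}{M+1}f^{(M+1)}_{M+1}$ must equal $\tilde\lambda\,\Tr G^{(M)}+\mu\big(\binom{N}{M+1}-\binom{N}{M}\big)=\tilde\lambda\binom{N}{M}f^{(M)}_M+\mu\big(\binom{N}{M+1}-\binom{N}{M}\big)$, which rearranges to the stated formula; the base case $M=0$ is immediate since $G^{(0)}=(f^{(0)}_0)$. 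I expect the main obstacle to be justifying that $G^{(M+1)}$ is scalar on $\ker(D)$ — either by assembling the Johnson-scheme/Bose--Mesner facts cleanly or by giving a self-contained proof that $\ker(D)$ is an irreducible symmetric-group representation; the rest is the bookkeeping of the intertwining identity plus a trace.
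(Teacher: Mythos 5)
Your proposal is correct, and its first half coincides with the paper's argument: the paper's candidate eigenvector $\ket{\lambda^{(M+1)}}$ is exactly $U\ket{\lambda^{(M)}}$, and the two coefficient counts you perform to establish $G^{(M+1)}U=\tilde\lambda\,UG^{(M)}$ (added site inside or outside $z'$; removed site inside or outside $x$) are precisely the counts in the paper's proof, so your intertwining formulation is a tidier packaging — with the useful bonus that it makes explicit why degeneracies are preserved (injectivity of $U$ when $N>2M$), a point the paper leaves implicit. Where you genuinely differ is the second half, the claim that $G^{(M+1)}$ carries a single eigenvalue on the complementary $\bigl(\binom{N}{M+1}-\binom{N}{M}\bigr)$-dimensional space. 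The paper assembles $G_T=\bigoplus_M G^{(M)}$, notes it commutes with $J_Z$ and, by permutation invariance, with $J^2$, and reads off from the decomposition $\bigoplus_j\mathcal{M}_j\otimes\identity$ that the eigenvalues appearing for the first time at level $M+1$ occupy a single multiplicity space of dimension $\binom{N}{M+1}-\binom{N}{M}$. You instead stay entirely at level $M+1$: identify the new space as $\ker(D)=\mathrm{Im}(U)^{\perp}$, invariant because $G^{(M+1)}$ is symmetric and preserves $\mathrm{Im}(U)$, and deduce scalarity from $UD=(M+1)\identity+A_1$ together with the P-polynomial property of the Johnson scheme (so $G^{(M+1)}$ is a polynomial in $A_1$ and $\ker(D)$ sits inside the $A_1$-eigenspace of eigenvalue $-(M+1)$). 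Both are sound; the paper's route gets scalarity and the degeneracy count in one stroke from standard angular-momentum bookkeeping across all levels, while yours is local to one level and needs only the classical distance-regularity of the Johnson graph — and your parenthetical Schur's-lemma variant, with $\ker(D)\cong S^{(N-M-1,M+1)}$, is essentially the paper's argument in symmetric-group language, that Specht module being the multiplicity space of spin $j=N/2-(M+1)$. The remaining loosenesses (reading the first claim in the regime $N>2M$ where $U$ is injective, and ignoring possible accidental coincidence of the new eigenvalue with a lifted one) are shared with the paper and do not constitute a gap in your argument.
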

\begin{proof}{
Let $\ket{\lambda^{(M)}}=\sum_{y:w_y=M}\lambda_y\ket{y}$ satisfy $G^{(M)}\ket{\lambda^{(M)}}=\lambda\ket{\lambda^{(M)}}$. We aim to prove that
$$
\ket{\lambda^{(M+1)}}=\sum_{\substack{x\in\{0,1\}^N\\w_x=M+1}}\sum_{\substack{y\in\{0,1\}^N\\w_y=M\\x\cdot y=M}}\lambda_y\ket{x}
$$
satisfies $G^{(M+1)}\ket{\lambda^{(M+1)}}=\tilde\lambda\lambda\ket{\lambda^{(M+1)}}$. Since $\ket{\lambda^{(M)}}$ is an eigenvector of $G^{(M)}$ then,
$$
\lambda \lambda_y=\sum_{\tilde x:w_{\tilde x}=M}\lambda_{\tilde x}f^{(M)}_{\tilde x\cdot y}\qquad \forall y:w_y=M.
$$
Selecting an $x$ with $w_x=M+1$, then
$$
\lambda\sum_{\substack{y:w_y=M\\x\cdot y=M}}\lambda_y=\sum_{\substack{y:w_y=M\\x\cdot y=M}}\sum_{\tilde x:w_{\tilde x}=M}\lambda_{\tilde x}f^{(M)}_{\tilde x\cdot y}
$$
Upon performing the sum over $y$ first, we have to consider that the difference between the string $x$ and any choice of $y$ is a single site (which is a 1 for string $x$ and a 0 for $y$), so there are $M+1$ different strings $y$, but this means that $x\cdot\tilde x$ and $y\cdot\tilde x$ must either be the same, or differ by 1. Hence,
\begin{equation}
\lambda\sum_{\substack{y:w_y=M\\x\cdot y=M}}\lambda_y=\!\!\!\sum_{\tilde x:w_{\tilde x}=M}\!\!\!\!\lambda_{\tilde x}\left((M+1-x\cdot\tilde x)f^{(M)}_{\tilde x\cdot x}+x\cdot\tilde x f^{(M)}_{\tilde x\cdot x-1}\right)	\label{eqn:M}
\end{equation}
With this relation in place, we can proceed to look at the case of $M+1$. We need to prove that for all $x: w_x=M+1$,
$$
\lambda\tilde\lambda\!\!\!\!\sum_{\substack{y: w_y=M\\ x\cdot y=M}}\lambda_y\!=\!\!\!\!\sum_{z:w_z=M+1}\sum_{\substack{y:w_y=M\\z\cdot y=M}}f^{(M+1)}_{x\cdot z}\lambda_y
$$
We reorder the two sums,
\begin{eqnarray*}
\text{RHS}&=&\sum_{y:w_y=M}\sum_{\substack{z:w_z=M+1\\z\cdot y=M}}\lambda_yf^{(M+1)}_{x\cdot z}	\\
&=&\!\!\!\!\sum_{y:w_y=M}\!\!\!\!\lambda_y\left((M+1-x\cdot y)f^{(M+1)}_{x\cdot y+1}+(N-2M-1+x\cdot y)f^{(M+1)}_{x\cdot y}\right)
\end{eqnarray*}
Provided $\tilde\lambda$ is independent of the value $k\equiv x\cdot y$, this is as desired.

When $N>2M$, increasing the value of $M$ increases the number of eigenvalues. If all the additional eigenvalues take on the same value, this value must be given by
$$
\frac{\Tr(G^{(M+1)})-\tilde\lambda\Tr(G^{(M)})}{\binom{N}{M+1}-\binom{N}{M}}=\frac{\binom{N}{M+1}f_{M+1}^{(M+1)}-\tilde\lambda\binom{N}{M}f_{M}^{(M)}}{\binom{N}{M+1}-\binom{N}{M}}
$$
Proving that all the new eigenvalues are the same requires more careful consideration. Define the matrix
$$
G_T=\sum_{x,z\in\{0,1\}^N}\delta_{w_x,w_z}f^{(w_x)}_{x\cdot z}\ket{x}\bra{z}\equiv\bigoplus_{M=0}^NG^{(M)},
$$
which effectively corresponds to a Hilbert space of $N$ qubits. Obviously, the different values of $w_x$ (the separate $G^{(M)}$) define excitation subspaces, i.e.\ $G_T$ commutes with the $J_Z$ operator for $N$ qubits. Furthermore, $G_T$ is invariant under permutations of those $N$ qubits: for a permutation $\pi$ acting on bit strings, $(\pi x)\cdot(\pi z)=x\cdot z$. So, $G_T$ also commutes with the total angular momentum operator $J^2$, and we know it must therefore decompose into a structure of the form (given explicitly for even $N$ \cite{bartlett2007})
$$
\bigoplus_{j=0}^{N/2}\mathcal{M}_j\otimes\identity
$$
where the $\identity$ term associated with the index $j$ is of dimension
$$
\binom{N}{\frac{N}{2}-j}-\binom{N}{\frac{N}{2}-j-1}.
$$
So, we can clearly identify the `new' eigenvalues appearing for a particular value of $M=\half N-j$ as being the first instance of the $\mathcal{M}_j$ subsystem (which populates the $w_x=\half N-j$ to $\half N+j$ excitation subspaces), and therefore have the correct degeneracy to all have the same eigenvalue.
}\end{proof}

\begin{corollary} \label{cor:evalues}
The eigenvalues of matrix $G_{\underline 0}^{(M)}$ are 
$$
\frac{\binom{d-2+k}{k}\binom{N+d-1}{M}}{\binom{M+d-1}{M}\binom{N+d-1}{k}}\qquad k=0,\ldots M
$$
with degeneracy $\binom{N}{k}-\binom{N}{k-1}$.
\end{corollary}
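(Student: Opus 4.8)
The plan is to prove the corollary by induction on $M$, using Lemma~\ref{lem:evalues1} as the engine. The first observation is that $G_0^{(M)}$ is precisely a matrix of the type treated there: setting $y=0$ in the definition annihilates $w_{\bar x\cap\bar z\cap y}$, so $G_0^{(M)}=\sum_{x,z}f_{x\cdot z}^{(M)}\ket{x}\bra{z}$ with $f_k^{(M)}=1/\binom{M+d-1-k}{d-1}$. Everything then comes down to two binomial simplifications: that the ratio $\tilde\lambda$ of Lemma~\ref{lem:evalues1} does not depend on $k$ (so that the lemma actually applies), and that pushing the claimed eigenvalue list through the recursion reproduces the claimed list at the next level.

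For the first, I would substitute $f_{k+1}^{(M+1)}=1/\binom{M+d-1-k}{d-1}$, $f_k^{(M+1)}=1/\binom{M+d-k}{d-1}$, $f_k^{(M)}=1/\binom{M+d-1-k}{d-1}$, $f_{k-1}^{(M)}=1/\binom{M+d-k}{d-1}$ into the formula for $\tilde\lambda$ and clear the common denominators using $\binom{M+d-k}{d-1}=\frac{M+d-k}{M+1-k}\binom{M+d-1-k}{d-1}$. After the dust settles the numerator is $\binom{M+d-1-k}{d-1}(N+d-1-M)$ and the denominator is $\binom{M+d-1-k}{d-1}(M+d)$, so $\tilde\lambda=\frac{N+d-1-M}{M+d}$, visibly independent of $k$; Lemma~\ref{lem:evalues1} therefore applies at every step.

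The base case is $M=0$: $G_0^{(0)}$ is the $1\times1$ matrix $[\,1/\binom{d-1}{d-1}\,]=[1]$, which matches the $k=0$ entry of the asserted list with degeneracy $\binom{N}{0}-\binom{N}{-1}=1$. For the inductive step, assume the formula holds at level $M$. Each eigenvalue at index $k$ becomes, by Lemma~\ref{lem:evalues1}, an eigenvalue $\mu_k^{(M)}\tilde\lambda$ of $G_0^{(M+1)}$ of the same degeneracy $\binom{N}{k}-\binom{N}{k-1}$; using $\binom{N+d-1}{M+1}=\frac{N+d-1-M}{M+1}\binom{N+d-1}{M}$ and $\binom{M+d}{M+1}=\frac{M+d}{M+1}\binom{M+d-1}{M}$ one verifies $\mu_k^{(M)}\tilde\lambda=\mu_k^{(M+1)}$. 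When $N>2M$ there is in addition a block of $\binom{N}{M+1}-\binom{N}{M}$ equal eigenvalues; since $f_{M+1}^{(M+1)}=f_M^{(M)}=1/\binom{d-1}{d-1}=1$, Lemma~\ref{lem:evalues1} gives their value as $\big(\binom{N}{M+1}-\tilde\lambda\binom{N}{M}\big)\big/\big(\binom{N}{M+1}-\binom{N}{M}\big)$, and substituting $\binom{N}{M+1}=\frac{N-M}{M+1}\binom{N}{M}$ together with the factorisation $(N-M)(M+d)-(N+d-1-M)(M+1)=(d-1)(N-2M-1)$ collapses this to $\frac{d-1}{M+d}$, which equals $\mu_{M+1}^{(M+1)}$ because $\binom{M+d-1}{M+1}\big/\binom{M+d}{M+1}=\frac{d-1}{M+d}$; its degeneracy $\binom{N}{M+1}-\binom{N}{M}$ is exactly the new $k=M+1$ entry. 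The degeneracies are consistent throughout since $\sum_{k=0}^{M}\big(\binom{N}{k}-\binom{N}{k-1}\big)=\binom{N}{M}=\dim G_0^{(M)}$ by telescoping.

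The only real difficulty I anticipate is bookkeeping: keeping track of which of the four $f$-values carries the shifted denominator $\binom{M+d-k}{d-1}$ versus $\binom{M+d-1-k}{d-1}$, and verifying the two telescoping identities that make the induction close. No idea beyond Lemma~\ref{lem:evalues1} is required; for $2M>N$ the claimed list should be read with the understanding that entries with $k>\lceil N/2\rceil$ do not appear, consistent with Lemma~\ref{lem:evalues1} producing no new eigenvalues in that range.
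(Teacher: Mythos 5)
Your proposal is correct and is essentially the paper's own proof: the paper simply notes that with $f_k^{(M)}=1/\binom{M+d-1-k}{d-1}$ the conditions of Lemma~\ref{lem:evalues1} hold with $\tilde\lambda=\frac{N+d-1-M}{M+d}$, and you have filled in exactly the binomial verifications (the $k$-independence of $\tilde\lambda$, the propagation $\mu_k^{(M)}\tilde\lambda=\mu_k^{(M+1)}$, and the new-eigenvalue value $\frac{d-1}{M+d}$) that the paper leaves implicit. The only quibble is your closing aside: when $2M>N$ the list should terminate at $k\leq N-M$ (the spin sectors present in the weight-$M$ subspace), not at $k\leq\lceil N/2\rceil$ — but this edge case is glossed over in the paper's statement as well.
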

\begin{proof}{
With $f^{(M)}_k=\frac{1}{\binom{M+d-1-k}{d-1}}$, it turns out that $\tilde\lambda=\frac{N-M-1+d}{M+d}$, and the conditions of Lemma \ref{lem:evalues1} are satisfied.
}\end{proof}

\begin{corollary} \label{cor:junk}
The inverse of $G_{\underline 0}^{(M)}$ is given by
$$
G_{\underline 0}^{-1}=\frac{(d+M-1)(N+d-M-1)}{(d-1)(d+N-1)}\sum_{x,z}\frac{(-1)^{M+x\cdot z}\ket{x}\bra{z}}{\binom{d+N-2}{M-x\cdot z}}.
$$
\end{corollary}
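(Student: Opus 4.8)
The plan is to diagonalise both matrices simultaneously and check that their eigenvalues are reciprocal on every common eigenspace. Each of $G_0^{(M)}$ and the candidate has matrix elements depending only on $x\cdot z$, so both commute with the permutation action of $\mathcal S_N$ on the weight-$M$ strings; equivalently both lie in the (commutative, $(M+1)$-dimensional) Bose--Mesner algebra of the Johnson scheme $J(N,M)$ -- this is exactly the $J^2,J_Z$ block structure exploited in the proof of Lemma \ref{lem:evalues1}. They therefore share the eigenspace decomposition $\bigoplus_{k=0}^{M}V_k$ with $\dim V_k=\binom Nk-\binom N{k-1}$, precisely the degeneracies in Corollary \ref{cor:evalues}, and $G_0^{(M)}$ is invertible because that corollary shows all its eigenvalues are positive. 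Since we are proposing an inverse, all multiplicity bookkeeping is automatic: it is enough to pin down, for $k=0,\dots,M$, the single scalar eigenvalue of the candidate on $V_k$ and check it equals $1/\lambda_k$, where $\lambda_k=\binom{d-2+k}{k}\binom{N+d-1}{M}\big/\!\bigl(\binom{M+d-1}{M}\binom{N+d-1}{k}\bigr)$.

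Write the candidate as $C\sum_{t=0}^{M}(-1)^t A_t/\binom{d+N-2}{t}$, where $A_t$ is the $t$-th adjacency matrix of $J(N,M)$ (entry $1$ when $x\cdot z=M-t$) and $C=(d+M-1)(N+d-M-1)/\bigl((d-1)(d+N-1)\bigr)$; note that $(-1)^{M+x\cdot z}=(-1)^t$ and $\binom{d+N-2}{M-x\cdot z}=\binom{d+N-2}{t}$ on that block, so this rewriting is exact. The eigenvalue of $A_t$ on $V_k$ is the Eberlein (dual Hahn) polynomial $p_t(k)$, with $p_t(0)=\binom Mt\binom{N-M}{t}$, so the candidate has eigenvalue $\mu_k=C\sum_{t=0}^{M}(-1)^t p_t(k)/\binom{d+N-2}{t}$ on $V_k$, and the whole claim reduces to the finitely many scalar identities $\lambda_k\mu_k=1$. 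I expect the closed-form evaluation of $\sum_{t}(-1)^t p_t(k)/\binom{d+N-2}{t}$ to be the main obstacle: expanding $p_t(k)$ as a terminating hypergeometric term and interchanging the two summations turns $\mu_k/C$ into a terminating ${}_3F_2$ in $k$, which I would hope to close by Chu--Vandermonde or a Pfaff--Saalsch\"utz type identity (or, equivalently, recognise as a dual-Hahn orthogonality sum against the weight $1/\binom{d+N-2}{t}$). Once that closed form is in hand, cancelling it against $\lambda_k$ is routine binomial algebra, and the extreme cases $k=0$ (all-ones eigenvector, $\lambda_0$ the row sum of $G_0^{(M)}$) and $k=M$ furnish cheap cross-checks.

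A more pedestrian alternative sidesteps the spectral theory: verify $G_0^{(M)}\cdot(\text{candidate})=\identity$ directly by computing its $(x,z)$ entry, $C\sum_{y:\,w_y=M}(-1)^{M+y\cdot z}\big/\!\bigl(\binom{M+d-1-x\cdot y}{d-1}\binom{d+N-2}{M-y\cdot z}\bigr)$, after classifying each $M$-string $y$ by the sizes of its overlaps with $x\cap z$, $x\setminus z$, $z\setminus x$ and $\overline{x\cup z}$. This yields a fourfold binomial sum that must collapse to $\delta_{x,z}$; the sums over the ``symmetric'' overlaps telescope by Vandermonde, leaving essentially the same terminating ${}_3F_2$ to dispatch. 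Either way the analytic heart is one nontrivial hypergeometric evaluation, with Corollary \ref{cor:evalues} and the Johnson-scheme structure reducing everything else to bookkeeping.
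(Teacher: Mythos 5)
Your reduction is sound as far as it goes: both $G_0^{(M)}$ and the candidate have entries depending only on $x\cdot z$, so they lie in the commutative algebra spanned by the Johnson-scheme adjacency matrices, are simultaneously diagonalisable on common eigenspaces $V_k$ of dimension $\binom{N}{k}-\binom{N}{k-1}$, and the corollary is equivalent to the $M+1$ scalar identities $\lambda_k\mu_k=1$ with $\lambda_k$ taken from Corollary \ref{cor:evalues}. The genuine gap is that you never establish those identities. The evaluation of $\mu_k\propto\sum_{t=0}^{M}(-1)^t p_t(k)\big/\binom{d+N-2}{t}$ is the analytic heart of the statement, and your proposal only records the hope that it will follow from Chu--Vandermonde or a Pfaff--Saalsch\"utz-type identity (or, in the ``pedestrian'' variant, that a fourfold binomial sum collapses to $\delta_{x,z}$): no terminating ${}_3F_2$ is actually written down and summed, and even the advertised cross-checks at $k=0$ and $k=M$ are not carried out. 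As written, the argument stops exactly where the work begins, so it does not yet constitute a proof.

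For comparison, the paper sidesteps any hypergeometric summation. It notes that the candidate is itself of the form treated by Lemma \ref{lem:evalues1}, with $f^{(M)}_k\propto(-1)^{M-k}\big/\binom{d+N-2}{M-k}$, and verifies that the ratio $\tilde\lambda$ defined there is independent of $k$ and is the reciprocal of the scale factor found for $G_0^{(M)}$ in the proof of Corollary \ref{cor:evalues}. The recursion of Lemma \ref{lem:evalues1} then builds the same eigenvectors for both matrices with eigenvalues that are termwise reciprocals, so the product is the identity; the $k$-independence check amounts to a short manipulation of binomial ratios (writing $\binom{d+N-2}{t+1}=\binom{d+N-2}{t}\tfrac{d+N-2-t}{t+1}$ and cancelling), which is precisely the step your route replaces by an unevaluated sum. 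To complete your proposal you must either prove the Eberlein/dual-Hahn summation explicitly for all $k$, or fall back on the recursive verification the paper uses.
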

\begin{proof}{
Again, the conditions of Lemma \ref{lem:evalues1} hold, now with $\tilde\lambda=\frac{M+d-1}{N+d-2-M}$, the inverse of the scale factor for $G_{\underline 0}^{(M)}$.
}\end{proof}


\bibliography{../../../References}

\end{document}